\numberwithin{equation}{section}
\DeclareMathOperator{\ES}{ES}
\DeclareMathOperator{\VaR}{VaR}
\DeclareMathOperator{\Var}{Var}
\DeclareMathOperator*{\argmin}{arg\,min}
\newcommand{\bq}{\hat{\boldsymbol{q}}_t}
\newcommand{\be}{\hat{\boldsymbol{e}}_t}
\newcommand{\tod}{\overset{d}{\longrightarrow}}
\newcommand{\toP}{\overset{P}{\longrightarrow}}
\newcommand{\sumt}{\sum_{t \in \mathfrak{T}}}
\newcommand{\sumtj}{\sum_{t \in \mathfrak{T}_j}}
\newtheoremstyle{boldtitle}
{}
{}
{}
{}
{\bfseries}
{.}
{.5em}
{{\thmname{#1 }}{\thmnumber{#2}}{\thmnote{ (#3)}}}
\theoremstyle{boldtitle}
\newtheorem{theorem}{Theorem}
\newtheorem{lemma}{Lemma}
\newtheorem{definition}{Definition}
\newtheorem{assumption}{Assumption}
\definecolor{TD}{rgb}{0,0.5,0}
\definecolor{JS}{rgb}{0.8,0.3,0.2}
\begin{document}

	\def\spacingset#1{\renewcommand{\baselinestretch}%
		{#1}\small\normalsize} \spacingset{1}

%

	\title{\bf Encompassing Tests for Value at Risk and Expected Shortfall Multi-Step Forecasts based on Inference on the Boundary}
	
	\author{Timo Dimitriadis\thanks{Heidelberg Institute for Theoretical Studies (HITS), Heidelberg, 69118 Heidelberg and
			University of Hohenheim, Institute of Economics, 70599 Stuttgart, Germany, e-mail: \href{mailto:timo.dimitriadis@h-its.org}{timo.dimitriadis@h-its.org}}
		\and Xiaochun Liu\thanks{Department of Economics, Finance and Legal Studies, Culverhouse College of Business, University of Alabama, Tuscaloosa Alabama 35487 USA. e-mail: \href{mailto:xliu121@ua.edu}{xliu121@ua.edu}}
		\and Julie Schnaitmann\thanks{University of Konstanz, Department of Economics, 78457 Konstanz, Germany, e-mail:  \href{mailto:julie.schnaitmann@uni-konstanz.de}{julie.schnaitmann@uni-konstanz.de}}
	}

%
%
%
%
%
%
%

	\maketitle
	
	\bigskip
	\begin{abstract}
		We propose forecast encompassing tests for the Expected Shortfall (ES) jointly with the Value at Risk (VaR) based on flexible link (or combination) functions.
		Our setup allows testing encompassing for convex forecast combinations and for link functions which preclude crossings of the combined VaR and ES forecasts.
		As the tests based on these link functions involve parameters which are on the boundary of the parameter space under the null hypothesis, we derive and base our tests on nonstandard asymptotic theory on the boundary.
		Our simulation study shows that the encompassing tests based on our new link functions outperform tests based on unrestricted linear link functions for one-step and multi-step forecasts.
    	We further illustrate the potential of the proposed tests in a real data analysis for forecasting VaR and ES of the S\&P\,500 index. 
	\end{abstract}
	
	\noindent%
	{\it Keywords}: asymptotic theory on the boundary, joint elicitability, multi-step ahead and aggregate forecasts, forecast evaluation and combinations 
	\\[0.5em]
	{\it JEL}: C12, C52, C58
	
	
	\onehalfspacing

\clearpage		
	
\section{Introduction}
\label{sec:Introduction}

For nearly two decades, financial institutions and regulators have advocated Value at Risk (VaR) as the main tool for risk management and capital allocation. 
Owing to a number of weaknesses, including the failure of capturing (extreme) tail risks and hence discouraging risk diversification \citep{Artzner1999, Acerbi2002, Tasche2002}, the Basel Committee on Banking Supervision (BCBS) has recently adopted Expected Shortfall (ES), complementing and in parts substituting VaR as the fundamental measure for market risk \citep{Basel2013, Basel2016, Basel2017, Basel2019}.

The ES at level $\alpha \in (0,1)$ is defined as the expected return beyond the $\alpha$-quantile and it is widely used as a coherent measure of tail risks \citep{Artzner1999, Tasche2002}. 
Nonetheless, its inherent deficiency is that the ES is not elicitable on its own, meaning that the ES cannot be obtained as the unique minimizer of the expectation of a loss (scoring) function, see e.g., \cite{Gneiting2011}.
However, \cite{Fissler2016} show that the VaR and ES are jointly elicitable (or 2-elicitable). 
This joint elicitability property directly hints towards evaluating the ES jointly with the VaR in a unified framework \citep{FisslerZiegelGneiting2016}, as in the present study concerning forecast encompassing tests.

Forecast encompassing of two competing forecasts tests whether one forecast alone performs not worse than any forecast combination, stemming from some parametric combination formula, also denoted by \textit{link functions} in this article.
If this holds, the rival forecast contains no additionally useful information relative to the first forecast \citep{HendryRichard1982, MizonRichard1986}.
This makes forecast encompassing tests an attractive tool for the empirical comparison of competing forecasts, especially when focusing on  efficiency gains stemming from forecast combinations.\footnote{
For recent empirical applications of forecast encompassing tests, see e.g., \cite{Taylor2005, Busetti2013, FuertesOlmo2013, Costantini2017, Liu2017, Zhao2017, Tsiotas2018, Clements2020, YouLiu2020} among others.}
As meaningful measures of forecast performance are based on strictly consistent loss functions \citep{Gneiting2011}, this forcefully illustrates the importance of the existence of such loss functions for testing forecast encompassing.
Hence, we build our encompassing tests on joint loss functions for the VaR and ES \citep{Fissler2016}, and on recently developed joint semiparametric VaR and ES models \citep{Patton2019, DimiBayer2019, Taylor2019, Barendse2020}.		

As the main methodological contribution of this paper, we introduce encompassing tests for the ES jointly with the VaR based on flexible \textit{link functions} or \textit{combination formulas}, which allow for several important specifications that go beyond those of existing encompassing tests of e.g.\ \cite{GiacominiKomunjer2005} and \cite{DimiSchnaitmann2020}.
While linear forecast combination methods with unrestricted parameters are the most prominent class of link functions used for encompassing tests,\footnote{see e.g., \cite{HendryRichard1982,MizonRichard1986, Diebold1989, GiacominiKomunjer2005, ClementsHarvey2009, ClementsHarvey2010, DimiSchnaitmann2020}.} more flexible approaches are especially important for joint tests of the VaR and ES:
First, unrestricted linear link functions regularly result in \textit{VaR and ES crossings}, i.e.\ days where the optimally combined ES forecast is larger than the VaR forecast, which immediately contradicts their definitions \citep{Taylor2020}.
To this end, we propose the \textit{no-crossing} link functions which impede such crossings.
Second, \textit{convex} forecast combinations present an attractive alternative as their structure can stabilize the forecast performance and reduce the estimation noise \citep{Timmermann2006, Hansen2008, Bayer2018}, which is particularly important for the case of semiparametric models for the VaR and ES with extreme probability levels \citep{DimiBayer2019}.

The link function specifications considered in this article imply that certain model parameters are on the boundary of the parameter space under the null hypothesis.
This boundary issue is exemplified by encompassing tests for convex forecast combinations, which entails testing whether the convex combination parameter is one (or zero).
Under the null hypothesis, this parameter lies on the boundary of the admissible parameter space, i.e., the unit interval. 
Hence,  we derive novel and nonstandard asymptotic theory for the model parameters and the resulting Wald test statistics for semiparametric models for the VaR and ES which allows some (or all) of the true model parameters to be on the boundary of the parameter space.
For this, we follow the approach of \cite{Andrews1999} and \cite{Andrews2001}, where the proofs use empirical process methods of \cite{Andrews1994} and \cite{Doukhan1995}.
To render our tests practically feasible, we draw critical values from the resulting nonstandard asymptotic distributions of the Wald test statistics obtained from simulations involving the solution of quadratic programming problems.

The proposed encompassing tests allow for testing one-step ahead, multi-step ahead and multi-step aggregate forecasts, where the consideration of multi-step forecasts requires the application of a VaR and ES specific adaption of the HAC (Heteroskedasticity and Autocorrelation Consistent) estimator of \cite{NeweyWest1987} and \cite{Andrews1991}.
The examination of multi-step (aggregate) forecasts is particularly relevant for the risk measures VaR and ES due to the explicit calls for 10-day aggregate VaR and ES forecasts of the \cite{Basel2016, Basel2017, Basel2019, Basel2020}.
Furthermore, this goes beyond many recent papers concerning forecast evaluation procedures for the VaR and ES, which mainly focus on one-step ahead forecasts.\footnote{see e.g.\ \cite{Kratz2017, CostanzinoCurran2018, BayerDimi2019, CouperierLeymarie2019, Patton2019, DimiSchnaitmann2020}.}

Our simulations show that the encompassing tests for the VaR and ES based on our new link functions and on inference on the boundary exhibit accurate empirical sizes and good power properties.
In particular, we find that these test specifications outperform classical tests based on unrestricted linear link functions for the VaR and ES of \cite{DimiSchnaitmann2020} throughout all considered simulation designs and for all, one-step ahead, multi-step ahead, and multi-step aggregate forecasts.
However, we find that long forecast horizons (e.g., $10$ days) paired with short evaluation periods of less than $1000$ trading days result in unreliable test decisions.
This simulation result sheds a critical light on the recent evaluation methods based on relatively short evaluation periods proposed by the \cite{Basel2016, Basel2017, Basel2019, Basel2020}.

We empirically illustrate the usefulness of the VaR and ES encompassing tests based on convex link functions by comparing one-day ahead, and 10-day ahead and aggregate VaR and ES forecasts for daily S\&P\,500 index returns.
We estimate eleven risk models in a rolling forecast scheme, including several GARCH specifications and the ES-specific semiparametric models of \cite{Taylor2019} and \cite{Patton2019}.
For the evaluation period from July 2008 to June 2020, we find that the new tests assign much higher optimal weights to models specified with asymmetric volatilities, asymmetric and fat-tailed residual distributions and dynamic higher moments, especially for multi-step forecasts.

Our paper is closely related to the recent work of \cite{DimiSchnaitmann2020}, but differs in the following ways.
First, our proposed encompassing tests extend the ones of \cite{DimiSchnaitmann2020} with more flexible link functions by allowing the true parameters under the null to be on the boundary of the parameter space.
Second, while the theoretical contribution of \cite{DimiSchnaitmann2020} focuses on the inclusion of misspecified models for encompassing tests, we establish inference on the boundary of the parameter space.
Third, while our considered link functions allow for theoretically appealing specifications, they also exhibit clearly superior empirical properties in a wide range of simulations.
Fourth, following the recent regulations of the \citeauthor{Basel2019} (\citeyear{Basel2019}, \citeyear{Basel2020}), we consider multi-step ahead and aggregate VaR and ES forecasts in the simulations and the empirical application of this article.

The remainder of the paper is organized as follows. 
Section \ref{sec:Theory} proposes the joint encompassing tests based on flexible link functions and develops asymptotic theory for the joint VaR and ES models for parameters on the boundary of the parameter space.
Section \ref{sec:Simulations} presents simulations for the encompassing tests and 
Section \ref{sec:Application} applies the tests to VaR and ES forecasts for the S\&P\,500 index. 
Section \ref{sec:Conclusion} concludes.
The proofs are given in Appendix \ref{sec:Proofs}, and a supplementary material document contains additional material for the paper, where references starting with \textcolor{blue}{S.} refer to the supplement.

\section{Theory}
\label{sec:Theory}
	
\subsection{Setup and Notation}
	
	We follow the general setup of \cite{GiacominiKomunjer2005} and \cite{DimiSchnaitmann2020} while further allowing for multi-step forecasts.
	For this, we consider a stationary stochastic process $Z = \left\{ Z_t: \Omega \to \mathbb{R}^{\tilde l+1}, \, t= 1,\dots,R, \,\tilde l \in \mathbb{N}, \,R\in \mathbb{N} \right\}$, which is defined on some common and complete probability space $(\Omega, \mathcal{F}, \mathbb{P})$, where $\mathcal{F} = \left\{ \mathcal{F}_t, t = 1, \dots, R \right\}$ and $\mathcal{F}_t = \sigma \left\{ Z_s, s\le t \right\}$.
	We partition the stochastic process as $Z_t = (Y_t,X_t)$, where $Y_t: \Omega \to \mathbb{R}$ is an absolutely continuous random variable of interest and $X_t: \Omega \to \mathbb{R}^{\tilde l}$ is a vector of explanatory variables.
	For some fixed forecast horizon $h \in \mathbb{N}$, we denote the conditional distribution of $Y_{t+h}$ given the information set $\mathcal{F}_t$ by $F_t$.
	Accordingly, $\mathbb{E}_t$, $\Var_t$ and $h_t$ denote the expectation, variance and density corresponding to $F_t$.
	The conditional VaR of $Y_{t+h}$ given $\mathcal{F}_t$ at probability level $\alpha \in (0,1)$ is formally defined as
	\begin{align}
		\VaR_{t,\alpha} ( Y_{t+h} ) = F_t^{-1}(\alpha) = \inf \{ z \in \mathbb{R}: F_t(z) \ge \alpha \},
	\end{align}
	and given that $F_t$ is continuous at its $\alpha$-quantile, the conditional ES of $Y_{t+h}$ given $\mathcal{F}_t$ at level $\alpha \in (0,1)$ is defined by
	\begin{align}
		\ES_{t,\alpha} ( Y_{t+h} ) = \mathbb{E}_t \big[ Y_{t+h}  \mid Y_{t+h} \le \VaR_{t,\alpha}(Y_{t+h}) \big].
	\end{align}

	In order to allow for forecast evaluation of multi-step (ahead and aggregate) forecasts with horizon $h \in \mathbb{N}$ in an out-of-sample fashion, we split $R = S+T+h-1$,
	where $S \in \mathbb{N}$ denotes the length of the in-sample and $T \in \mathbb{N}$ of the out-of sample window.
	In detail, for all $t \in \mathbb{N}$, such that $S \le t \le S+T-1$, we generate $h$-step ahead VaR and ES forecasts for the random variables $Y_{t+h}$ (i.e.\ for the sequence $(Y_{S+h}, \dots, Y_{S+T+h-1})$) based on the previous $S$ data points.
	For convenience of the notation, we define the set $\mathfrak{T} := \{ t \in \mathbb{N}: S \le t \le S+T-1 \}$ corresponding to the time points the forecasts are issued for the out-of-sample period.
	
	We further denote the competing, $\mathcal{F}_t$-measurable, $h$-step forecasts for the VaR and ES by $\hat q_{j,t}$ and $\hat e_{j,t}$, for $j=1,2$.
	Following \cite{GiacominiKomunjer2005}, we assume that these are generated through a function $f \big( \gamma_{t,S}, Z_t, Z_{t-1},\dots \big)$, which is fixed over time.
	For this, $\gamma_{t,S}$ denotes the (estimated or fixed) model parameters at time $t$, or alternatively the semi- or non-parametric estimator used in the construction of the forecasts, (possibly) estimated by data from the in-sample period of length $S$.
	This construction allows for forecasting schemes with fixed (or no) parameters, forecasting schemes with model parameters $\gamma_{t,S}$ that are estimated only once, and rolling window forecasting schemes where the parameters $\gamma_{t,S}$ are re-estimated in each step \citep{GiacominiKomunjer2005}.
	In our testing approach, we focus on evaluation of the entire \textit{forecasting method}  as e.g.\ in \cite{GiacominiKomunjer2005} and \cite{GiacominiWhite2006}, instead of on a \textit{forecasting model}, as e.g.\ in \cite{West1996, West2001}.
	The stacked forecasts are denoted by $\bq = (\hat q_{1,t}, \hat q_{2,t})$ for the VaR, and by $\be = (\hat e_{1,t}, \hat e_{2,t})$ for the ES.\footnote{A generalization of our framework to test encompassing for \textit{multiple} competing forecasts ($K\ge2$) in the sense of \cite{HarveyNewbold2000} is readily available by generalizing the notation as $\bq = (\hat q_{1,t}, \dots, \hat q_{K,t})$ and $\be = (\hat e_{1,t}, \dots, \hat e_{K,t})$ and by further using suitable specifications for the link functions and the null hypotheses in the subsequent derivations.}
	In our notation of the forecasts, we stress the dependence on $t$, the time-point they are issued, while suppressing the dependence on the forecast horizon $h$ as it is treated as fixed.

	Let $r_t$ denote financial log-returns for day $t$. 
	Then, our theoretical setup allows for the treatment of classical multi-step ($h$-step) ahead forecasts, but also for \textit{$h$-step aggregate forecasts} in the sense of an aggregated return over $h$ days, such as the 10-day aggregate VaR and ES forecasts explicitly stated in the regulatory framework of the \cite{Basel2019, Basel2020}.
	For  classical \textit{$h$-step ahead forecasts}, we use $Y_{t+h} = r_{t+h}$, while for \textit{$h$-step aggregate forecasts} we choose $Y_{t+h} = \sum_{s=1}^{h} r_{t+s}$.

	In the following exposition, all vectors refer to column vectors.
	For splitting of subvectors, we often abuse notation and write $\theta = (\theta_1, \theta_2)$ instead of $\theta = (\theta_1^\top, \theta_2^\top)^\top$.
	The operator $\nabla$ denotes the derivative with respect to $\theta$.
	All limits below are taken ``as $T \to \infty$'' unless stated otherwise and $\toP$ and $\tod$ denote convergence in probability and distribution respectively.
	Let $:=$ denote an equality ``by definition''.
	Furthermore, let $\mathbb{R}_+$ and $\mathbb{R}_-$ denote the non-negative and non-positive real half-lines respectively and we define $\mathbb{R}_C = \{ z \in \mathbb{R}: |z| \le C \}$ to be a sufficiently large compact subset of the real numbers (for some $C \in \mathbb{R}_+$ large enough).

	\subsection{Joint Encompassing Tests for VaR and ES Forecasts}
	
	For the introduction of the joint encompassing tests for VaR and ES forecasts, we follow \cite{DimiSchnaitmann2020} and define the flexible link (or combination) functions
	\begin{alignat}{3}
		&g^q: \mathfrak{Q} \times \mathfrak{E}  \times \Theta \to \mathbb{R}&, \qquad &(\bq, \be, \theta) \mapsto g^q(\bq, \be, \theta), \\
		&g^e: \mathfrak{Q} \times \mathfrak{E}  \times \Theta \to \mathbb{R}&, \qquad &(\bq, \be, \theta) \mapsto g^e(\bq, \be, \theta),
	\end{alignat}
	based on some (compact) parameter space $\Theta \subset \mathbb{R}^k$, where $\mathfrak{Q}$ and $\mathfrak{E}$ denote the random spaces of the VaR and ES forecasts.
	These link functions represent the parametric, functional forms of the forecast combinations we consider.\footnote{The link functions can alternatively be interpreted as (semi-) parametric models for the conditional quantile (VaR) and ES of $F_t$ as in \cite{Patton2019}.} 
	E.g., in the classical case of testing forecast encompassing, these link functions are linear with (essentially) unrestricted parameter spaces.
	For convenience of notation, we henceforth use the short forms
	\begin{align}
		\label{eqn:ParametricModelsEncompassing}
		g_t^q(\theta)  := g^q(\bq, \be, \theta), \qquad \text{ and } \qquad g_t^e(\theta)  := g^e(\bq, \be, \theta).
	\end{align}	
	We further assume that there exists a unique \textit{test parameter value} $\theta_\ast \in \Theta$ such that $g^q(\bq, \be, \theta_\ast) = \hat q_{1,t}$ and $g^e(\bq, \be, \theta_\ast) = \hat e_{1,t}$ almost surely.
	This assumption ensures that the parametric link function allows for the trivial forecast combination of only choosing the first forecast.\footnote{As the encompassing tests in this article are always formulated as forecast (pair) one encompasses forecast (pair) two, we only assume the existence of $\theta_\ast$ corresponding to the first (pair) of forecasts. 
		Testing the inverted encompassing hypothesis that the second pair of forecasts encompasses the first forecast pair can be carried out by interchanging the forecast pairs.
		Alternatively, one could assume that a value $\tilde \theta_\ast$ exists such that $g^q(\bq, \be,  \tilde \theta_\ast) = \hat q_{2,t}$ and $g^e(\bq, \be, \tilde \theta_\ast) = \hat e_{2,t}$ holds almost surely.}
	In the classical case of unrestricted linear link functions, $\theta_\ast$ often corresponds to $(1,0)$ or $(0,1,0)$, depending on whether an intercept is included in the model.
	We refer to Section \ref{sec:LinkSpecifications} for details and examples of these link functions.

	\cite{Gneiting2011} shows that the ES stand-alone is not elicitable, i.e.\ there do not exist suitable (strictly consistent) loss functions, which are a central ingredient for encompassing tests \citep{GiacominiKomunjer2005, DimiSchnaitmann2020}.
	\cite{Fissler2016} overcome this deficiency and show that there exist joint loss functions for the VaR and ES and further characterize this class subject to mild regularity conditions by
	\begin{align} 
		\begin{aligned}
		\label{eqn:JointLossESRegGeneral}
		\rho \big( Y, q, e \big) &=  \big( \mathds{1}_{\{Y \le q\}} - \alpha \big) \mathfrak{g}(q) - \mathds{1}_{\{Y \le q \}}  \mathfrak{g}(Y) \\
		&\quad+ \phi'(e) \left( e -  q + \frac{( q - Y) \mathds{1}_{\{ Y \le  q \}}}{\alpha}  \right) - \phi( e ) + a(Y),
		\end{aligned}
	\end{align}
	where the function $\mathfrak{g}$ is twice continuously differentiable and increasing, $\phi$  is three times continuously differentiable, strictly increasing and strictly convex, and $a$ and $\mathfrak{g}$ are $Y_{t+h}$-integrable functions.
	The most prominent candidate of this class is the zero-homogeneous loss function \citep{NoldeZiegel2017AAS}, sometimes called the FZ0 loss, 
	\begin{align} 
		\begin{aligned}
		\label{eqn:JointLossESReg0Hom}
		\rho^{\text{FZ0}} \big( Y, q, e \big) =
		- \frac{1}{e } \left( e - q + \frac{(q  - Y) \mathds{1}_{\{Y \le q \}}}{\alpha}  \right) + \log(-e),
		\end{aligned}
	\end{align}	
	which is obtained by choosing $\mathfrak{g}(z) = 0$, $a(z) = 0$ and $\phi(z) = -\log(-z)$ in (\ref{eqn:JointLossESRegGeneral}).
	We henceforth often use the short notations $\rho_t(\theta) := \rho \big( Y_{t+h}, g^q_t(\theta) , g^e_t(\theta)  \big)$ and $\rho^{\text{FZ0}}_t(\theta) := \rho^{\text{FZ0}} \big( Y_{t+h}, g^q_t(\theta) , g^e_t(\theta)  \big)$.
	
	Using the general class of loss functions in (\ref{eqn:JointLossESRegGeneral}), we define the \textit{true} regression (or combination) parameter $\theta^0 \in \Theta$ by
	\begin{align}
		\label{eqn:DefTrueParameter}
		\theta^0 :=  \underset{\theta \in \Theta}{\argmin} \, \mathbb{E} \left[ \rho \big( Y_{t+h}, g^q_t(\theta) , g^e_t(\theta)  \big) \right],
	\end{align}
	which is independent of $t$ as we assume stationarity of the process $Z$.\footnote{See e.g.\ \cite{Patton2019}, \cite{DimiBayer2019}, \cite{BayerDimi2019}, \cite{DimiSchnaitmann2020} and \cite{Barendse2020} for details on joint (semi-) parametric models for the VaR and ES.}
	The strict consistency result of the loss function from \cite{Fissler2016} together with further weak regularity conditions on the link functions implies that
	\begin{align}
		Q(Y_{t+h} \mid \mathcal{F}_t) = g_t^q(\theta^0)  \qquad \text{ and } \qquad \ES(Y_{t+h} \mid \mathcal{F}_t) = g_t^e(\theta^0)
	\end{align}
	almost surely, which justifies the notion of the \textit{true regression parameter}.
	
	We now define joint forecast encompassing for the VaR and ES following \cite{GiacominiKomunjer2005} and \cite{DimiSchnaitmann2020}.
	\begin{definition}[Joint VaR and ES Forecast Encompassing]
		\label{def:QuantileESEncompassing}
		We say that the pair $\big( \hat q_{1,t}, \hat e_{1,t}\big)$ jointly encompasses $\big( \hat q_{2,t}, \hat e_{2,t}\big)$  at time $t$ with respect to the link functions $g^q$ and $g^e$ if and only if
		\begin{align}
			\label{eqn:UncondEncompVaRES}
			\mathbb{E} \left[ \rho \big( Y_{t+h}, \hat q_{1,t}, \hat e_{1,t} \big) \right]
			= \mathbb{E} \left[ \rho  \big( Y_{t+1}, g^q(\bq, \be, \theta^0), g^e(\bq, \be, \theta^0) \big) \right],
		\end{align}
		where the loss function $\rho$ is given in (\ref{eqn:JointLossESRegGeneral}).
	\end{definition}

	This holds if and only if $\theta^0 = \theta_\ast$ as we impose uniqueness of the parameter $\theta_\ast$.
	The intuition behind the specification in (\ref{eqn:UncondEncompVaRES}) is that the forecasts $(\hat q_{1,t}, \hat e_{1,t})$ generate the same expected loss as an optimal forecast combination $\big( g^q(\bq, \theta^0), g^e(\be, \theta^0) \big)$ based on the \textit{optimal} combination parameter defined in (\ref{eqn:DefTrueParameter}).
	Hence, using the first pair of forecasts $(\hat q_{1,t}, \hat e_{1,t})$ is the optimal, but trivial forecast combination.
	From a different point of view, this implies that the second pair of forecasts $(\hat q_{2,t}, \hat e_{2,t})$ does not add any useful information which is not already contained in $(\hat q_{1,t}, \hat e_{1,t})$.

	If the interest is mainly placed on the performance of the competing ES forecasts, one can consider the \textit{auxiliary} ES encompassing test in the spirit of \cite{DimiSchnaitmann2020}.\footnote{Application of the \textit{strict} encompassing test of \cite{DimiSchnaitmann2020} in the setting of the present article further requires combining the asymptotic theory under misspecification of \cite{DimiSchnaitmann2020} with the theory of estimation and testing at the boundary of the present article.}
	\begin{definition}[Auxiliary  ES Forecast Encompassing]
		\label{def:AuxESEncompassing}
		We say that the forecast $\hat e_{1,t}$ auxiliarily encompasses its rival $\hat e_{2,t}$ at time $t$ with respect to the link functions $g^q$ and $g^e$ if and only if
		\begin{align}
			\label{eqn:UncondEncompAuixES}
			\mathbb{E} \left[ \rho \big( Y_{t+h}, g^q(\bq, \be, \theta^0), \hat e_{1,t} \big) \right]
			= \mathbb{E} \left[ \rho  \big( Y_{t+1}, g^q(\bq, \be, \theta^0), g^e(\bq, \be, \theta^0) \big) \right],
		\end{align}
		where the loss function $\rho$ is given in (\ref{eqn:JointLossESRegGeneral}).
	\end{definition}
	Finding testable conditions for the auxiliary test, corresponding to the condition $\theta^0 = \theta_\ast$ for the joint test, has to be done on a case-by-case basis for the link functions under consideration, see Section \ref{sec:LinkSpecifications} for further details.
	
	Given a sample of competing forecasts and corresponding realizations, we can test whether the sequence of joint VaR and ES forecasts $( \hat q_{1,t}, \hat e_{1,t} )$ encompasses the sequence $( \hat q_{2,t}, \hat e_{2,t} )$ for all $t \in \mathfrak{T}$ (in the out-of-sample period) by estimating the parameters of the semiparametric models
	\begin{align}
		\begin{aligned}
			\label{eqn:JointRegressionJointESEncTest}
			Y_{t+h} = g_t^q(\theta) + u_{t+h}^q, \qquad \text{ and } \qquad
			Y_{t+h} = g_t^e(\theta) + u_{t+h}^e,
		\end{aligned}
	\end{align}
	where $Q_\alpha(u_{t+h}^q \mid \mathcal{F}_t) = 0$ and $\ES_\alpha(u_{t+h}^e \mid \mathcal{F}_t) = 0$ almost surely  for all $t \in \mathfrak{T}$ by using the M-estimator introduced in \cite{Patton2019} and \cite{DimiBayer2019}, and by testing whether $\theta_\ast = \theta^0$ using a Wald test.

	Differently from \cite{DimiSchnaitmann2020} and the remaining literature on testing forecast encompassing, we allow the true, optimal parameter $\theta^0$ to be on the \textit{boundary} of $\Theta$ under the null hypothesis.
	This facilitates the consideration of several important link function specifications.
	E.g., this enables to test encompassing for link specifications which \textit{theoretically prevent crossings} of the combined VaR and ES forecasts in the sense that $g_t^e(\theta) \le g_t^q(\theta)$ almost surely for all $t \in \mathfrak{T}$ \citep{Taylor2020}.
	Furthermore, we can test forecast encompassing based on \textit{convex} forecast combinations, which stabilizes the parameter estimation.
	While the subsequent section focuses mainly on these two examples, our approach is by no means limited to these link functions.

	\subsection{The Link Function Specifications}
	\label{sec:LinkSpecifications}

	In this section, we introduce three link function specifications which are of interest for this article, where other link functions can be treated along the lines of this section by employing an equivalent split of the parameter vector and by formulating the null hypotheses accordingly.
	The treatment of asymptotic theory on the boundary in the sense of \cite{Andrews2001}, detailed in Section \ref{sec:AsymptoticTheory} of the present article, requires splitting the parameter vector $\theta$ into the following structurally different subvectors,
	\begin{align}
	\label{eqn:PartitioningThetaParameter}
	\theta = \big( \beta_1, \beta_2, \delta, \psi),
	\end{align}
	where $\beta_1 \in \mathcal{B}_1 \subseteq \mathbb{R}^{p_1}$, $\beta_2 \in \mathcal{B}_2 \subseteq \mathbb{R}^{p_2}$, $\delta \in \Delta \subseteq \mathbb{R}^{q}$ and $\psi \in \Psi \subseteq \mathbb{R}^{s}$, where $p_1 + p_2 + q + s = k$, $p := p_1 + p_2$ and $\Theta = \mathcal{B}_1 \times \mathcal{B}_2 \times \Delta \times \Psi$.
	The intuition behind this decomposition is the following:
	(1)  the null hypothesis we test for is based on $\beta_1$ only, and $\beta_1$ may or may not be on the boundary of the parameter space;
	(2) $\beta_2$ may or may not be on the boundary, but it is not tested for;
	(3) $\delta$ is not on the boundary, and it is not tested for;
	(4) $\psi$ is not tested for, it may or may not be on the boundary, and the off-diagonal elements of the matrix $\mathcal{T}$, defined later in (\ref{eqn:T_Matrix}), corresponding to interactions of $\psi$ and $(\beta_1, \beta_2, \delta)$ are zero.
	
	Most importantly, the null hypothesis is based on $\beta_1$ only, while the remaining parameters can be thought of as nuisance parameters, required for the estimation of the model.
	The distinction between $\psi$ and the remaining parameter subvectors (in particular $\beta_2$) is that the imposed nullity of certain off-diagonal elements of $\mathcal{T}$ implies that the asymptotic distribution of $\beta_1$ is not affected by whether $\psi$ is on the boundary or not.
	
	Using the subvector decomposition in (\ref{eqn:PartitioningThetaParameter}), we can formally introduce the link functions and the corresponding null hypotheses of interest for the \textit{joint} and \textit{auxiliary} encompassing tests.
	The subsequent orderings of the parameters $\theta$ follows the ordering in the decomposition in (\ref{eqn:PartitioningThetaParameter}).
	All following encompassing null hypotheses are formulated for the test that the forecast pair $(\hat q_{1,t}, \hat e_{1,t})$ encompasses $(\hat q_{2,t}, \hat e_{2,t})$, whereas the reverse tests can be defined by simply interchanging the forecast pairs.
	\begin{enumerate}[label=(\arabic*)]
		\item 
		\textbf{(Unrestricted) Linear:}
		The unrestricted linear link functions are given by
		\begin{align}
		g_t^q(\theta) &= \theta_6 + \theta_3 \hat q_{1,t} + \theta_4 \hat q_{2,t}, \qquad \text{ and } \\
		g_t^e(\theta)  &= \theta_5 + \theta_1 \hat e_{1,t} + \theta_2 \hat e_{2,t},
		\end{align}
		where the parameter space $\Theta := \mathbb{R}_C^6$ is essentially unrestricted, as the constant $C$ can be chosen sufficiently large.
		We henceforth denote these link functions as \textit{linear}.
		We then test (a) $\mathbb{H}_0^{\text{Joint}}: (\theta_1, \theta_2, \theta_3, \theta_4) = (1,0,1,0)$, and (b) $\mathbb{H}_0^{\text{Aux}}: (\theta_1, \theta_2) = (1,0)$.\footnote{In terms of the subvectors decomposition in (\ref{eqn:PartitioningThetaParameter}), we can assign $\beta_1 := (\theta_1, \theta_2, \theta_3, \theta_4)$ and $\delta := (\theta_5, \theta_6)$ for the joint test and $\beta_1 := (\theta_1, \theta_2)$ and $\delta := (\theta_3, \theta_4, \theta_5, \theta_6)$ for the auxiliary test. 
		As the parameter subvector $\beta_1$ is in the interior of the parameter space under the null for both tests, classical asymptotic theory is sufficient for this unrestricted linear link function specification.
		}
		This corresponds to the standard case of forecast encompassing tests \citep{FairShiller1989, ClementsHarvey2009}, which is already considered by \cite{DimiSchnaitmann2020} for the case of the VaR and ES.
		As none of the parameters are on the boundary under the null, standard asymptotic theory is sufficient here and we use this specification as the benchmark in this paper.

		\item 
		\textbf{Convex Combinations:} 
		We consider the link functions
		\begin{align}
			g_t^q(\theta)  &= \theta_4 + \theta_2 \hat q_{1,t} + (1-\theta_2) \hat q_{2,t}, \qquad \text{ and } \\
			g_t^e(\theta)  &= \theta_3 + \theta_1 \hat e_{1,t} + (1-\theta_1) \hat e_{2,t},
		\end{align}
		where $\Theta :=  [0,1]^2 \times\mathbb{R}_C^2$.
		We then test the following null hypothesis:
		\begin{enumerate}
			\item 
			$\mathbb{H}_0^{\text{Joint}}: (\theta_1, \theta_2) = (1,1)$,
			and we assign
			$\beta_1 := (\theta_1, \theta_2) \in \mathcal{B}_1 :=  [0,1]^2$, and
			$\delta := (\theta_3, \theta_4) \in \Delta := \mathbb{R}_C^2$.

			\item 
				$\mathbb{H}_0^{\text{Aux}}: \theta_1 = 1$,
				and we assign
				$\beta_1 := \theta_1 \in \mathcal{B}_1 :=  [0,1]$,
				$\beta_2 := \theta_2 \in \mathcal{B}_2 :=  [0,1]$ and
				$\delta := (\theta_3, \theta_4) \in \Delta := \mathbb{R}_C^2$.
		\end{enumerate}
		In comparison with the linear link functions, the convex forecast combinations require estimation of less parameters and therefore stabilizes the parameter estimation, especially for highly correlated forecasts.\footnote{Notice that for the estimation of joint VaR and ES models, especially for extreme probabilities such as $\alpha = 2.5\%$, adding additional parameters is costly in terms of both, computation times and estimation noise, see e.g.\ the simulations of \cite{DimiBayer2019} for details.}
		For both hypotheses formulated above, $\theta_1$ and $\theta_2$ are on the boundary under the null, while $\theta_3$ and $\theta_4$ are not.
		The latter parameters are assigned to $\delta$ instead of $\psi$ as the matrix $\mathcal{T}$, given in (\ref{eqn:T_Matrix}), does not have null entries at the respective points.
		As the tested parameters are on the boundary of the parameter space under the null hypotheses of both tests, their corresponding Wald test statistics are subject to a non-standard asymptotic distribution \citep{Andrews1999, Andrews2001}.
		
		\item 
		\textbf{No VaR and ES Crossing:} 
		We consider the link functions
		\begin{align}
			g_t^q(\theta)  &= \theta_3 + \theta_1 \hat e_{1,t} + (1-\theta_1) \hat e_{2,t} + \theta_2 \big(\hat q_{1,t}  -\hat e_{1,t} \big) + (1-\theta_2) \big(\hat q_{2,t}  -\hat e_{2,t} \big), \; \text{ and } \label{eqn:Var_nocrossing}\\ 
			g_t^e(\theta)  &= \theta_3 + \theta_1 \hat e_{1,t} + (1-\theta_1) \hat e_{2,t},  
		\end{align}
		where $\Theta := [0,1]^2 \times \mathbb{R}_C^2$.
		These link functions imply that $g_t^q(\theta) \ge g_t^e(\theta)$ holds almost surely for all $t \in \mathfrak{T}$, which can be interpreted as a necessary condition for sensible (combinations of) VaR and ES forecasts, which is closely related to the issue of \textit{quantile crossings} in quantile regression \citep{Koenker2005book}.
		We then test
		\begin{enumerate}
			\item 
			$\mathbb{H}_0^{\text{Joint}}: (\theta_1, \theta_2) = (1,1)$, and we assign
			$\beta_1 := (\theta_1, \theta_2) \in \mathcal{B}_1 :=  [0,1]^2$, and
			$\delta := \theta_3 \in \Delta := \mathbb{R}_C$.
			
			\item 
			$\mathbb{H}_0^{\text{Aux}}: \theta_1 = 1$, and we assign
			$\beta_1 := \theta_1 \in \mathcal{B}_1 :=  [0,1]$,
			$\beta_2 := \theta_2 \in \mathcal{B}_2 :=  [0,1]$ and
			$\delta := \theta_3 \in \Delta := \mathbb{R}_C$.
		\end{enumerate}
		As in the convex setup, the tested parameters are on the boundary under the null and non-standard asymptotic theory is required.  
\end{enumerate}

While we focus on these examples of link functions in this article, the asymptotic theory presented in the subsequent section is valid for a many other interesting link functions, such as link functions without intercepts, nonlinear functions, and further specifications which prevent a crossing of the VaR and ES forecasts.

\subsection{Asymptotic Theory on the Boundary of the Parameter Space}
\label{sec:AsymptoticTheory}

In this section, we derive the asymptotic theory for the M-estimator\footnote{In order to ensure global convergence of the M-estimator by avoiding local minima, we utilize the implementation of the \texttt{R} package \texttt{esreg} \citep{BayerDimi2019esreg} based on the Iterated Local Search (ILS) meta-heuristic of \cite{Lourenco2003}. See Section 3 of \cite{DimiBayer2019} for further details.}
$\hat \theta_T$, given by 
\begin{align}
	\label{eqn:DefQn}
	\hat \theta_T = \underset{\theta \in \Theta}{\argmin} \; l_T(\theta), \quad \text{ where }  \quad
	l_T(\theta) =\sumt \rho_t \big( Y_{t+h}, g^q_t(\theta) , g^e_t(\theta)  \big).
\end{align}
Classical asymptotic theory for the M-estimator $\hat \theta_T$, as given in \cite{Patton2019}, states that given certain regularity conditions,
\begin{align}
\label{eqn:AsymptoticNormality}
\sqrt{T} \big( \hat \theta_T - \theta^0 \big) \tod \mathcal{N} \big( 0 , \mathcal{T}^{-1} \mathcal{I} \mathcal{T}^{-1} \big),
\end{align}
where
\begin{align}
	&\begin{aligned}
		\mathcal{T} 
		&= - \mathbb{E} \left[ \nabla  g_t^q(\theta^0) \nabla  g_t^q(\theta^0)^\top \left( \mathfrak{g}(g^q_t(\theta^0)) + \frac{\phi'(g^e_t(\theta^0))}{\alpha} \right)  h_t( g_t^q(\theta^0)) \right.  \\
		&\qquad\qquad+ \left. \nabla  g_t^e(\theta^0) \nabla g_t^e(\theta^0)^\top \phi''(g^e_t(\theta^0)) \right], \qquad \text{ and } 
		\label{eqn:T_Matrix}
	\end{aligned} \\
	&\mathcal{I} = \Var \left( T^{-1/2} \sumt \psi_t(\theta^0) \right),
	\label{eqn:I_Matrix}
	\end{align}
	with
	\begin{align}
	\begin{aligned}
		\psi_t(\theta) &=
		\nabla g^q_t(\theta) \left( \mathfrak{g}(g^q_t(\theta)) + \frac{\phi'(g^e_t(\theta))}{\alpha} \right) \left( \mathds{1}_{\{Y_{t+h} \le g^q_t(\theta) \}} - \alpha \right) \\
		& + \nabla g^e_t(\theta)  \phi''(g^e_t(\theta))  \left( g^e_t(\theta) - g^q_t(\theta)+ \frac{1}{\alpha} (g^q_t(\theta) - Y_{t+h}) \mathds{1}_{\{Y_{t+h} \le g^q_t(\theta) \}} \right).
		\label{eqn:Psi_t}
	\end{aligned}
\end{align}
The function $\psi_t(\theta)$ corresponds to the gradient of the loss function $\rho_t(\theta)$ almost surely, i.e.\  on the set $\{ \theta \in \Theta: \; Y_{t+h} \not= g^q_t(\theta) \}$, which has probability one as the distribution $F_t$ is assumed to be absolutely continuous.

The asymptotic normality result in (\ref{eqn:AsymptoticNormality}) crucially depends on the regularity condition that the true parameter $\theta^0$ is in the interior of the parameter space, $\operatorname{int}(\Theta)$.
This condition is violated under the null hypothesis for many interesting specifications of the link functions for the considered encompassing tests, as further outlined in Section \ref{sec:LinkSpecifications}.
\cite{Andrews1999} derives the non-standard asymptotic distribution of the parameter estimates in a general setup, which allows for parameters to be on the boundary and \cite{Andrews2001} extends this result to the asymptotic distribution of the resulting Wald test statistics.

Intuitively, the condition $\theta^0 \in \operatorname{int}(\Theta)$ implies that parameters to all sides (in a neighborhood) of $\theta^0$ are contained in $\Theta$ such that the estimator $\hat \theta_T$ is allowed to \textit{vary} to all sides of $\theta^0$.
The asymptotic normality result in (\ref{eqn:AsymptoticNormality}) formalizes this intuition by quantifying this variation as a limiting normal distribution.
In contrast, if $\theta^0$ is on the boundary of $\Theta$, the estimator $\hat \theta_T$ cannot attain values to all sides of $\theta^0$, as values in some directions are excluded through the boundary.
Consequently, in these cases the asymptotic distribution is more complicated and non-standard, which we formalize through deriving asymptotic theory on the boundary in the following.	
For this, we make the following assumptions.
	\newcounter{AssumptionCounter}
	\begin{assumption}
		\label{assu:AsymptoticTheory}
		$ $
		\begin{enumerate}[label=(\Alph*)]
			\item 
			\label{cond:ParameterSpace}
			The parameter space is given as the product space $\Theta = \mathcal{B}_1 \times \mathcal{B}_2 \times \Delta \times \Psi$, where each of these four spaces is compact and restricted by individual inequality constraints:
			\begin{itemize}
				\item 
				$\mathcal{B}_1 = \big\{ \beta_1 \in \mathbb{R}^{p_1}: \Gamma_{\beta_1} \beta_1 \le r_{\beta_1} \big\}$, where $\Gamma_{\beta_1}$ is a $l_{\beta_1} \times p_1$ matrix and $r_{\beta_1}$ a $l_{\beta_1}$-dimensional vector,
				
				\item 
				$\mathcal{B}_2 = \big\{ \beta_2 \in \mathbb{R}^{p_2}: \Gamma_{\beta_2} \beta_2 \le r_{\beta_2} \big\}$, where $\Gamma_{\beta_2}$ is a $l_{\beta_2} \times p_2$ matrix and $r_{\beta_2}$ a $l_{\beta_2}$-dimensional vector,
								
				\item 
				$\Delta = \big\{ \delta \in \mathbb{R}^{q}: \Gamma_{\delta} \delta \le r_{\delta} \big\}$, where $\Gamma_{\delta}$ is a $l_{\delta} \times q$ matrix and $r_{\delta}$ a $l_{\delta}$-dimensional vector,
				
				\item 
				$\Psi = \big\{ \psi \in \mathbb{R}^{s}: \Gamma_{\psi} \psi \le r_{\psi} \big\}$, where $\Gamma_{\psi}$ is a $l_{\psi} \times s$ matrix and $r_{\psi}$ a $l_{\psi}$-dimensional vector.
			\end{itemize}

			\item
			\label{cond:BetaMixing}
			The process $Z_t$ is stationary and $\beta$-mixing of size $- r/(  r-1)$ for some $ r>1$.

			\item 
			\label{cond:MomentCondition}
			It holds that $\mathbb{E}\big[ \sup_{\theta \in \Theta}  |\rho_t(\theta)|^{2r}  \big] < \infty$ and $\mathbb{E} \left[ \sup_{\theta \in \Theta} ||\psi_t(\theta)||^{2r}  \right] < \infty$  for all $\theta \in \Theta$ and some $\delta > 0$, where $ r>1$ is given in condition \ref{cond:BetaMixing}.\footnote{We state these conditions as high-level moment conditions depending on $\rho_t(\theta)$ and $\psi_t(\theta)$.
			The derivations for primitive moment conditions for the semiparametric models for the VaR and ES for specific choices of the functions $\mathfrak{g}(\cdot)$ and $\phi(\cdot)$ are straight-forward, but the resulting conditions are often rather convoluted, see e.g.\ Appendix A of \cite{DimiBayer2019} and Assumption 2 (C) and (D) of \cite{Patton2019}.}
			
			\item 
			\label{cond:AbsContDistribution}
			The distribution of $Y_{t+h}$ given $\mathcal{F}_t$, denoted by $F_t$, is absolutely continuous with continuous and strictly positive density $h_t$, which is bounded from above almost surely on the whole support of $F_t$ and Lipschitz continuous.
			
			\item 
			\label{cond:FullRankConditionNormality}
			The link functions $g_t^q(\theta)$ and $g_t^e(\theta)$ are $\mathcal{F}_t$-measurable, twice continuously differentiable in $\theta$ on $\operatorname{int}(\Theta)$ almost surely, and directionally differentiable on the boundary of $\Theta$.
			Moreover,  if for some $\theta_1, \theta_2 \in \Theta$, $\mathbb{P} \big( g_t^q(\theta_1) = g_t^q(\theta_2) \cap g_t^e(\theta_1) = g_t^e(\theta_2) \big) = 1$, then $\theta_1 = \theta_2$.

			\item 
			\label{cond:AsyCovPositiveDefinite}
			The matrices $\mathcal{I}$ and $\mathcal{T}$ have full rank.

			\item
			\label{cond:BreadMatrixZeroBlock}
			The matrix-elements of $\mathcal{T}$ governing the dependence of $(\beta_1, \beta_2, \delta)$ and of $\psi$ are zero.
			
			\setcounter{AssumptionCounter}{\value{enumi}}
		\end{enumerate}
	\end{assumption}

	Apart from the conditions \ref{cond:ParameterSpace} and \ref{cond:BreadMatrixZeroBlock}, these assumptions are similar to the ones of \cite{Patton2019} and \cite{DimiSchnaitmann2020}.
	However, as we base our proofs on stochastic equicontinuity and empirical process theory \citep{Andrews1994}, instead of on the approach of \cite{Weiss1991}, some of the conditions differ slightly.
	One main difference is that we assume the slightly stronger dependence condition of $\beta$-mixing (instead of $\alpha$-mixing) in order to show stochastic equicontinuity of the empirical process based on the theory of \cite{Doukhan1995}.
	Notice that the parameter space in condition \ref{cond:ParameterSpace} can  conveniently be expressed through $l$ inequality constraints using an $l \times k$ matrix $ \Gamma_\theta$ and an $l$-dimensional vector $r_\theta$ as\footnote{In fact, $r_\theta = (r_{\beta_1}, r_{\beta_2}, r_{\delta}, r_{\psi})$ and by expressing $\Gamma_\theta$ as a $4 \times 4$ block matrix, the individual blocks $\Gamma_{\beta_1}$,  $\Gamma_{\beta_2}$, $\Gamma_{\delta}$ and $\Gamma_{\psi}$ appear on its \textit{diagonal} with rectangular zero-blocks everywhere else.}
	\begin{align}
		\label{eqn:ParamSpace}
		\Theta = \big\{ \theta \in \mathbb{R}^k: \Gamma_\theta \theta \le r_\theta \big\}.
	\end{align}
	This general formulation allows for flexible product spaces of closed real intervals.

	\begin{theorem}
		\label{thm:GeneralAsymptoticDistribution}
		Suppose Assumption \ref{assu:AsymptoticTheory} holds.
		Then
		\begin{align}
			\label{eqn:AsyDistArgInf}
			\sqrt{T} \big( \hat \theta_T - \theta^0 \big) \tod \hat \lambda, 
			\qquad \text{ where } \qquad
			\hat \lambda = \underset{\lambda \in \Lambda}{\arg \inf}\; (\lambda - Z)^\top \mathcal{T} (\lambda - Z),
		\end{align}
		with $Z= \mathcal{T}^{-1} G$,  $G \sim \mathcal{N}(0, \mathcal{I})$ and $\Lambda = \big\{ \lambda \in \mathbb{R}^k: \Gamma_{\theta}^{(b)} \lambda \le 0 \big\}$, where $\Gamma_{\theta}^{(b)}$ denotes the submatrix of $\Gamma_\theta$ from \eqref{eqn:ParamSpace}, which consists of the rows of $\Gamma_\theta$ for which all inequalities $\Gamma_{\theta}^{(b)} \theta^0 \le r_\theta$ hold as an equality. 
	\end{theorem}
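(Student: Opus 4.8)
The plan is to cast the problem into the framework of \cite{Andrews1999} and verify its high-level conditions, reading off the limiting law from its quadratic-approximation argument. A preliminary step is consistency, $\hat\theta_T \toP \theta^0$: this follows from a uniform law of large numbers for $T^{-1} l_T(\theta)$ over the compact set $\Theta$ (Conditions \ref{cond:BetaMixing}--\ref{cond:MomentCondition}), together with the fact that $\theta^0$ is the unique minimizer of the limit criterion, guaranteed by the strict consistency of the loss \eqref{eqn:JointLossESRegGeneral} and the injectivity imposed in Condition \ref{cond:FullRankConditionNormality}. With consistency in hand I would localize: setting $\theta = \theta^0 + \lambda/\sqrt{T}$ and $Q_T(\lambda) := l_T(\theta^0 + \lambda/\sqrt{T}) - l_T(\theta^0)$, the rescaled estimator $\sqrt{T}(\hat\theta_T - \theta^0)$ is the minimizer of $Q_T$ over the local parameter set $\sqrt{T}(\Theta - \theta^0)$. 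The crux is the stochastic quadratic approximation
\begin{align*}
Q_T(\lambda) = -\lambda^\top G_T + \tfrac12 \lambda^\top \mathcal{T}\lambda + o_P(1),
\qquad G_T := -T^{-1/2}\sumt \psi_t(\theta^0),
\end{align*}
valid uniformly over $\lambda$ in compact sets.

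To establish this approximation I would proceed in two parts. First, the central limit theorem $G_T \tod G \sim \mathcal{N}(0,\mathcal{I})$ follows from a CLT for $\beta$-mixing sequences under Conditions \ref{cond:BetaMixing} and \ref{cond:MomentCondition}; because the $h$-step forecasts induce overlap, $\psi_t(\theta^0)$ is not a martingale-difference sequence and the long-run variance is precisely the HAC-type object $\mathcal{I}$ in \eqref{eqn:I_Matrix}. Second, I would decompose the score process $T^{-1/2}\sumt\psi_t(\theta)$ into its conditional-mean part $T^{-1/2}\sumt\mathbb{E}_t[\psi_t(\theta)]$ and the mean-zero remainder. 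The mean part is smooth in $\theta$ even though $\psi_t$ is not pathwise, since taking the conditional expectation integrates the indicator $\mathds{1}_{\{Y_{t+h}\le g^q_t(\theta)\}}$ against the density $h_t$; a second-order expansion around $\theta^0$ then produces the curvature matrix $\mathcal{T}$ of \eqref{eqn:T_Matrix}, using the boundedness and Lipschitz continuity of $h_t$ (Condition \ref{cond:AbsContDistribution}) and the differentiability of the links (Condition \ref{cond:FullRankConditionNormality}). The mean-zero remainder must be shown to be asymptotically negligible after recentering at $\theta^0$, i.e.\ the empirical score process must be stochastically equicontinuous at $\theta^0$. This is the main obstacle: the indicator makes $\theta \mapsto \psi_t(\theta)$ discontinuous along sample paths, so pointwise differentiation is unavailable and one must instead exploit that $\psi_t$ is continuous in mean-square. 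I would verify the required entropy/bracketing condition for the class $\{\psi_t(\theta) : \theta\in\Theta\}$ and invoke the stochastic-equicontinuity results for $\beta$-mixing processes of \cite{Doukhan1995} and \cite{Andrews1994}; integrating the resulting score approximation back up yields the displayed quadratic form for $Q_T$.

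Finally, I would identify the limit of the constraint set and pass to the limit of the minimizers. Since $\Theta$ is the polytope \eqref{eqn:ParamSpace}, the scaled set $\sqrt{T}(\Theta - \theta^0)$ converges to the tangent cone $\Lambda = \{\lambda : \Gamma_\theta^{(b)}\lambda \le 0\}$: any constraint with $\Gamma_\theta\theta^0 < r_\theta$ strictly is slack and imposes no restriction after rescaling, whereas each binding constraint $\Gamma_\theta^{(b)}\theta^0 = r_\theta$ turns, under $\theta = \theta^0 + \lambda/\sqrt{T}$, into exactly $\Gamma_\theta^{(b)}\lambda \le 0$. Because $\mathcal{T}$ is positive definite (Condition \ref{cond:AsyCovPositiveDefinite}) and $\Lambda$ is a closed convex cone, completing the square gives $Q_T(\lambda) = \tfrac12(\lambda - Z_T)^\top\mathcal{T}(\lambda - Z_T) + o_P(1)$ up to a $\lambda$-free term, with $Z_T := \mathcal{T}^{-1}G_T \tod Z = \mathcal{T}^{-1}G$, and the minimand is strictly convex with a unique minimizer. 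An argmin continuous-mapping theorem \citep{Andrews1999}, applied using the uniform quadratic approximation and the set convergence $\sqrt{T}(\Theta-\theta^0)\to\Lambda$, then delivers $\sqrt{T}(\hat\theta_T - \theta^0) \tod \arg\inf_{\lambda\in\Lambda}(\lambda - Z)^\top\mathcal{T}(\lambda - Z) = \hat\lambda$, as claimed. Condition \ref{cond:BreadMatrixZeroBlock} plays no role in this general statement; it becomes essential only later, when the block structure of $\mathcal{T}$ is used to show that the marginal law of the $\beta_1$-component of $\hat\lambda$ is unaffected by whether $\psi$ lies on the boundary.
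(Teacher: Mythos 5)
Your proposal is correct and lands in the same framework as the paper --- both rest on the argmin/continuous-mapping machinery of \cite{Andrews1999, Andrews2001}, a mixing CLT for the score at $\theta^0$, the empirical-process tools of \cite{Andrews1994} and \cite{Doukhan1995}, and the identification of $\Lambda$ as the cone of binding constraints (your remark that Condition \ref{cond:BreadMatrixZeroBlock} is not needed for this theorem also matches the paper) --- but the technical core is organized differently. Where you re-derive the localized quadratic approximation by hand, decomposing the score process $T^{-1/2}\sumt \psi_t(\theta)$ into a smooth conditional-mean part (yielding $\mathcal{T}$) and a mean-zero remainder whose stochastic equicontinuity must be verified, the paper never touches the score process at this stage: it verifies the high-level quadratic-expansion condition (Assumption $2^\ast$ of \cite{Andrews1999}, via Assumption $1^{2\ast}$ of \cite{Andrews1997}) by establishing stochastic equicontinuity of the empirical process of the \emph{loss} $\rho_t(\theta)$ itself, exploiting that $\rho_t(\theta)$ is almost surely Lipschitz in $\theta$ and hence a \emph{type IV} class in the sense of \cite{Andrews1994} --- a comparatively cheap verification, since the a.s.\ Lipschitz property fails for the score. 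Your route therefore carries a genuine extra burden that you gloss over twice: the class $\{\psi_t(\theta)\}$ contains indicators, so its entropy/bracketing bound requires the conditional-density smoothing argument (which the paper deploys only later, in the HAC consistency lemma for $\psi_t(\theta)\psi_{t-j}^\top(\theta)$), and the step ``integrating the score approximation back up'' to $Q_T$ needs an explicit justification --- e.g., that $\rho_t(\theta)-\rho_t(\theta^0)$ equals the line integral of $\psi_t$ almost surely, or a direct verification of Pollard-type stochastic differentiability as in \cite{NeweyMcFadden1994}. Both of these are standard and repairable, so your plan would go through; the trade-off is that your score-level expansion makes the source of $\mathcal{T}$ and the role of the indicator discontinuity transparent, while the paper's loss-level verification is shorter and defers all indicator-class empirical-process work to where it is unavoidable.
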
	

    The proof of Theorem \ref{thm:GeneralAsymptoticDistribution} verifies the necessary assumptions in \cite{Andrews1999} and \cite{Andrews2001}.\footnote{
	Notice that the notation in \cite{Andrews2001} includes the nuisance parameter $\pi \in \Pi$ which we do not require. 
	Thus, following the comment on p.692 of \cite{Andrews2001}, we simply employ a parameter space $\Pi = \{\pi_0\}$ consisting of a single point $\pi_0$, e.g.\ $\pi_0 = 0$, and suppress the dependency on $\pi$ in the notation.
}
If $\theta^0 \in \operatorname{int}(\Theta)$,  none of the inequalities in (\ref{eqn:ParamSpace}) is binding and $\Lambda = \mathbb{R}^k$.
This implies that $\hat \lambda = Z$ almost surely in (\ref{eqn:AsyDistArgInf}), which results in the classical asymptotic normality result given in (\ref{eqn:AsymptoticNormality}).
In contrast, if $\theta^0$ is on the boundary of $\Theta$, the $\arg \inf$ in (\ref{eqn:AsyDistArgInf}) results in a non-standard asymptotic distribution of the stabilizing transformation $\sqrt{T} \big( \hat \theta_T - \theta^0 \big)$.

\subsubsection*{Subvector Inference}

In the notation of the subvector decomposition of $\theta$ in (\ref{eqn:PartitioningThetaParameter}), we only test parametric restrictions for the subvector $\beta_1$, which might be substantially smaller than $\theta$.
Thus, the formulation of the arg\,inf in (\ref{eqn:AsyDistArgInf}) might be unnecessarily complex in these situations.
To address this issue, we derive inference for the subvector $\beta = (\beta_1, \beta_2)$ of $\theta$ by following the general approach of \cite{Andrews1999, Andrews2001}.
In some instances, this considerably simplifies the solution of the arg\,inf in (\ref{eqn:AsyDistArgInf}).

For this, we define the subvector $\gamma := (\beta, \delta) =  (\beta_1, \beta_2, \delta)$, which contains all parameters in $\theta$ but $\psi$, with the intuition that $\psi$ does not have any influence on the asymptotic distribution of $\gamma$ through the nullity restrictions on $\mathcal{T}$ imposed in condition \ref{cond:BreadMatrixZeroBlock} in Assumption \ref{assu:AsymptoticTheory}.
We define the following quantities for the subvectors $\beta$ and $\gamma$,
\begin{align}
	Z_\gamma := {\mathcal{T}_\gamma}^{-1} G_\gamma, \quad
	Z_\beta := H Z_\gamma, \quad \text{ with } \quad
	H := [I_p, \mathbf{0}_{p\times q}],
\end{align}
where $ {\mathcal{T}_\gamma}$ denotes the upper-left $(p+q) \times (p+q)$ submatrix of $\mathcal{T}$ and $G_\gamma$ the upper $(p+q)$-dimensional subvector of $G$.
The following theorem states the asymptotic distribution of the subvector $\beta$.
\begin{theorem}
	\label{thm:InferenceBeta}
	Given Assumption \ref{assu:AsymptoticTheory}, it holds that
	\begin{align}
		\sqrt{T} \big( \hat \beta_T - \beta^0 \big) \tod \hat \lambda_\beta,
	\end{align}
	where 
	\begin{align}
		\label{eqn:QuadProgrammingBeta}
		\hat \lambda_\beta = \underset{\lambda_\beta \in \Lambda_\beta}{\arg \inf}\; (\lambda_\beta - Z_\beta)^\top \big( H {\mathcal{T}_\gamma}^{-1} H^\top \big)^{-1} (\lambda_\beta - Z_\beta),
	\end{align}
	and $\Lambda_\beta = \big\{ \lambda_\beta \in \mathbb{R}^p: \Gamma_{\beta}^{(b)} \lambda _\beta\le 0 \big\}$.
	The matrix $\Gamma_{\beta}^{(b)}$ denotes the sub-matrix of $\Gamma_{\beta}$, which consists of the rows of $\Gamma_{\beta}$ for which the inequality $\Gamma_{\beta} \beta^0 \le r_\beta$ holds as an equality.
\end{theorem}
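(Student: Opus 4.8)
The plan is to derive the marginal limit of $\sqrt{T}\big(\hat\beta_T - \beta^0\big)$ from the joint limit of $\sqrt{T}\big(\hat\theta_T - \theta^0\big)$ already established in Theorem~\ref{thm:GeneralAsymptoticDistribution}, and then to identify the $\beta$-block of the limiting $\arg\inf$ in closed form. Since the coordinate projection $\theta \mapsto \beta$ is continuous, the continuous mapping theorem gives $\sqrt{T}\big(\hat\beta_T - \beta^0\big) \tod \hat\lambda_\beta$, where $\hat\lambda_\beta$ is the $\beta$-block of $\hat\lambda = \arg\inf_{\lambda \in \Lambda}(\lambda - Z)^\top \mathcal{T}(\lambda - Z)$ from \eqref{eqn:AsyDistArgInf}. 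It then remains to show that this $\beta$-block solves the reduced quadratic program in \eqref{eqn:QuadProgrammingBeta}, which I would do in two reduction steps: first decoupling $\psi$ from $\gamma = (\beta,\delta)$, and then profiling out the interior nuisance block $\delta$.

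For the first step I would use the product structure of $\Theta$ together with condition~\ref{cond:BreadMatrixZeroBlock}. By Assumption~\ref{assu:AsymptoticTheory}\ref{cond:ParameterSpace} the constraint matrix $\Gamma_\theta$ in \eqref{eqn:ParamSpace} is block diagonal across the four parameter blocks, so the active-constraint cone factorizes as $\Lambda = \Lambda_\gamma \times \Lambda_\psi$; and by condition~\ref{cond:BreadMatrixZeroBlock} the matrix $\mathcal{T}$ is block diagonal between $\gamma$ and $\psi$. Hence the objective splits additively as $(\lambda_\gamma - Z_\gamma)^\top \mathcal{T}_\gamma (\lambda_\gamma - Z_\gamma) + (\lambda_\psi - Z_\psi)^\top \mathcal{T}_\psi (\lambda_\psi - Z_\psi)$ and the $\arg\inf$ separates, so the $\gamma$-block $\hat\lambda_\gamma$ of $\hat\lambda$ solves $\arg\inf_{\lambda_\gamma \in \Lambda_\gamma}(\lambda_\gamma - Z_\gamma)^\top \mathcal{T}_\gamma(\lambda_\gamma - Z_\gamma)$ and does not depend on the $\psi$-part. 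Here $Z_\gamma = \mathcal{T}_\gamma^{-1} G_\gamma$ follows from block diagonality of $\mathcal{T}$ applied to $Z = \mathcal{T}^{-1}G$, and $G_\gamma$ is the Gaussian marginal of $G$.

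For the second step I would partition $\lambda_\gamma = (\lambda_\beta, \lambda_\delta)$ and observe that, since $\delta^0$ lies in the interior of $\Delta$, none of its inequality constraints is active, so $\Lambda_\gamma = \Lambda_\beta \times \mathbb{R}^q$, i.e., $\lambda_\delta$ ranges unconstrained. Minimizing the quadratic form over $\lambda_\delta \in \mathbb{R}^q$ first, the standard Schur-complement computation yields a concentrated objective in $\lambda_\beta$ with weight matrix $\mathcal{T}_{\beta\beta} - \mathcal{T}_{\beta\delta}\mathcal{T}_{\delta\delta}^{-1}\mathcal{T}_{\delta\beta}$ and center $Z_\beta = H Z_\gamma$. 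By the block-inversion identity this Schur complement equals $\big(H \mathcal{T}_\gamma^{-1} H^\top\big)^{-1}$, which is exactly the weight matrix in \eqref{eqn:QuadProgrammingBeta}. Because $\mathcal{T}$, and hence $\mathcal{T}_\gamma$, is positive definite (nonsingular by condition~\ref{cond:AsyCovPositiveDefinite} and positive definite as required for the arg\,inf in Theorem~\ref{thm:GeneralAsymptoticDistribution} to be well-posed), the inner minimizer in $\lambda_\delta$ is unique, so the $\beta$-block of the joint minimizer $\hat\lambda_\gamma$ coincides with the minimizer of the concentrated program, establishing \eqref{eqn:QuadProgrammingBeta}.

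The delicate point, and the one I would treat most carefully, is the interchange ``$\beta$-block of the $\arg\inf$ equals the $\arg\inf$ of the concentrated objective,'' which is not automatic for constrained problems. It rests on the two facts established above: that $\lambda_\delta$ enters $\Lambda_\gamma$ unconstrained, so the inner minimization is a genuine unconstrained quadratic minimization with a unique closed-form solution, and that positive definiteness of $\mathcal{T}_\gamma$ makes all minimizers unique almost surely, so the projection of the joint solution is well defined and measurable. Once these are in place, the identification of the weight matrix as the Schur complement is routine linear algebra. I would also emphasize that the same decoupling argument shows that whether or not $\psi$ lies on the boundary has no effect on the limit of $\hat\beta_T$, which is precisely the role of the nullity restriction in condition~\ref{cond:BreadMatrixZeroBlock}.
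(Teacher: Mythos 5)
Your proposal is correct, but it proceeds quite differently from the paper. The paper's own proof is a short verification-and-citation argument: it checks that condition \ref{cond:BreadMatrixZeroBlock} delivers Assumption 7$(a)$ of \cite{Andrews2001}, that the product structure of the parameter space in condition \ref{cond:ParameterSpace} (equivalently, the block-diagonal form of $\Gamma_\theta$ in \eqref{eqn:ParamSpace}) delivers Assumption 7$(b)$, and that interiority of $\delta^0$ delivers Assumption 8, and then invokes Theorem 2 of \cite{Andrews2001} wholesale. You instead reprove the content of that theorem in this setting: continuous mapping applied to the joint limit of Theorem \ref{thm:GeneralAsymptoticDistribution}, factorization of the active cone $\Lambda = \Lambda_\gamma \times \Lambda_\psi$ plus block-diagonality of $\mathcal{T}$ to decouple $\psi$, and then profiling out the unconstrained $\lambda_\delta$-block with the Schur-complement identity $\mathcal{T}_{\beta\beta} - \mathcal{T}_{\beta\delta}\mathcal{T}_{\delta\delta}^{-1}\mathcal{T}_{\delta\beta} = \big( H \mathcal{T}_\gamma^{-1} H^\top \big)^{-1}$. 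Your three structural inputs are in exact correspondence with the three Andrews assumptions the paper verifies, so the two arguments rest on identical hypotheses; what yours buys is transparency — it shows mechanically why $\psi$ has no effect on the limit of $\hat\beta_T$ and where the weight matrix in \eqref{eqn:QuadProgrammingBeta} comes from — while the paper's buys brevity and inherits Andrews' general treatment (e.g.\ of nuisance parameters and rate matrices) without rederivation. You are also right to single out the profiling interchange as the delicate step: the validity of ``$\beta$-block of the constrained $\arg\inf$ equals $\arg\inf$ of the concentrated program'' hinges precisely on $\lambda_\delta$ being unconstrained in $\Lambda_\gamma$ and on uniqueness of the inner minimizer, which is the substantive content of Andrews' Theorem 2; one small point worth making explicit is that condition \ref{cond:AsyCovPositiveDefinite} only asserts full rank of $\mathcal{T}$, so positive definiteness (which your Schur-complement step and the uniqueness of $\hat\lambda$ require) should be justified by noting that $\mathcal{T}$ is the limiting Hessian of the criterion at its minimum — you flag this, and it is consistent with how the paper implicitly uses $\mathcal{T}$ as a positive definite weighting matrix.
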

Theorem \ref{thm:InferenceBeta} shows that the asymptotic distribution of $\beta$ is entirely unaffected by the parameter $\psi$. In contrast, the subvector $\delta$ (which is contained in $\gamma$) influences the asymptotic distribution of $\beta$ through the weighting matrix in the quadratic programming problem in (\ref{eqn:QuadProgrammingBeta}), even though $\delta$ itself is not on the boundary of the parameter space.

While closed-form representations for the distribution of $\hat \lambda_\beta$ (and of $\hat \lambda$) are only available in special cases \citep{Andrews1999}, we can conveniently simulate from its distribution in a straight-forward fashion by solving a quadratic programming problem.
For this, notice that the minimization problem in (\ref{eqn:QuadProgrammingBeta}) is equivalent to solving
\begin{align}
	\label{eqn:QuadProgrammingBetaImplementation}
	\min_{\lambda_\beta \in \mathbb{R}^p} \frac{1}{2} \lambda_\beta^\top \big( H {\mathcal{T}_\gamma}^{-1} H^\top \big)^{-1} \lambda_\beta - Z_\beta^\top \big( H {\mathcal{T}_\gamma}^{-1} H^\top \big)^{-1} \lambda_\beta 
	\quad \text{ subject to } \quad \Gamma_\beta^{(b)} \lambda_\beta \le 0,
\end{align}
where $\Gamma_\beta^{(b)}$ is given as in Theorem \ref{thm:InferenceBeta} and specifies the binding inequality restrictions of $\Lambda_\beta$.
Consequently, we can draw samples from the Gaussian random variable $G_\gamma$, and for each sampled value, we solve the quadratic programming problem given in (\ref{eqn:QuadProgrammingBetaImplementation}).
The respective solutions then form a sample of the random variable $\hat \lambda_\beta$, whose distribution is asymptotically equivalent to the one of $\sqrt{T} \big( \hat \beta_T - \beta^0 \big)$.

\subsubsection*{The Wald Test Statistic}

We now consider a Wald test for the null hypothesis $\mathbb{H}_0: \beta_1 = \beta_{1 \ast}$ for some $\beta_{1 \ast} \in \mathcal{B}_1$, which may or may not be on the boundary of $\mathcal{B}_1$.
We define the Wald test statistic for the null hypothesis $\mathbb{H}_0: \beta_1 = \beta_{1\ast}$ as
\begin{align}
	\label{eqn:WaldTestStatistic}
	W_T = T \big( \hat \beta_1 - \beta_{1 \ast} \big)^\top \, \hat V_T^{-1} \, \big( \hat \beta_1 - \beta_{1 \ast} \big),
\end{align}
with weighting matrix $\hat V_T^{-1}$, given by
\begin{align}
\label{eqn:WeightingMatrixWaldTest}
\hat V_T := H_1 \hat{\mathcal{T}}_{T \gamma}^{-1}  \hat{\mathcal{I}}_{T \gamma}  \hat{\mathcal{T}}_{T \gamma}^{-1} H_1^\top,
\end{align}  
where $H_1 := [I_p, \mathbf{0}_{p \times q}]$, and where $\hat{\mathcal{T}}_{T \gamma}$ and $\hat{\mathcal{I}}_{T \gamma}$ are the upper left $(p+q) \times (p+q)$ submatrices of $\hat{\mathcal{T}}_{T}$ and $\hat{\mathcal{I}}_{T}$, respectively, which are consistent estimators for the matrices $\mathcal{T}$ and $\mathcal{I}$.
For the matrix $\mathcal{T}$, we use the estimator
\begin{align}\label{eqn:est_T_mat}
	&\begin{aligned}
	\hat{\mathcal{T}}_T
	&= -  \frac{1}{T} \sumt \left( \nabla  g_t^q(\hat \theta_T) \nabla  g_t^q(\hat \theta_T)^\top \left( \mathfrak{g}(g^q_t(\hat \theta_T)) + \frac{\phi'(g^e_t(\hat \theta_T))}{\alpha} \right) \frac{1}{2c_T} \mathds{1}_{\{| Y_{t+h} - g_t^q(\hat \theta_T)| \le c_T\}} \right.  \\
	&\qquad\qquad\qquad+ \left. \nabla  g_t^e(\hat \theta_T) \nabla g_t^e(\hat \theta_T)^\top \phi''(g^e_t(\hat \theta_T)) \right),
	\end{aligned}
\end{align}
where the bandwidth $c_T$ satisfies $c_T = o(1)$ and $c_T^{-1} = o(T^{1/2})$.
In the specification of $\hat{\mathcal{T}}_{T}$, the term $\mathds{1}_{\{| Y_{t+h} - g_t^q(\hat \theta_T))| \le c_T \}}/(2 c_T)$ is a nonparametric estimator of the conditional density $h_t(g^q_t(\theta^0))$, which is also employed in \cite{Engle2004} and  \cite{Patton2019}.

As we allow for multi-step ahead (aggregate) forecasts in this treatment, we employ a HAC estimator \citep{NeweyWest1987, Andrews1991} for the matrix $\mathcal{I}$, 
\begin{align}
	\label{eqn:HACEstimation}
   \hat{\mathcal{I}}_{T} &= \widehat{\Omega}_{T,0} + \sum_{j=1}^{m_T} z(j,m_T) \big( \widehat{\Omega}_{T,j} + \widehat{\Omega}_{T,j}^\top \big),  \; \text{ where } \;
   \widehat{\Omega}_{T,j} =  \frac{1}{T} \sumtj \psi_t(\hat \theta_T)  \psi_{t-j}^\top(\hat \theta_T),
\end{align}
based on some weight functions $z(j,m) \to 1$ and the bound (or bandwidth) $m_T = o(T^{1/4})$.
Furthermore, $\psi_t(\theta)$ is given in (\ref{eqn:Psi_t}) and we define $\mathfrak{T}_j := \{ t \in \mathbb{N}: S+j \le t \le S+T-1 \}$ for all $j \ge 0$.
As the functions $\psi_t( \theta)$ are not continuous in $\theta$, we generalize the consistency proofs of the HAC estimator in \cite{NeweyWest1987} to nonsmooth objective functions in Lemma \ref{lemma:ConsistencyHAC} in the supplementary material. 
For the asymptotic distribution of the Wald test statistic, we impose the following assumptions.   

\begin{assumption}
	\label{assu:AsymptoticTheory2}
	$ $
	\begin{enumerate}[label=(\Alph*)]
		\setcounter{enumi}{\value{AssumptionCounter}}
		
		\item 
		\label{cond:HACWeights}
		$m_T \to \infty$ such that $m_T = o(T^{1/4})$ and $z(j,m) \to 1$ as $m \to \infty$.

		\item 
		\label{cond:DensityBandwidth}
		$c_T = o(1)$ and $c_T^{-1} = o(T^{1/2})$.
		
		\item 
		\label{cond:HACMoments}
		The functions $g_t^q(\theta)$ and $g_t^e(\theta)$ are three times continuously differentiable (in $\theta$) and the following moments are finite, 
		$\mathbb{E} \left[ \sup_{\tilde{\theta} \in U(\theta, \delta) } \left|\left|  \nabla_\theta \tilde A_t(\theta) \right| \right|^{2  r}  \right]$,  		
		$\mathbb{E} \left[ \sup_{\tilde{\theta} \in U(\theta, \delta) } \left|\left|  \nabla_\theta \tilde B_t(\theta) \right| \right|^{2  r} \right]$,
		\\
		$\mathbb{E} \left[ \sup_{\tilde{\theta} \in U(\theta, \delta) } \left| \tilde A_t(\tilde \theta) \right|^{2r}  \times  \sup_{\tilde{\theta} \in U(\theta, \delta) } \left|\left| \nabla_\theta g_t^q(\tilde \theta) h_t(g_t^q(\tilde \theta)) \right| \right|^{2r}  \right]$, \\
		and
		$\mathbb{E} \left[ \sup_{\theta \in \Theta} \left| \left|  \psi_t( \theta) \right| \right|^{2(r+\delta)} \right]$, for some $\delta > 0$,
		where $\tilde A_t(\theta)$ and  $\tilde B_t(\theta)$ are given in (\ref{eqn:DefTildeBt}) and (\ref{eqn:DefTildeAt}) in the supplementary material.
	\end{enumerate}
\end{assumption}
Conditions \ref{cond:HACWeights} and \ref{cond:DensityBandwidth} are standard in the literature on HAC estimators and estimating the conditional density, see e.g., \cite{NeweyWest1987}, \cite{Engle2004} and \cite{Patton2019}.
The strengthened moment conditions \ref{cond:HACMoments} are required to establish stochastic equicontinuity of the discontinuous function $\frac{1}{T} \sumtj \psi_t(\theta) \psi_{t-j}^\top (\theta)$ for consistency of the HAC estimator.

\begin{theorem}
	\label{thm:WaldTestAsymptoticDistribution}
	Suppose Assumption \ref{assu:AsymptoticTheory} and Assumption \ref{assu:AsymptoticTheory2} hold.
	Then
	\begin{align}
		\label{eqn:WaltStatisticAsyDistribution}
		W_T \tod W := \hat \lambda_{\beta_1}^\top V^{-1} \hat \lambda_{\beta_1},
	\end{align}
	where $V$ denotes the probability limit of $\hat V_T$ and $\hat \lambda_{\beta_1}$ is the upper $p_1$-dimensional subvector of $ \hat \lambda_{\beta}$, given in Theorem \ref{thm:InferenceBeta}.
\end{theorem}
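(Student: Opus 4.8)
The goal is to show that the Wald statistic $W_T$, built from $\hat\beta_1$ and the weighting matrix $\hat V_T$, converges in distribution to $W = \hat\lambda_{\beta_1}^\top V^{-1} \hat\lambda_{\beta_1}$. The natural strategy is to write $W_T$ as a continuous functional of the pair $\big(\sqrt T(\hat\beta_1 - \beta_{1\ast}),\ \hat V_T\big)$ and then invoke the continuous mapping theorem, using the joint convergence of these two objects. Under $\mathbb{H}_0: \beta_1 = \beta_{1\ast}$, we have $\beta_1^0 = \beta_{1\ast}$, so $\sqrt T(\hat\beta_1 - \beta_{1\ast}) = \sqrt T(\hat\beta_1 - \beta_1^0)$, and the latter is exactly the upper $p_1$-dimensional subvector of $\sqrt T(\hat\beta_T - \beta^0)$, whose limit is $\hat\lambda_{\beta_1}$ by Theorem \ref{thm:InferenceBeta}. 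The bulk of the work is therefore to establish $\hat V_T \toP V$, after which the result assembles cleanly.

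\textbf{Key steps.} First I would record that $\sqrt T(\hat\beta_{1} - \beta_{1\ast}) \tod \hat\lambda_{\beta_1}$ directly from Theorem \ref{thm:InferenceBeta}, reading off the top $p_1$ coordinates of $\hat\lambda_\beta$. Second, I would prove consistency of the components of $\hat V_T$: since $\hat V_T = H_1 \hat{\mathcal{T}}_{T\gamma}^{-1} \hat{\mathcal{I}}_{T\gamma} \hat{\mathcal{T}}_{T\gamma}^{-1} H_1^\top$, it suffices (by continuity of matrix inversion and multiplication on the set of full-rank matrices, guaranteed by condition \ref{cond:AsyCovPositiveDefinite}) to show $\hat{\mathcal{T}}_T \toP \mathcal{T}$ and $\hat{\mathcal{I}}_T \toP \mathcal{I}$; the submatrix maps $\hat{\mathcal{T}}_T \mapsto \hat{\mathcal{T}}_{T\gamma}$ and $\hat{\mathcal{I}}_T \mapsto \hat{\mathcal{I}}_{T\gamma}$ are themselves continuous, so consistency is inherited. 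The convergence $\hat{\mathcal{T}}_T \toP \mathcal{T}$ combines a uniform law of large numbers for the smooth parts with the fact that $\frac{1}{2c_T}\mathds{1}_{\{|Y_{t+h} - g_t^q(\hat\theta_T)| \le c_T\}}$ is a consistent kernel estimator of the conditional density $h_t(g_t^q(\theta^0))$ under conditions \ref{cond:AbsContDistribution} and \ref{cond:DensityBandwidth}, together with $\hat\theta_T \toP \theta^0$. The convergence $\hat{\mathcal{I}}_T \toP \mathcal{I}$ is the HAC consistency statement, which under conditions \ref{cond:HACWeights} and \ref{cond:HACMoments} is exactly Lemma \ref{lemma:ConsistencyHAC} in the supplement, extended there to the nonsmooth $\psi_t$. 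Third, I would upgrade to \emph{joint} convergence of $\big(\sqrt T(\hat\beta_1 - \beta_{1\ast}),\ \hat V_T\big) \tod (\hat\lambda_{\beta_1}, V)$; because the second component has a degenerate (constant) limit $V$, joint convergence follows from marginal convergence of each component. Finally, applying the continuous mapping theorem to the map $(x, M) \mapsto x^\top M^{-1} x$, which is continuous at $(\hat\lambda_{\beta_1}, V)$ since $V$ is positive definite, yields $W_T \tod \hat\lambda_{\beta_1}^\top V^{-1} \hat\lambda_{\beta_1}$.

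\textbf{Main obstacle.} The principal difficulty is the consistency of the two plug-in estimators, and in particular the HAC estimator $\hat{\mathcal{I}}_T$. Because $\psi_t(\theta)$ is discontinuous in $\theta$ (through the indicator $\mathds{1}_{\{Y_{t+h} \le g_t^q(\theta)\}}$), the standard Newey--West consistency argument does not apply verbatim, and one must establish stochastic equicontinuity of the empirical process $\frac{1}{T}\sumtj \psi_t(\theta)\psi_{t-j}^\top(\theta)$ uniformly over the relevant neighborhood and over the growing number of lags $j \le m_T$. This is precisely what the strengthened moment conditions \ref{cond:HACMoments}, the $\beta$-mixing condition \ref{cond:BetaMixing}, and the empirical-process machinery of \cite{Doukhan1995} are designed to handle, and I would defer the technical core to Lemma \ref{lemma:ConsistencyHAC}. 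A secondary subtlety is that the kernel density estimator inside $\hat{\mathcal{T}}_T$ is evaluated at the estimated $\hat\theta_T$ rather than at $\theta^0$; controlling this requires the Lipschitz continuity of $h_t$ from condition \ref{cond:AbsContDistribution} together with the rate restriction $c_T^{-1} = o(T^{1/2})$, so that the estimation error in $\hat\theta_T$ is negligible relative to the bandwidth. Once these two consistency results are in hand, the remainder of the argument is a routine application of Slutsky-type reasoning and the continuous mapping theorem.
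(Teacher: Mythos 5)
Your proposal is correct, but it assembles the result by a more elementary route than the paper. The paper does not argue via the continuous mapping theorem directly; instead it verifies the remaining bookkeeping assumptions (Assumptions 9 and 12) of \cite{Andrews2001} — e.g.\ that the null $\beta_1 = \beta_{1\ast}$ defines a set $\Theta_0$ of the required form, that $B_T = \sqrt{T} I_k$, and that $\mathcal{B}_1$ and $\mathcal{B}_2$ are given by separate inequality constraints — and then invokes Theorem 6(d) of \cite{Andrews2001} to deliver $W_T \tod \hat\lambda_{\beta_1}^\top V^{-1} \hat\lambda_{\beta_1}$ in one stroke. The substantive inputs are identical in both routes: the subvector limit from Theorem \ref{thm:InferenceBeta}, consistency $\hat{\mathcal{T}}_T \toP \mathcal{T}$ (which the paper simply cites from Theorem 3 of \cite{Patton2019} rather than re-deriving the kernel-density argument you sketch), consistency of the HAC estimator via Lemma \ref{lemma:ConsistencyHAC} — you correctly identified this as the technical core, since the discontinuity of $\psi_t(\theta)$ is exactly why the paper extends the \cite{NeweyWest1987} proof — and the observation that joint convergence $\big(\hat{\mathcal{T}}_T, \hat{\mathcal{I}}_T\big) \toP \big(\mathcal{T}, \mathcal{I}\big)$ follows from the separate marginal convergences. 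In the present setting (no nuisance parameter $\pi$, a fixed null value, the unrestricted estimator in the quadratic form), Andrews' Theorem 6(d) reduces to precisely your Slutsky-plus-CMT argument, so your route loses nothing and gains transparency; the paper's route buys consistency with the general boundary framework already invoked for Theorems \ref{thm:GeneralAsymptoticDistribution} and \ref{thm:InferenceBeta}. One small correction: invertibility of $V$ requires $\mathcal{T}_\gamma$ to be invertible, which does not follow from condition \ref{cond:AsyCovPositiveDefinite} alone — a principal submatrix of a full-rank matrix can be singular — but rather from condition \ref{cond:AsyCovPositiveDefinite} combined with the block structure imposed in condition \ref{cond:BreadMatrixZeroBlock}, under which $\mathcal{T}$ is block-diagonal between $\gamma$ and $\psi$ so that full rank of $\mathcal{T}$ passes to $\mathcal{T}_\gamma$; positive definiteness of $\mathcal{I}_\gamma$ then gives positive definiteness of $V$ since $H_1$ has full row rank.
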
	

Using the simulation procedure for the distribution of $\hat \lambda_{\beta}$ described after Theorem \ref{thm:InferenceBeta}, we can easily simulate draws from $\hat \lambda_{\beta_1}$ and consequently from the distribution of $W$ by using the formula in (\ref{eqn:WaltStatisticAsyDistribution}).
Hence, we obtain simulated, asymptotic critical values for the Wald test statistic.

We further use a variant of the HAC estimator \citep{NeweyWest1987, Andrews1991}, which is specifically designed for the semiparametric VaR and ES models.
For most classical HAC estimators, estimation of the contemporaneous variance $\mathbb{E} \big[ \psi_t(\theta^0)  \psi_{t}^\top(\theta^0) \big]$ is straight-forward by employing a sample counterpart.
The major challenge in consistently estimating the matrix $\mathcal{I}$ in (\ref{eqn:I_Matrix}) is then the inclusion of the (sample) autocovariances $\mathbb{E} \big[ \psi_t( \theta^0)  \psi_{t-j}^\top( \theta^0) \big]$ such that the resulting estimator is positive definite.

However, for the VaR and ES, and especially for extreme quantile levels, estimation of the contemporaneous variance $\mathbb{E} \big[ \psi_t( \theta^0)  \psi_{t}^\top( \theta^0) \big]$ is cumbersome in itself as it depends on the \textit{conditional truncated variance} $\operatorname{Var}_t( Y_{t+h} | Y_{t+h} \le g_t^q(\theta^0))$, see e.g.\ \cite{DimiBayer2019}.
For this, we employ the \textit{scl-sp} estimator of \cite{DimiBayer2019}, which is based on the regularizing assumption that the quantile residuals $u_{t+h}^q = Y_{t+h} - g_t^q(\theta^0)$ follow a location-scale model, conditional on the employed covariates.
Imposing a location-scale model might cause some misspecification in the estimation, but it allows to use all observations to estimate a conditional variance, and then obtain the conditional truncated variance through a transformation formula for location-scale models.
We obtain this estimator by replacing the outer product estimator of the contemporaneous variance by the \textit{scl-sp} estimator,
\begin{align}
	\label{eqn:HAC-SCLSP-Estimation}
	\tilde{\mathcal{I}}_{T} &= \widetilde{\Omega}_{T,0} + \sum_{j=1}^{m_T} z(j,m_T) \big( \widehat{\Omega}_{T,j} + \widehat{\Omega}_{T,j}^\top \big),
\end{align}
where $\widetilde{\Omega}_{T,0}$ denotes the \textit{scl-sp} estimator of \cite{DimiBayer2019}.

Even though the parametric link functions in (\ref{eqn:ParametricModelsEncompassing}) depend explicitly on the forecasts $\bq$ and $\be$, it is important to note that the  asymptotic theory of this section also holds for general semiparametric models for the VaR and ES in the sense of \cite{Patton2019}.
Consequently, the asymptotic theory and the proposed Wald test can further be employed for testing (the nullity) of coefficients in the dynamic models of \cite{Taylor2019} and \cite{Patton2019}, which are on the boundary of the parameter space under the null hypothesis.
Furthermore, the \textit{strict} ES encompassing test of \cite{DimiSchnaitmann2020} allows for testing encompassing of ES forecasts without their accompanying VaR forecasts, which potentially introduces model misspecification in the parametric models.
The asymptotic theory for the M-estimator presented here can easily be adapted to the misspecified case by replacing the matrices $\mathcal{T}$ and $\mathcal{I}$ with their misspecification-robust counterparts of \cite{DimiSchnaitmann2020}, and by replacing the respective steps in the proof of Theorem \ref{thm:GeneralAsymptoticDistribution}.

\section{Simulations}
\label{sec:Simulations}

In this section, we evaluate the empirical properties of the encompassing tests based on the three different link functions specified in Section \ref{sec:LinkSpecifications}, and on the asymptotic theory of Section \ref{sec:AsymptoticTheory}.
Section \ref{sec:SimAsyDist} numerically illustrates the effect testing on the boundary has on the asymptotic distribution of the parameters.
Subsequently, we analyze the size and power properties of the encompassing tests in Section \ref{sec:onestep_forecasts} for one-step ahead forecasts and in \Cref{sec:multistep_forecasts} for multi-step ahead and aggregate forecasts.

\subsection{The Asymptotic Distribution on the Boundary}
\label{sec:SimAsyDist}

We illustrate how true parameters on the boundary of the parameter space affect the asymptotic distribution of the M-estimator through simulations.
For this, we simulate data according to the standard GARCH model with Gaussian innovations described in (\ref{eqn:GARCHModel}) in Section \ref{sec:onestep_forecasts} with an out-of-sample window length of $T=2500$.
We estimate the parameters of the three considered link functions for the joint encompassing test that tests whether forecasts stemming from the (true) GARCH model encompass forecasts from the GJR-GARCH model given in (\ref{eqn:GJRGARCHModel}).

Figure \ref{fig:IllustrationBoundary} illustrates the distribution of the parameter estimates by plotting histograms over 10000 simulation replications for the intercept and slope parameters of the respective ES link functions $g_t^e$, whose true values equal zero and one respectively throughout all link functions.
For the (unrestricted) linear link function, all true parameters are in the interior of the parameter space and we find that the histograms for both parameters closely approximate the asymptotic normal distribution, derived and employed by \cite{Patton2019} and \cite{DimiSchnaitmann2020}.
In contrast, for the convex and no-crossing link functions, the slope parameter is bounded between zero and one, i.e.\ its true value of one is on the boundary of the parameter space.
This results in the non-standard distributions illustrated by the histograms for the slope parameters in the second and third plot in the lower row of Figure \ref{fig:IllustrationBoundary}.
The histograms approximate the asymptotic distribution consisting of a mixture of a point mass at one and a half-normal distribution, which is considerably different from asymptotic normality.
This behavior directly carries over to the resulting asymptotic distributions of the Wald test statistics which substantiates the necessity of the non-standard asymptotic theory on the boundary presented in Section \ref{sec:AsymptoticTheory}.

\begin{figure}[tb]
	\includegraphics[width=\linewidth]{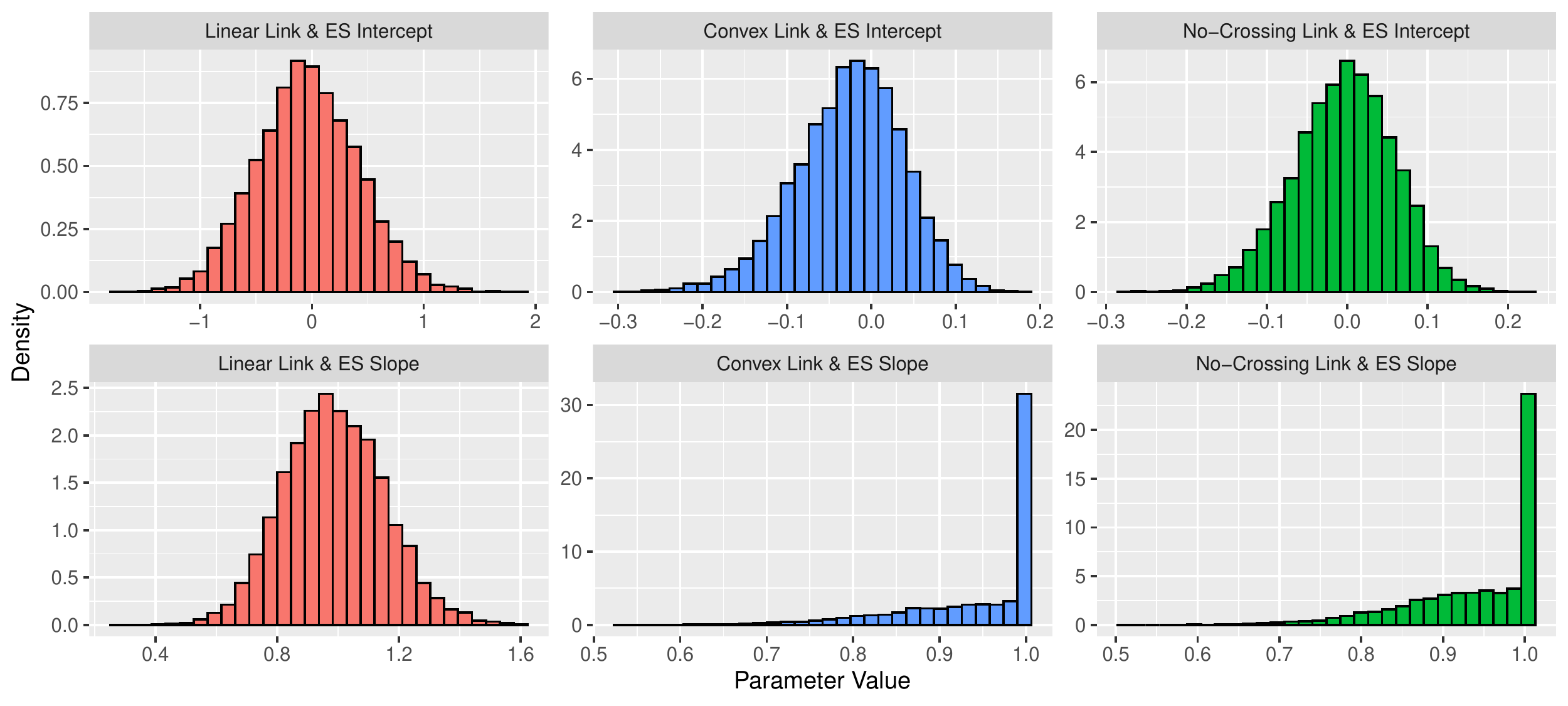}
	\caption{Illustration of the (asymptotic) distributions of the parameter estimates of the ES-specific intercept and slope parameter corresponding to the first ES forecast $\hat{e}_{1,t}$ for the three considered link functions.}
	\label{fig:IllustrationBoundary}
\end{figure}

While this behavior is not unexpected for the parameters on the boundary, the asymptotic distribution of the intercept parameters, which themselves are in the interior of the parameter space, is also affected due to the joint estimation.
For instance, we observe a slight skewness in the distribution of the intercept parameter of the convex link function contrasting the Gaussian distribution of the linear intercept parameter.

\subsection{One-Step Ahead Forecasts}
\label{sec:onestep_forecasts}

In this section, we investigate the empirical performance of our new encompassing tests for one-step ahead forecasts.
For this, we consider encompassing of VaR and ES forecasts stemming from a standard GARCH and a GJR-GARCH model \citep{Bollerslev1986, Glosten1993}, which are given by 
$r_{j,t+1} = \sigma_{j,t+1} u_{t+1}$,
for $j=1,2$, where the two distinct volatility specifications are given by
\begin{align}
	\label{eqn:GARCHModel}
	 \sigma_{1,t+1}^2 &= 0.04 + 0.1 r_{1,t}^2 + 0.85  \sigma_{1,t}^2, \qquad \text{and}  \\
	 \sigma_{2,t+1}^2 &= 0.04 + \big(0.05 + 0.1 \cdot \mathds{1}_{\{ r_{2,t} \le 0 \}} \big) r_{2,t}^2 + 0.8  \sigma_{2,t}^2.
	\label{eqn:GJRGARCHModel}
\end{align}
Furthermore, we employ two different residual distributions,
\begin{align}
	\label{eqn:GARCHresiduals}
	(a) \; u_{t+1} \stackrel{iid}{\sim} \mathcal{N}(0,1)  \qquad \text{ and } \qquad
	(b) \; u_{t+1} \stackrel{iid}{\sim} t(0.8,5),
\end{align}
where the latter denotes a skewed $t$-distribution, parameterized as in \cite{fernandez1998bayesian} and \cite{giot2003value}, with zero mean, unit variance, a skewness parameter of $0.8$ and $5$ degrees of freedom. 
For the two GARCH models paired with the two residual distributions, optimal one-step ahead VaR and ES forecasts are given by 
$\hat q_{j,t} = z_\alpha \sigma_{j,t+1}$ and $\hat e_{j,t} = \xi_\alpha  \sigma_{j,t+1}$
for $j=1,2$, where $z_\alpha$ and $\xi_\alpha$ are the $\alpha$-quantile and $\alpha$-ES of the standard normal and the skewed $t$-distribution, respectively.\footnote{Regarding the time index, notice that $\hat q_{j,t}$ and $\hat e_{j,t}$ represent $\mathcal{F}_t$-measurable forecasts for the return $r_{j,t+1}$, while $\sigma_{j,t+1}$ is equivalently based on time $t$ information and corresponds to the conditional volatility of $r_{j,t+1}$.}
For both distributions, we simulate $Y_{t+1} = r_{t+1}  = \big( (1-\pi)  \sigma_{1,t+1} + \pi  \sigma_{2,t+1} \big) u_{t+1}$ for 11 equally spaced values of $\pi \in [0,1]$, where $u_{t+1}$ is given as in ($a$) and ($b$) in \eqref{eqn:GARCHresiduals}. 

We consider encompassing tests comparing the respective GARCH and GJR-GARCH volatility specifications,
where we analyze the models based on \textit{Gaussian} and \textit{$t$-distributed} residuals in separate simulation setups.
For each forecast pair, we test two null hypotheses: the first tests whether the first forecast encompasses the second, indicated by $\mathbb{H}_0^{(1)}$, whereas the second tests the reverse, i.e.\ that forecast two encompasses forecast one, indicated by $\mathbb{H}_0^{(2)}$.
These two null hypotheses correspond to the cases $\pi = 0$ and $\pi=1$ in the simulation design above.
For all intermediate values of $\pi \in (0,1)$, the returns are generated as linear combinations of the models, and both null hypotheses should be rejected.
For both encompassing tests, we employ the \textit{scl-sp} covariance estimator of \cite{DimiBayer2019} described in Section \ref{sec:AsymptoticTheory}.\footnote{Section \ref{sec:cov_choice} in the supplemental material shows that the results for employing a HAC estimator are qualitatively equivalent for one-step ahead forecasts.}
All following results are based on 2000 Monte Carlo replications.

\begin{table}[t]
	\footnotesize
	\centering
	\caption{Empirical Test Sizes for One-Step Ahead Forecasts.}
	\label{tab:Size_GARCH_1}
	\begin{tabularx}{0.95\linewidth}{ll @{\hspace{0.2cm}} rrl @{\hspace{0.1cm}} rrl @{\hspace{0.5cm}}  rrl @{\hspace{0.1cm}} rr}
		\hline\hline
		\addlinespace
 		& & \multicolumn{2}{c}{$\mathbb{H}_0^{(1)}$} & & \multicolumn{2}{c}{$\mathbb{H}_0^{(2)}$} & &  \multicolumn{2}{c}{$\mathbb{H}_0^{(1)}$} & & \multicolumn{2}{c}{$\mathbb{H}_0^{(2)}$} \\
		\cmidrule(lr){3-4} \cmidrule(lr){6-7} \cmidrule(lr){9-10} \cmidrule(lr){12-13}
		& &  VaR ES &  Aux ES  & &  VaR ES &  Aux ES & &   VaR ES &  Aux ES  & &  VaR ES &  Aux ES  \\ 
		\midrule
   		& & \multicolumn{11}{c}{Linear link function} \\
		\midrule
		$T$ & &\multicolumn{5}{c}{Normal innovations} &   &\multicolumn{5}{c}{Skewed-t innovations}\\
		\cmidrule(lr){3-7} \cmidrule(lr){9-13}
		$250$  &        & 21.45  & 11.20  &        & 19.65  & 10.90 &        & 31.30  & 16.10  &        & 31.15  & 16.30 \\
$500$  &        & 16.60  &  8.60  &        & 15.30  &  9.10 &        & 25.40  & 10.60  &        & 24.25  & 10.45 \\
$1000$ &        & 12.95  &  6.70  &        & 11.80  &  7.05 &        & 22.55  &  7.35  &        & 20.25  &  8.30 \\
$2500$ &        & 11.35  &  6.15  &        &  9.70  &  5.05 &        & 16.80  &  5.45  &        & 15.65  &  4.90 \\
$5000$ &        &  8.65  &  5.00  &        &  8.45  &  5.30 &        & 14.35  &  5.10  &        & 15.00  &  5.40 \\
 
		\midrule
		& & \multicolumn{11}{c}{Convex link function} \\
		\midrule
		$T$ & &\multicolumn{5}{c}{Normal innovations} &   &\multicolumn{5}{c}{Skewed-t innovations}\\
		\cmidrule(lr){3-7} \cmidrule(lr){9-13}
		$250$  &        & 10.35  & 8.70   &        & 7.35   & 6.20  &        & 13.30  & 10.45  &        & 10.50  & 8.10  \\
$500$  &        &  8.10  & 7.50   &        & 5.35   & 5.35  &        & 11.16  &  8.91  &        &  7.80  & 6.90  \\
$1000$ &        &  7.53  & 6.82   &        & 4.75   & 4.40  &        &  9.26  &  7.71  &        &  6.36  & 4.56  \\
$2500$ &        &  5.66  & 5.66   &        & 4.10   & 3.90  &        &  7.14  &  5.78  &        &  5.21  & 3.76  \\
$5000$ &        &  7.02  & 6.77   &        & 4.65   & 4.00  &        &  5.56  &  4.31  &        &  6.16  & 3.91  \\
 
		\midrule
		& & \multicolumn{11}{c}{No-crossing link function} \\
		\midrule
		$T$ & &\multicolumn{5}{c}{Normal innovations} &   &\multicolumn{5}{c}{Skewed-t innovations}\\
		\cmidrule(lr){3-7} \cmidrule(lr){9-13}
		$250$  &        & 3.90   & 9.15   &        & 2.65   & 5.20  &        & 7.45   & 10.95  &        &  8.45  & 5.25  \\
$500$  &        & 2.75   & 8.90   &        & 4.95   & 4.75  &        & 7.95   & 10.80  &        & 10.51  & 4.50  \\
$1000$ &        & 2.75   & 9.05   &        & 7.30   & 3.60  &        & 9.70   &  9.35  &        & 12.76  & 3.90  \\
$2500$ &        & 4.55   & 6.60   &        & 8.55   & 3.85  &        & 9.76   &  7.56  &        &  9.80  & 3.05  \\
$5000$ &        & 4.96   & 6.76   &        & 7.35   & 3.90  &        & 8.47   &  5.96  &        &  9.35  & 3.75  \\

		\addlinespace
		\hline
		\hline 
		\addlinespace
		\multicolumn{13}{p{.93\linewidth}}{\textit{Notes:} This table reports the empirical sizes of the encompassing tests with a nominal size of $5\%$ for one-step ahead forecasts.
		For this, we consider the two DGPs based on different GARCH specifications, the three link functions, the joint VaR and ES (VaR ES) and auxiliary ES (Aux ES) tests and both encompassing null hypotheses.
		The columns denoted by ``Normal innovations'' contain results for the GARCH(1,1) and GJR-GARCH(1,1) in \eqref{eqn:GARCHModel} and \eqref{eqn:GJRGARCHModel} with normal innovations, whereas those labeled ``Skewed-t innovations'' report results for the skewed-$t$ distributed innovations.}
	\end{tabularx}
\end{table}

Table \ref{tab:Size_GARCH_1} reports the empirical test sizes of the joint VaR and ES and the auxiliary ES encompassing tests based on the three link functions described in Section \ref{sec:LinkSpecifications} for a nominal size of $5\%$.
For this, we consider the two GARCH specifications described in \eqref{eqn:GARCHModel} and \eqref{eqn:GJRGARCHModel}  for various out-of-sample sizes ranging from $T=250$ to $T=5000$.
We find that the tests based on the convex and no-crossing link functions outperform the ones build on the linear link function, especially for smaller out-of-sample sizes:
the tests based on the linear link function are in some instances severely oversized, while the other two link functions exhibit empirical sizes generally below $10\%$, even for the smallest of the considered sample sizes.
Note for this that a sample size of $T=250$ is considered to be very small for VaR and ES forecasts at a probability level of $\alpha = 2.5\%$, as this corresponds to only six VaR violations on average.
This result can be explained by the reduced number of estimated parameters for both, the convex and no-crossing link functions, and by the theoretically appealing property of excluding VaR and ES crossings for the no-crossing specification.

We further find that the auxiliary ES encompassing test generally exhibits more accurate (smaller) sizes than the  joint VaR and ES test throughout almost all considered designs.
This behavior is particularly evident for the process with skewed-$t$ innovations. 
As the joint test includes testing of the quantile parameters, the asymptotic covariance matrix additionally contains the density quantile function $h_t( g_t^q(\theta^0))$ in \eqref{eqn:T_Matrix}, which is particularly challenging to estimate for small probability levels (see e.g.\ \citealp{Koenker1978, Koenker2005book, DimiBayer2019}).

\begin{figure}[p!]
	\includegraphics[width=\linewidth]{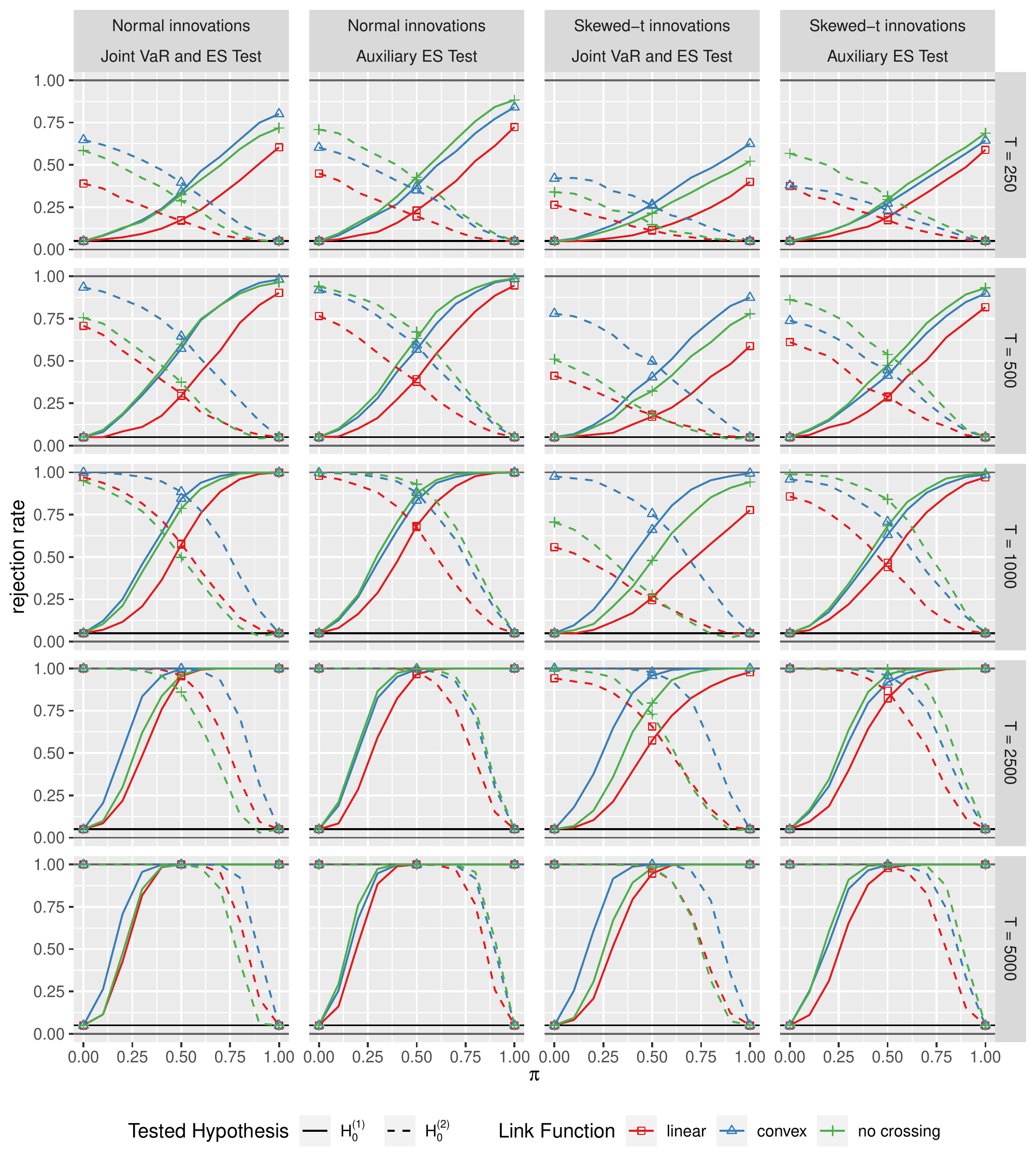}
	\caption{
	This figure shows size-adjusted power curves for the joint VaR and ES and the auxiliary ES encompassing tests with a nominal size of $5\%$.
	The employed link functions are indicated with the line color and symbol shape while the line type refers to the tested null hypothesis.
	The plot rows depict different sample sizes while the plot columns show results for the two innovation distributions described in \eqref{eqn:GARCHModel} - \eqref{eqn:GARCHresiduals} and for the joint and the auxiliary tests.
	An ideal test exhibits a rejection rate of $5\%$ for $\pi=0$ and for $\mathbb{H}_0^{(1)}$ (and inversely for $\pi=1$ and $\mathbb{H}_0^{(2)}$) and as sharply increasing rejection rates as possible for increasing (decreasing) values of $\pi$.}
	\label{fig:sim_VaRES_GARCH}
\end{figure}

Figure \ref{fig:sim_VaRES_GARCH} shows size-adjusted power curves for the joint VaR and ES and the auxiliary ES tests based on the three link functions for a nominal significance level of $5\%$ and for the various settings described above.\footnote{Figure \ref{fig:sim_VaRES_GARCH_rawpower} in the supplementary material shows the corresponding raw power of the tests.}
For computing the size-adjusted power, we follow the approach of \cite{DavidsonMacKinnon1998}.
For an increasing degree of misspecification through $\pi$, we find increasing power throughout all considered tests and processes.
Both, the convex and no-crossing link function specifications exhibit better (size-adjusted) power than the linear link function throughout all considered processes, sample sizes and values of $\pi$.
While the convex link function exhibits a slightly superior performance for the joint test, the no-crossing link function performs slightly better for the auxiliary ES test.
We provide simulation results for two additional processes outside the location-scale family in Section \ref{sec:add_simulation} in the supplementary material, where the results for these forecasts are comparable to those obtained here.

\subsection{Multi-Step Ahead and Aggregate Forecasts}
\label{sec:multistep_forecasts}

In this section, we consider multi-step ahead and multi-step aggregate forecasts for the VaR and ES.
For any $h > 1$, we set $Y_{j,t+h} = r_{j,t+h}$ for multi-step \textit{ahead} forecasts, and $Y_{j,t+h} = \sum_{s=1}^h r_{j,t+s}$ for multi-step \textit{aggregate} forecasts, where the returns $r_{j,t+h}$ are simulated from the respective GARCH specifications in \eqref{eqn:GARCHModel} - \eqref{eqn:GARCHresiduals} for $j=1,2$.
In order to simulate returns which follow a (probabilistic) convex combination of these two processes, we simulate Bernoulli draws $\pi_{t+h} \sim \operatorname{Bern}(\pi)$ for 11 equally spaced values of $\pi \in [0,1]$, and let $Y_{t+h}  = (1-\pi_{t+h})  Y_{1,t+h}  + \pi_{t+h} Y_{2,t+h}$. 

\cite{WongSo2003} and \cite{Loennbark2016} among others illustrate that even though the conditional variance of multi-step ahead (aggregate) forecasts for (quadratic) GARCH models is easily tractable, the entire conditional distribution is not.
This implies that multi-step ahead (aggregate) VaR and ES forecasts cannot be obtained equivalently to one-step ahead forecasts by simply multiplying their conditional multi-step ahead (aggregate) volatilities with the quantile or ES of the residual distribution.
Consequently, we employ a simulation method proposed by \cite{WongSo2003} which yields very accurate approximations of the true VaR and ES forecasts:
for all out-of-sample time points $t \in \mathfrak{T}$, we simulate $R=10000$ sample paths from the respective GARCH model for $h$ days into the future and in order to obtain multi-period ahead (aggregate)  VaR and ES forecasts, we (point-wisely) take the empirical quantile and ES over the $R$ sample paths of the simulated $h$-period ahead (aggregated)  returns.

Here, we restrict attention to the DGP based on Gaussian residuals, the convex link function and on the joint VaR and ES encompassing test as the $t$-distributed residuals and the auxiliary tests perform comparably in the previous section.
However, we consider $h$-step ahead and $h$-step aggregate VaR and ES forecasts with forecasting horizons of $h=1,2,5$ and 10 days. This allows to investigate the properties of the test for increasing forecast horizons $h$.
We employ a HAC estimator with the embedded \textit{scl-sp} estimator of \cite{DimiBayer2019} for the contemporaneous variance as described in Section \ref{sec:AsymptoticTheory}, as in particular the multi-period aggregate forecasts exhibit a correlated behavior due to their inherently overlapping nature.
In Section \ref{sec:cov_choice} in the supplementary material, we discuss four different covariance estimators and show that the HAC estimator augmented with the \textit{scl-sp} estimator performs best.

\begin{table}[th!]
	\footnotesize
	\centering
	\caption{Empirical Test Sizes for Multi-Step Ahead and Aggregate Forecasts.}
	\label{tab:Multistep}
	\begin{tabularx}{0.75\linewidth}{l l @{\hspace{0.5cm}} rrrr l @{\hspace{1cm}} rrrr }
		\hline
		\hline
		\addlinespace
		& & \multicolumn{4}{c}{$\mathbb{H}_0^{(1)}$} & & \multicolumn{4}{c}{$\mathbb{H}_0^{(2)}$} \\
		\cmidrule(lr){3-6} \cmidrule(lr){8-11} 
		$h$ & &  1 &  2  & 5 & 10 &   &  1 &  2  & 5 & 10  \\ 
		\midrule
		$T$ & &\multicolumn{9}{c}{$h$-step ahead forecasts}\\
		\cmidrule(lr){3-11}
		$250$  &        & 10.85  & 11.87  & 10.16  &  7.95  &        & 9.17   & 9.18   & 8.27   & 6.35  \\
$500$  &        &  8.41  &  8.91  & 13.67  & 11.43  &        & 7.01   & 6.12   & 7.59   & 8.43  \\
$1000$ &        &  6.83  &  6.70  & 10.87  & 13.15  &        & 3.51   & 4.62   & 5.52   & 6.51  \\
$2500$ &        &  4.80  &  4.80  &  6.80  & 11.45  &        & 4.21   & 4.52   & 5.92   & 6.83  \\
$5000$ &        &  3.80  &  4.12  &  5.02  &  9.80  &        & 3.90   & 4.30   & 6.61   & 5.81  \\
 
		\midrule
		$T$ & &\multicolumn{9}{c}{$h$-step aggregate forecasts}\\
		\cmidrule(lr){3-11}
		$250$  &        & 11.46  & 13.85  & 22.78  & 31.54  &        &  8.98  & 11.08  & 20.58  & 26.22 \\
$500$  &        &  8.41  & 13.02  & 20.50  & 30.74  &        &  7.01  & 10.10  & 18.07  & 23.01 \\
$1000$ &        &  6.63  &  9.04  & 16.72  & 26.15  &        &  3.61  &  6.46  & 14.04  & 18.69 \\
$2500$ &        &  4.80  &  6.73  & 10.43  & 19.17  &        &  4.11  &  4.52  &  9.28  & 12.24 \\
$5000$ &        &  4.10  &  4.42  &  8.52  & 14.36  &        &  4.40  &  4.02  &  7.02  &  8.72 \\
  
		\addlinespace
		\hline
		\hline
		\addlinespace
		\multicolumn{11}{p{.72\linewidth}}{\textit{Notes:} This table shows test sizes for the joint VaR and ES forecast encompassing test based on the convex link function with a nominal size of $5\%$.
		We simulate data from the two GARCH specifications in \eqref{eqn:GARCHModel} - \eqref{eqn:GARCHresiduals} with normal innovations and consider $h$-step ahead and $h$-step aggregate forecasts for $h=1,2,5,10$.
		}
	\end{tabularx}
\end{table}

Table \ref{tab:Multistep} reports the tests sizes and Figure \ref{fig:sim_multi_VaRES} presents size-adjusted power\footnote{The size-adjusted power plots for $h=10$ and $T \in \{250,500,1000\}$ in Figure \ref{fig:sim_multi_VaRES} exhibit test sizes under the null hypotheses slightly above $5\%$. These are an artifact stemming from the fact that slightly more than $5\%$ of the simulated $p$-values are exactly zero, rendering an \textit{exact} size-adjustment in the sense of \cite{DavidsonMacKinnon1998} infeasible.} plots of the joint VaR and ES encompassing test for multi-step ahead and multi-step aggregate forecasts for a nominal significance level of $5\%$.\footnote{Figure \ref{fig:sim_multi_VaRES_RawPower} in the supplementary material shows the corresponding raw power.
Table \ref{tab:Multistep_AuxES}, Figure \ref{fig:sim_multi_AuxES_SA} and Figure \ref{fig:sim_multi_AuxES} in the supplementary material show test results for the auxiliary ES encompassing test.}
The encompassing tests for $h$-step ahead forecasts are well-sized, especially for larger sample sizes and for small horizons $h$.
The empirical sizes deteriorate slightly with an increasing forecast horizon $h$. 
While the general behavior is similar for $h$-step aggregate forecasts, these tests suffer considerably more from an increase of the forecasting horizon $h$.
The inferior performance of multi-period aggregate forecasts is not surprising given that the moment conditions of the aggregate forecasts are heavily correlated due to the overlapping definition of the aggregate forecasts.

\begin{figure}[p!]
	\includegraphics[width=\linewidth]{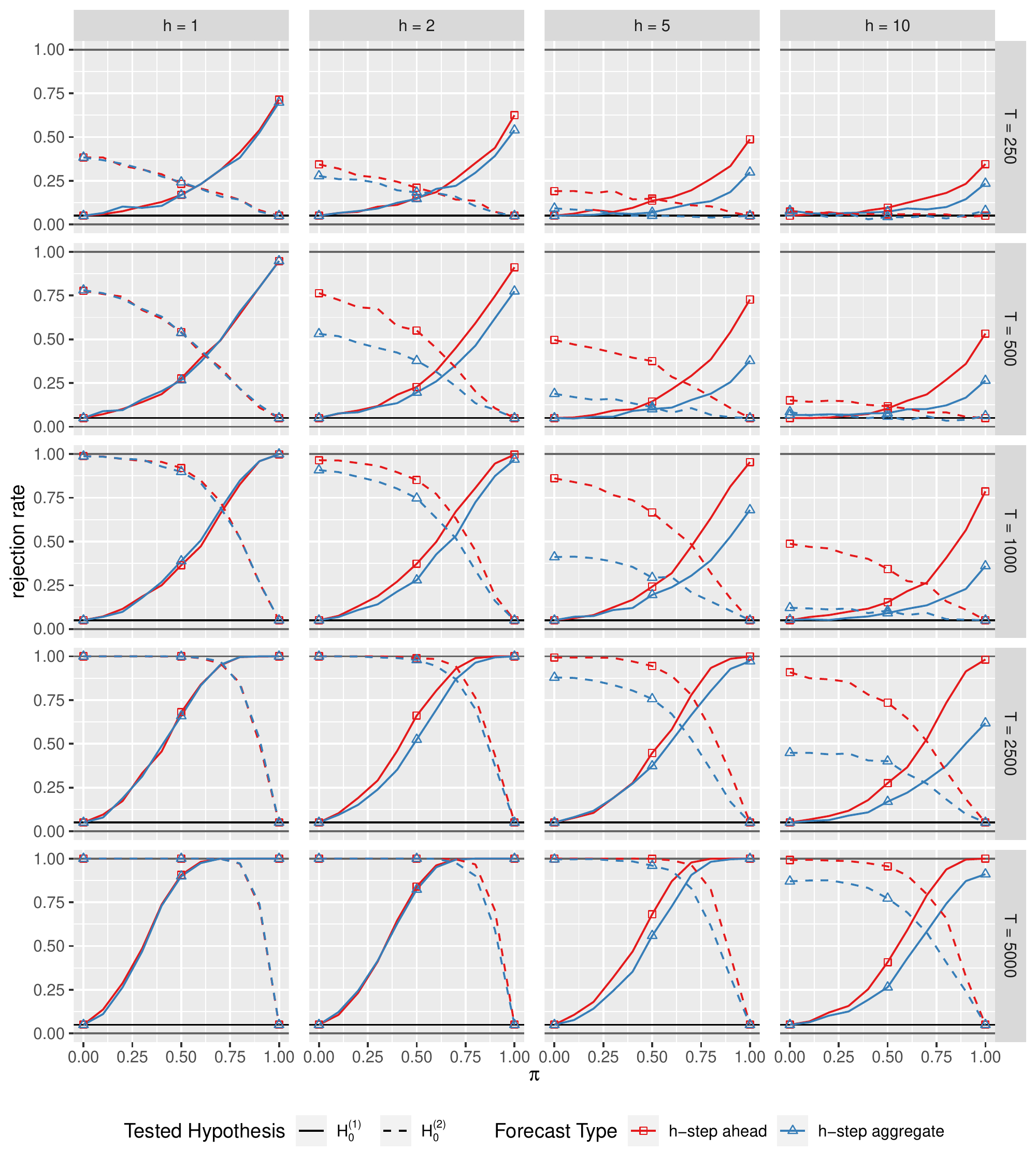}
	\caption{This figure shows size-adjusted power curves for the joint VaR and ES encompassing test with a nominal size of $5\%$, for $h$-step ahead and aggregate forecasts indicated with different colors, and for the two tested null hypotheses indicated with different line types.
		The plot rows depict different sample sizes, while the plot columns refer to different forecast horizons $h$.
		An ideal test exhibits a rejection frequency of $5\%$ for $\pi=0$ and for $\mathbb{H}_0^{(1)}$ (and inversely for $\pi=1$ and $\mathbb{H}_0^{(2)}$) and as sharply increasing rejection rates as possible for increasing (decreasing) values of $\pi$. 
		Note that we use a Bernoulli draw based combination method in this section as opposed to the variance combination in Section \ref{sec:onestep_forecasts} and hence, the results of the one-step ahead forecasts are not necessarily identical.}
	\label{fig:sim_multi_VaRES}
\end{figure}

Concerning the size-adjusted power, depicted in Figure \ref{fig:sim_multi_VaRES}, we observe similar patterns. 
For $h=1,2$, the size-adjusted power increases substantially for an increasing degree of misspecification for all considered settings.
For longer forecast horizons $h=5,10$, the test power is generally lower for both forecast types.
As before, the encompassing tests for $h$-step ahead forecasts exhibit better properties than for $h$-step aggregate forecasts. 
This can again be explained by the inherent correlation in $h$-step aggregate forecasts which necessitates the use of a sufficiently large amount of observations for a consistent estimation of the asymptotic covariance matrix together with its nuisance quantities and autocorrelation structure.
Hence, with an increasing forecast horizon $h$, a larger out-of-sample period is required to obtain encompassing tests with reliable test decisions.  
Furthermore, small sample sizes paired with large forecast horizons (e.g., $T=250$ and $h=10$) yield almost flat (size-adjusted) power curves which implies that the test becomes unreliable and practical applications should be interpreted very carefully in these scenarios.\footnote{Along these lines, \cite{Harvey2017} notice similar small-sample issues for forecast encompassing tests and tests for equal predictive ability \citep{DieboldMariano1995} for multi-step ahead forecasts.}
This negative result is remarkable concerning the planned evaluation of 10-day ahead aggregate ES forecasts \cite[p.89]{Basel2019}.

\section{Empirical Application}
\label{sec:Application}

This section empirically illustrates the usefulness of the proposed encompassing tests by comparing alternative VaR and ES forecasts for daily S\&P\,500 returns from August 4, 2000 to June 19, 2020 including a total of 5000 daily observations.
We conduct a rolling window forecasting scheme with $S=2000$ estimation observations, and $T=3000$ evaluation points starting on July 22, 2008. 
We follow the Basel Accords \citep{Basel2017, Basel2019} and employ $\alpha = 2.5\%$.
Based on the simulation results of Section \ref{sec:Simulations}, we restrict our attention to the tests based on the convex link functions with intercepts in the empirical application.

Particularly, we consider a total of eleven competing risk models for forecasting VaR and ES, including: 
(i) a rolling Historical Simulation using the window length of 250 days, 
(ii) the RiskMetrics model,
(iii) the GARCH(1,1) model with normal innovations (GARCH-N), and the GJR-GARCH(1,1) model of \cite{Glosten1993} with skewed Student-$t$ distributed innovations (GJR-ST),
(iv) the GARCH and GJR-GARCH models with asymmetric Laplace innovations (GARCH-AL and GJR-AL) and the same models with a time varying shape parameter (GARCH-AL-TVP and GJR-AL-TVP) of \cite{Chen2012},
(v) the symmetric absolute value (SAV-) and asymmetric slope (AS-) CAViaR-ES models of \cite{Taylor2019}, 
and (vi) the one factor GAS model (GAS-1F) of \cite{Patton2019}. 
Details for the risk models of \cite{Chen2012}, \cite{Taylor2019} and \cite{Patton2019} are given in Section \ref{sec:RiskModels} in the supplementary material and an additional absolute evaluation in the form of backtests for these models is given in Section \ref{sec:AbsEvaluation} in the supplementary material.

\subsection{One-Step Ahead Forecasts}
\label{sec:EncompOneRes}

In this subsection, we analyze pairwise encompassing for one-step ahead VaR and ES forecasts using the encompassing tests based on the convex link functions.
For each model pair, we estimate the combination weights and test both null hypotheses, i.e.\ that model one encompasses model two and vice versa.
We obtain simulated critical values for the test through Theorem \ref{thm:WaldTestAsymptoticDistribution} and by employing the \textit{scl-sp} estimator of \cite{DimiBayer2019}.
Due to the simulation results of Section \ref{sec:cov_choice} in the supplementary material, we do not consider estimation of HAC-terms in the covariance for one-step ahead forecasts.

\begin{table}[t]
	\footnotesize
	\centering
	\caption{Empirical Encompassing Test Results for One-Step Ahead Forecasts}
	\label{tab:OneSummary}
	\begin{tabularx}{\linewidth}{l @{\hspace{0.5cm}}ccccc @{\hspace{0.4cm}}  ccccc @{\hspace{0.1cm}} c}
	\hline\hline
	\addlinespace
	& \multicolumn{4}{c}{Joint VaR and ES Test} & & \multicolumn{4}{c}{Auxiliary ES Test} \\ 
	\cmidrule(lr){2-5} \cmidrule(lr){7-10} 
	\addlinespace			
	{Models} & {E'ing} & {E'ed} & {Comb} & {Incon}  & & {E'ing} & {E'ed} & {Comb} & {Incon} & & {Avg.\ Weights}\\
	\addlinespace
	\hline 
	\addlinespace
	{GJR-ST} & {10} & {0} & {0} & {0} & & {9} & {0} & {0} & {1} && {(0.82, 0.98)}  \\
	\addlinespace
	{GJR-AL-TVP} & {6} & {1} & {3} & {0} & & {8} & {1} & {0} & {1} && {(0.86, 0.71)} \\
	\addlinespace
	{AS-CAViaR-ES} & {5} & {3} & {1} & {1} & & {6} & {0} & {0} & {4} && {(0.56, 0.66)}  \\
	\addlinespace
	{GARCH-AL} & {4} & {1} & {5} & {0}  & & {2} & {2} & {2} & {4} && {(0.62, 0.63)}  \\
	\addlinespace
	{GARCH-AL-TVP} & {3} & {1} & {5} & {1} & & {2} & {3} & {3} & {2} && {(0.47, 0.60)} \\
	\addlinespace
	{GJR-AL-CP} & {3} & {1} & {3} & {3}  & & {6} & {3} & {0} & {1} && {(0.47, 0.68)} \\
	\addlinespace
	{GARCH-N} & {2} & {6} & {1} & {1} & & {2} & {6} & {0} & {2} && {(0.46, 0.35)} \\
	\addlinespace
	{SAV-CAViaR-ES} & {1} & {4} & {3} & {2} & & {2} & {4} & {1} & {3} && {(0.45, 0.39)}  \\
	\addlinespace
	{RiskMetrics} & {1} & {5} & {1} & {3} & & {1} & {4} & {2} & {3} && {(0.39, 0.21)} \\
	\addlinespace
	{GAS-1F} & {1} & {7} & {1} & {1}  & & {1} & {8} & {0} & {1} && {(0.27, 0.20)}  \\
	\addlinespace
	{Historical Sim} & {0} & {7} & {3} & {0} & & {0} & {8} & {2} & {0} && {(0.09, 0.06)} \\
	\addlinespace
	\hline 
	\hline 
	\addlinespace
	\multicolumn{12}{p{0.98\linewidth}}{\textit{Notes:} This table reports a summary of the test results of the joint VaR and ES and the auxiliary ES encompassing tests based on the convex link function for one-step ahead forecasts. 
	Entries for ``E'ing'' represent the number of occurrences (out of 10) that a row-heading model encompasses a competing model.
	Similarly, ``E'ed'' represent the frequencies that the row-heading model is encompassed, ``Comb'' that neither model encompasses its competitor, and ``Incon'' that both models encompass each other.
	The column ``Avg.\ Weights'' shows the estimated convex combination weights $(\theta_{1},\theta_{2})$, averaged over the 10 estimates for each model.}
	\end{tabularx}
\end{table}

We report the summarized results for the joint VaR and ES and the auxiliary ES encompassing tests with a significance level of $5\%$ for all pairwise combinations of the eleven risk models in Table \ref{tab:OneSummary}, where the models (in the table rows) are sorted according to their encompassing performance.
Out of the ten model combinations each individual model is subject to, we report the instances how often both null hypotheses are rejected (denoted by "Combination" or "Comb"), not rejected ("Inconclusive" or "Incon"), only the first one is rejected ("Encompassed" or "E'ed"), and only the second one is rejected ("Encompassing" or "E'ing").
Notice that the "Combination" column is based on rejecting both null hypotheses, which constitutes a multiple testing problem and the results have to be interpreted at a Bonferroni corrected significance level of $10\%$, while each individual tests are based on a nominal significance level of $5\%$.\footnote{Table \ref{tab:corr_one} in the supplementary material reports the correlations of the VaR and ES forecasts and Table \ref{tab:JointOne} 
additionally reports the estimated (convex) combination weights together with the test decisions for the combinations of the six bestperforming models, chosen by the absolute evaluation in Table \ref{tab:AbsEvaluation}.}

We find that the GJR-GARCH models with Skew-t and asymmetric Laplace innovations achieve the best forecasting performance among the competing models.
Interestingly, the CAViaR-ES models of \cite{Taylor2019} and the GAS-1F model of \cite{Patton2019}, which are specifically developed for jointly forecasting VaR and ES, generally do not perform as good as the GARCH specifications.
As expected, the RiskMetrics and Historical Simulation models perform  worst.
Furthermore, we find many instances of rejections of both encompassing hypotheses, implying that a forecast combination via the estimated encompassing weights is superior to both individual models.
This result justifies the usefulness of the proposed encompassing tests, and is in line with the arguments for forecast combinations of \cite{GiacominiKomunjer2005}, \cite{Timmermann2006}, \cite{Taylor2020} and \cite{DimiSchnaitmann2020}.

\subsection{10-Step Ahead and Aggregate Forecasts}
\label{sec:EncompMultiRes}

In this subsection, we apply the proposed encompassing tests to 10-day ahead and aggregate VaR and ES forecasts.
Note that 10-day aggregate VaR and ES forecasts are required by the Basel Accords for minimal capital requirement and risk weighted assets \citep{Basel2019, Basel2020}.
As it is unclear how to obtain multi-step ahead forecasts from the CAViaR-ES models of \cite{Taylor2019} and the GAS-1F model of \cite{Patton2019}, we reduce the set of evaluation models to the seven members of the GARCH family.
For these models, we obtain multi-step ahead and multi-step aggregate VaR and ES forecasts through the simulation method of  \cite{WongSo2003}, further described in Section \ref{sec:multistep_forecasts}.
Such a simulation-based forecasting is necessary as the conditional distribution of multi-step returns generally differs from the imposed innovation distribution of the model, and thus, VaR and ES forecasts cannot be obtained through classical location-scale formulas as for one-step ahead forecasts.
Based on the results of Section \ref{sec:cov_choice} in the supplementary material, we use a HAC covariance estimator \citep{NeweyWest1987}, augmented with the \textit{scl-sp} estimator of \cite{DimiBayer2019} for the contemporaneous variance component to perform the encompassing tests for multi-step ahead and aggregate forecasts.

\begin{table}[ph!]
	\footnotesize
	\centering
	\caption{Encompassing Test Results for 10-Step Ahead and Aggregate Forecasts}
	\label{tab:Ahead}
	\begin{tabularx}{\linewidth}{l @{\hspace{0.5cm}}ccccc @{\hspace{0.4cm}}  ccccc @{\hspace{0.1cm}} c}
	\hline 
	\hline
	\addlinespace
	\multicolumn{12}{c}{10-Step Ahead Forecasts} \\
	\addlinespace
	\hline
	\addlinespace
	& \multicolumn{4}{c}{Joint VaR and ES Test} & & \multicolumn{4}{c}{Auxiliary ES Test} \\ 
	\cmidrule(lr){2-5} \cmidrule(lr){7-10} 
	\addlinespace			
	{Models} & {E'ing} & {E'ed} & {Comb} & {Incon} & & {E'ing} & {E'ed} & {Comb} & {Incon} & {Avg.\ Weights}\\
	\addlinespace
	\hline 
	\addlinespace
	{GJR-AL-TVP} & {6} & {0} & {0} & {0} &  & {5} & {0} & {0} & {1} & {(0.98, 0.99)} \\
	\addlinespace
	{GARCH-AL-TVP} & {5} & {1} & {0} & {0} & & {3} & {0} & {0} & {3} & {(0.74, 0.84)}  \\
	\addlinespace
	{GJR-ST} & {2} & {2} & {0} & {2} & & {3} & {1} & {0} & {2} & {(0.63, 0.49)}  \\
	\addlinespace
	{GARCH-N} & {1} & {2} & {0} & {3} & &  {1} & {2} & {0} & {3} & {(0.51, 0.40)}  \\
	\addlinespace
	{GARCH-AL} & {1} & {3} & {0} & {2} & &  {0} & {2} & {0} & {4} & {(0.26, 0.30)}  \\
	\addlinespace
	{GJR-AL} & {0} & {5} & {0} & {1} & &  {0} & {4} & {0} & {2} & {(0.24, 0.32)} \\
	\addlinespace
	{RiskMetrics} & {0} & {2} & {0} & {4} & &  {0} & {3} & {0} & {3} & {(0.21, 0.23)}  \\
	\addlinespace
	\hline\hline
	\\
	\addlinespace
	\multicolumn{12}{c}{10-Step Aggregate Forecasts} \\
	\addlinespace
	\hline
	\addlinespace
	& \multicolumn{4}{c}{Joint VaR and ES Test} & & \multicolumn{4}{c}{Auxiliary ES Test} \\ 
	\cmidrule(lr){2-5} \cmidrule(lr){7-10} 
	\addlinespace			
	{Models} & {E'ing} & {E'ed} & {Comb} & {Incon} & & {E'ing} & {E'ed} & {Comb} & {Incon} & {Avg.\ Weights}\\
	\addlinespace
	\hline 
	\addlinespace
	{GJR-AL-TVP} & 6 & 0 & 0 & 0 & & {5} & {0} & {0} & {1} & (0.86, 0.93)  \\
	\addlinespace
	{GARCH-AL-TVP} & 4 & 1 & 0 & 1 & & {5} & {0} & {0} & {1} & (0.95, 0.86)  \\
	\addlinespace
	GJR-ST & 3 & 1 & 0 & 2 & & {3} & {2} & {0} & {1} & (0.51, 0.34) \\
	\addlinespace
	{GARCH-N} & 2 & 3 & 0 & 1 & & {2} & {2} & {0} & {2} & (0.41, 0.47) \\
	\addlinespace
	{GJR-AL} & 1 & 2 & 1 & 2 & & {2} & {3} & {0} & {1} &  (0.37, 0.42) \\
	\addlinespace
	{GARCH-AL} & 1 & 4 & 1 & 0 & & {1} & {5} & {0} & {0} &  (0.33, 0.42) \\
	\addlinespace
	{RiskMetrics} & 0 & 6 & 0 & 0 &  & {0} & {6} & {0} & {0} & (0.04, 0.03) \\
	\addlinespace
	\hline 
	\hline 
	\addlinespace
	\multicolumn{12}{p{0.98\linewidth}}{\textit{Notes:} This table reports a summary of the test results of the joint VaR and ES and the auxiliary ES encompassing tests based on the convex link function, for 10-step ahead forecasts in the upper panel and for 10-step aggregate forecasts in the lower panel.
	Entries for ``E'ing'' represent the number of occurrences (out of 6) that a row-heading model encompasses a competing model.
	Similarly, ``E'ed'' represent the frequencies that the row-heading model is encompassed, ``Comb'' that neither model encompasses its competitor, and ``Incon'' that both models encompass each other.
	The column ``Avg.\ Weights'' shows the estimated convex combination weights $(\theta_{1},\theta_{2})$, averaged over the 10 estimates for each model.}
	\end{tabularx}
\end{table}

Table \ref{tab:Ahead} reports the summarized encompassing test results for 10-step ahead and aggregate forecasts.\footnote{Table \ref{tab:JointDetailsMultiAhead} and Table \ref{tab:JointDetailsMultiAggregate} in the supplementary material report the detailed test results. Table \ref{tab:corr_ahead} additionally reports correlations for the 10-day ahead and aggregate VaR and ES forecasts.}
The test results show that for both, 10-step ahead and aggregate forecasts, the best performing model is the GJR-GARCH model with asymmetric Laplace innovations and a time-varying shape parameter.
We find almost  no cases of double rejections, i.e.\ forecast combinations are not (significantly) preferred over the stand-alone models.
This can be an artifact from the lower power for multi-step forecasts as illustrated in Section \ref{sec:multistep_forecasts} or from the high(er) correlations of the forecasts, reported in Table \ref{tab:corr_ahead} in the supplementary material.

Overall, the empirical results show that a model specified with an asymmetric volatility process and a skewed error distribution, such as the GJR-ST model, outperforms the competing models considered in this paper for one-step ahead VaR and ES forecasts.
Moreover, models based on an asymmetric innovation distribution with time-varying parameters, such as, GJR-AL-TVP and GARCH-AL-TVP models, perform better than the other competing models. 
Note that the time-varying scale parameter of the asymmetric Laplace distribution produces both time-varying skewness and kurtosis for the innovation distribution. 
We find that specifying time-varying higher moments for a risk model substantially improves the model forecasting performance in both multi-step ahead and aggregate risk forecasts, much more than in one-step ahead forecasts.

\section{Conclusion}
\label{sec:Conclusion}

This article proposes joint encompassing tests which compare one-step and multi-step  VaR and ES forecasts based on general semiparametric forecast combination methods (link functions) for the VaR and ES.
While unrestricted linear methods are often employed in encompassing tests for functionals like the mean and quantiles (the VaR) as e.g.\ in \cite{HendryRichard1982, GiacominiKomunjer2005}, different combination methods are of particular interest for the ES.
E.g., our \textit{no-crossing} link specification theoretically circumvents crossings of the predicted VaR and ES, which is conceptually desirable but not straight-forward to achieve \citep{Taylor2020}.

Our employed link functions imply that some of the tested parameters are on the boundary of the parameter space under the null hypothesis, which necessitates non-standard asymptotic theory.
Based on the general framework of \cite{Andrews1999, Andrews2001}, we provide such novel asymptotic theory for the proposed encompassing tests and for the accompanying Wald test statistics, which allows for inference and testing on the boundary.
Our simulations show that the proposed  VaR and ES forecast encompassing tests based on the convex and no-crossing link functions exhibit superior size and power properties than those based on unrestricted linear link functions.
By employing the proposed encompassing tests in a real data analysis, we find that building risk models on specifications including time-varying higher moments substantially improves the model forecasting performance, especially for multi-step ahead and aggregate VaR and ES forecasts.

Our framework allows for several straight-forward extensions.
Encompassing tests for multiple VaR and ES forecasts in the sense of \cite{HarveyNewbold2000} can directly be implemented through our asymptotic theory by adapting the link functions.
Furthermore, the asymptotic theory allows for encompassing tests based on any strictly consistent loss function for the VaR and ES.
Incorporating estimation risk into these tests can be obtained by combining our theory with the work of \cite{EscancianoOlmo2010, Du2017, BarendseKole2019}, and incorporating model misspecification through combining our theory with the one of \cite{DimiSchnaitmann2020}.
Encompassing tests for different functionals such as e.g., the mean, quantiles, expectiles or probability densities based on link functions which require testing on the boundary (e.g., using convex link functions) can be implemented through adapting our asymptotic theory to semiparametric models for the functional under consideration.
Eventually, our asymptotic theory can be used to test (e.g., for nullity of) model parameters on the boundary of the parameter space for the semiparametric VaR and ES models of \cite{Patton2019}, \cite{Taylor2019} or \cite{Gerlach2019}, along the lines of \cite{Francq2009}.

\section*{Acknowledgments}
	Our work has been supported by the University of Hohenheim, the Klaus Tschira Foundation and the University of Konstanz.
	A previous version of this paper circulated with the title "A Regression-based Joint Encompassing Test for Value-at-Risk and Expected Shortfall Forecasts".

\onehalfspacing
\setlength{\bibsep}{5pt}
\bibliographystyle{apalike}	
\bibliography{bib_ConvexESencmp}

\appendix
\doublespacing

\section{Proofs}
\label{sec:Proofs}

	\begin{proof}[Proof of Theorem \ref{thm:GeneralAsymptoticDistribution}]
		For this proof, we employ Theorem 3 of \cite{Andrews1999} (or equivalently Theorem 1 of \cite{Andrews2001}), for which we	verify the necessary Assumptions 1-6 of \cite{Andrews1999} in the following.
		
		We start by showing Assumption 1, i.e.\ the consistency of $\hat \theta_T$.
		For this, we employ Theorem 2.1 of \cite{NeweyMcFadden1994}.
		Assumption $(i)$, i.e.\ that $l(\theta)$ is uniquely minimized by $\theta^0$ follows directly from the identification condition \ref{cond:FullRankConditionNormality} and from the strict consistency result of the loss functions of \cite{Fissler2016}.
		Condition $(ii$) follows directly as we impose that $\Theta$ is compact.
		Condition $(iii)$ holds as $l(\theta)$ is continuous for all $\theta \in \Theta$ as the distribution $F_t$ is absolutely continuous and we use continuously differentiable functions $\mathfrak{g}$, $\phi$, $g^q$ and $g^e$.
		The uniform consistency of $T^{-1} l_T(\theta)$ of condition $(iv)$ is shown by employing Theorem 21.9 of \cite{Davidson1994}.
		For this, we need that a point-wise law of large numbers holds for $T^{-1} l_T(\theta)$ for all $\theta \in \Theta$, which can be verified e.g., by employing Corollary 3.48 of \cite{White2001}.
		This holds as $\rho_t(\theta)$ is $\alpha$-mixing of size $-r/(r-1)$ for $ r > 1$ from condition \ref{cond:BetaMixing} as $\beta$-mixing series are also $\alpha$-mixing of same size by \cite{Bradley2005} and $\mathbb{E}\big[ |\rho_t(\theta)|^{2r} \big] < \infty$ for all $\theta \in \Theta$ by condition \ref{cond:MomentCondition}.
		Furthermore, the sequence $l_T(\theta)$ is stochastically equicontinuous by Lemma \ref{lemma:StochasticEquicontinuity} in the supplementary material.
		Thus, $\sup_{\theta \in \Theta} \left| T^{-1} l_T(\theta) - l(\theta) \right| \toP 0$ and consistency of $\hat \theta_T$ follows from Theorem 2.1 of \cite{NeweyMcFadden1994}.
		
		Assumption 2$^{\ast}$ of \cite{Andrews1999} is shown through the sufficient condition Assumption 1$^{2\ast}$ on page 53 in \cite{Andrews1997}.
		For this, condition $(a)$ holds trivially, and condition $(b)$ follows directly from the uniform consistency result of $T^{-1} l_T(\theta)$.
		For condition $(c)$, we set $\Theta^+ = \Theta$. Given condition \ref{cond:ParameterSpace}, locally to $\theta^0$, $\Theta$ equals a union of (Cartesian) orthants.
		For condition $(d)$, we notice that $l(\theta)$ is twice continuously differentiable on the interior of $\Theta$ and has partial right/left derivatives on the boundary of $\Theta$ of order one and two.
		It further holds that $\nabla_\theta l(\theta^0) = 0$ as the function $l(\theta)$ is uniquely minimized by $\theta^0$.
		Notice that this also holds for the respective directional derivatives if $\theta^0$ lies on the boundary of $\Theta$.
		
		Eventually, for condition $(e)$, Lemma \ref{lemma:TypeIVFunction} in the supplementary material shows that the functions $\rho_t(\theta)$ given in (\ref{eqn:JointLossESRegGeneral}) form a \textit{type IV} class (see \cite{Andrews1994}, p.2278) with index $p=2r$ such that by Theorem 6 in \cite{Andrews1994}, it satisfies Ossiander's $L^{2r}$-entropy condition and consequently has an $L^{2r}$-envelope.
		Furthermore, the moments 	
			$\mathbb{E} \left[ {\sup}_{\tilde{\theta} \in U(\theta, \delta) } \left|\left|  \rho_t (\tilde{\theta}) \right| \right|^{2r}  \right]^{1/{2r} } < \infty$
		are bounded by assumption.
		Consequently,  by Theorem 1 and Application 1 in \cite{Doukhan1995}, we obtain that the empirical process, given by $T^{-1/2} \sum_{t \in \mathfrak{T}} \big( \rho_t(\theta) - \mathbb{E}[ \rho_t(\theta) ] \big)$,	is stochastically equicontinuous (see the remark on p.410 of \cite{Doukhan1995}).
		Hence, the process $T^{-1} l_T(\theta) - l(\theta)$ is stochastically differentiable (see e.g.\ \cite{NeweyMcFadden1994}, p.2187 or the proof after Theorem 3.1 in \cite{Dobric1994}, which does not rely on the imposed iid assumption of that paper).
		Thus, all conditions in Assumption 1$^{2\ast}$ of \cite{Andrews1997} are fulfilled and hence, Assumption 2$^{\ast}$ of \cite{Andrews1999} holds.
		
		In the following, we verify Assumption 3$^\ast$ (which implies Assumption 3) of \cite{Andrews1999}, i.e.\ that $T^{-1/2} \sumt \psi_t(\theta^0) \tod G$, where $G \sim \mathcal{N}(0, \mathcal{I})$, and $\mathcal{I} = \Var \left( T^{-1/2} \sumt \psi_t(\theta^0) \right)$, with $\psi_t(\theta^0)$ given in (\ref{eqn:Psi_t}).
		By using the Cramer-Wold theorem, we instead show that $T^{-1/2} \sumt u^\top \psi_t(\theta^0) \tod u^\top G u$ for all $u \in \mathbb{R}^k$ where $||u|| =1$. 
		This holds as $Z_t$ is assumed to be $\beta$-mixing of size $-r/(r-1)$ for $ r > 1$ from condition \ref{cond:BetaMixing} and $\beta$-mixing implies $\alpha$-mixing of same size \citep{Bradley2005}.
		By Theorem 3.49 in \cite{White2001}, we then get that $u^\top \psi_t(\theta^0)$ are also $\alpha$-mixing of the same size.
		Furthermore, it holds that $\mathbb{E} \left[ \left| u^\top \psi_t(\theta^0) \right|^{2r} \right] < \mathbb{E} \left[ \sup_{\theta \in \Theta} ||\psi_t(\theta)||^{2r}  \right] < \infty$ condition \ref{cond:MomentCondition}  in Assumption \ref{assu:AsymptoticTheory}.
		The matrix $\mathcal{I} = \Var \left( T^{-1/2} \sumt \psi_t(\theta^0) \right)$ does not depend on $T$ as the process is assumed to be stationary.
		As $\mathcal{I}$ has full rank by condition \ref{cond:AsyCovPositiveDefinite}, it holds that $\Var \left( T^{-1} \sumt u^\top \psi_t(\theta^0) \right)
		\ge \lambda_{\text{min}} > 0$, where $\lambda_{\text{min}}$ is the smallest Eigenvalue of  $\mathcal{I}$.
		Consequently, applying Theorem 5.20 in \cite{White2001} delivers the asymptotic normality result.
		
		Following condition \ref{cond:ParameterSpace}, the parameter space is given as the product $\Theta = \mathcal{B}_1 \times \mathcal{B}_2 \times \Delta \times \Psi$, and each of these four spaces is given by (linear) inequality constraints.
		Consequently, $\Theta$ can also be expressed through a system of inequalities of the form $\Gamma_\theta \theta \le r_\theta$, for some matrix $\Gamma_\theta$ and vector $r_\theta$ of appropriate dimensions.
		Then, following equations (4.6) and (4.7) of \cite{Andrews1999}, the cone $\Lambda$ is given by 
		\begin{align}
			\Lambda = \{ \lambda \in \mathbb{R}^{k}:  \Gamma_{\theta}^{(b)} \lambda \le 0 \},
		\end{align}
		where $\Gamma_{\theta}^{(b)}$ consists of the rows of $\Gamma_\theta$ for which the inequality $\Gamma_{\theta} \theta^0 \le r_\theta$ is binding (i.e.\ it holds as an equality).
		As this specification of $\Lambda$ is a convex cone, this shows Assumption 5 and 6 of \cite{Andrews1999}, i.e.\ that $\Theta - \theta^0$ locally equals a convex cone $\Lambda \subset \mathbb{R}^k$.
		
		Consequently, we can apply Theorem 3 of \cite{Andrews1999} (or equivalently Theorem 1 of \cite{Andrews2001}), which completes the proof of this theorem.
	\end{proof}

\begin{proof}[Proof of Theorem \ref{thm:InferenceBeta}]
	The result follows directly from Theorem 2 of \cite{Andrews2001}. 
	Besides the assumptions of Theorem \ref{thm:GeneralAsymptoticDistribution} (of the present article), we further need to verify Assumptions 7 and  8 of \cite{Andrews2001}.
	Assumption 7$(a)$ is fulfilled by the imposed condition \ref{cond:BreadMatrixZeroBlock} in Assumption \ref{assu:AsymptoticTheory2}.
	Furthermore, Assumption 7$(b)$ follows directly from condition \ref{cond:ParameterSpace} and from the specification given in (\ref{eqn:ParamSpace}).
	Assumption 8 also holds trivially as $\delta$ is assumed to be in the interior of $\Theta$, which concludes this proof.
\end{proof}

\begin{proof}[Proof of Theorem \ref{thm:WaldTestAsymptoticDistribution}]
	In order to employ Theorem 6 of \cite{Andrews2001}, we verify the necessary Assumptions 9 and 12 of \cite{Andrews2001}. 
	For the verification of Assumptions 1-8, see the proof of Theorem \ref{thm:GeneralAsymptoticDistribution} and Theorem \ref{thm:InferenceBeta}.
	For Assumption 9, notice that testing $\beta_1 = \beta_1^\ast$ corresponds to the null hypothesis that $\theta \in \Theta_0 = \big\{ \theta = (\beta_1, \beta_2, \delta, \psi) \in \Theta: \beta_1 =\beta_1^\ast \big\}$.
	Consequently, Assumption 9$(a)$ is satisfied.
	Assumption 9$(b)$ holds as throughout the paper $B_T = \sqrt{T} I_k$ and Assumption 9$(c)$ follows directly from condition \ref{cond:ParameterSpace}.
	Eventually, Assumption 9$(d)$ follows as $\mathcal{B}_1$ and  $\mathcal{B}_2$ in condition \ref{cond:ParameterSpace} are given by separate inequality constraints.
	
	Assumption 12$^\ast(a)$ corresponds to Assumption 11$(a)$, which requires that the random variable $G \sim \mathcal{N}(0, \mathcal{I})$ (simplified for the case that the space $\Pi$ is single-valued).
	This follows directly from the proof of Theorem \ref{thm:GeneralAsymptoticDistribution} (where Assumption 3$^\ast$ of \cite{Andrews1999} is verified).
	Assumption 12$^\ast(b)$ follows directly from (\ref{eqn:WeightingMatrixWaldTest}) and the conditions Assumption 12$^\ast(c)$.
	
	Consistency of the "bread" matrix, $\hat{\mathcal{T}}_T \toP \mathcal{T}$ follows directly from Theorem 3 of \cite{Patton2019} and consistency of the HAC estimator $\hat{\mathcal{I}}_T \toP \mathcal{I}$ is shown in Lemma \ref{lemma:ConsistencyHAC} in the supplementary material.
	Notice for this that \textit{joint} convergence in probability $\big( \hat{\mathcal{T}}_T, \hat{\mathcal{I}}_T \big) \toP \big( \mathcal{T}, \mathcal{I} \big)$ follows directly from both variables converging in probability separately.
	Eventually, Assumption 12$^\ast(e)$ follows as the matrix $\mathcal{I}$ has full rank by assumption.
	
	Consequently, the conditions of Theorem 6 of \cite{Andrews2001} are satisfied and part $(d)$ yields that $W_T \tod \hat \lambda_{\beta_1}^\top V^{-1} \hat \lambda_{\beta_1}$,
	where $\hat \lambda = \big( \hat \lambda_{\beta_1}, \hat \lambda_{\beta_2}, \hat \lambda_{\delta}, \hat \lambda_{\psi} \big)$ is given in Theorem \ref{thm:InferenceBeta}.
\end{proof}

\newpage
\appendix
\doublespacing

\setcounter{page}{1}
\begin{center}
	SUPPLEMENTARY MATERIAL FOR   \vspace{10pt} \\
	{\Large\bf {Encompassing Tests for Value at Risk and Expected Shortfall Multi-Step Forecasts based on Inference on the Boundary} \vspace{10pt} } \\
	Timo Dimitriadis \qquad Xiaochun Liu \qquad Julie Schnaitmann \\
	\today \\
	\vspace{1cm} 
\end{center}

\onehalfspacing

All references to equations, sections, tables and figures starting with \textcolor{blue}{S.} refer to this supplement while the remaining references refer to the main document of the article.

\renewcommand{\thesection}{S.\arabic{section}}   
\renewcommand{\thepage}{S.\arabic{page}}  
\renewcommand{\thetable}{S.\arabic{table}}   
\renewcommand{\thefigure}{S.\arabic{figure}}   
\setcounter{section}{0}
\setcounter{table}{0}
\setcounter{figure}{0}

\section{Additional DGPs for the Simulation Study}	
\label{sec:add_simulation}

Following \cite{DimiSchnaitmann2020}, this section provides simulation results for two additional data generating processes (DGPs) outside the class of location-scale models as a robustness check for the proposed encompassing tests.

For the first additional simulation design, we introduce two specifications of generalized autoregressive score (GAS) models proposed by \cite{Creal2013}.
We generate $r_{1,t+1}$, $\hat q_{1,t}$ and $\hat e_{1,t}$ from a GAS model with Gaussian innovations, which corresponds to the standard GARCH(1,1) specification given in \eqref{eqn:GARCHModel}.
We obtain the second sequence of forecasts from a GAS model with Student-$t$ residuals with time-varying variance and degrees of freedom, given by
\begin{align}
\label{eqn:GAS_t}
(\hat \mu_2, \hat \sigma_{2,t}^2 , \hat \nu_{2,t})^\top = \kappa + B \cdot (\hat \mu_2, \hat \sigma_{2,t-1}^2 , \hat \nu_{2,t-1})^\top  + A H_t \nabla_t,
\end{align}
where $H_t \nabla_t$ is the forcing variable of the model, the scaling matrix $H_t$ is the Hessian and $\nabla_t$ the derivative of the log-likelihood function.
We calibrate both models to daily S\&P 500 returns resulting in the parameter values $\kappa = ( 0.0659,  0.00599, -1.737)$, $A =  \operatorname{diag}(0, 0.146, 7.563)$ and  $B =  \operatorname{diag}(0, 0.994, 7.381)$.
This model implies that $r_{2,t+1} \sim t_{\hat \nu_{2,t}} \big( \hat \mu_2, \hat \sigma_{2,t}^2 \big)$ and we obtain one-step ahead VaR and ES forecasts from this $t$-distribution.

In the second additional simulation setup, we implement the one-factor (1F) and two-factor (2F) GAS models for the VaR and ES of \cite{Patton2019}.
The 1F-GAS model evolves as
\begin{align}
\begin{aligned}
\label{eqn:GAS1F}
\hat q_{1,t} &= -1.164 \exp(\hat \kappa_{t}) \qquad \text{ and } \qquad 
\hat e_{1,t} = -1.757 \exp(\hat \kappa_{t}), \quad \text{ where }\\
\hat \kappa_{t} &=  0.995 \hat \kappa_{t-1} + \frac{0.007}{\hat e_{1,t-1}} \left( \frac{r_{1,t}}{\alpha}  \mathds{1}_{\{r_{1,t} \le \hat q_{1,t-1} \}}  -  \hat e_{1,t-1} \right).
\end{aligned}
\end{align}
The 2F-GAS model follows the specification
\begin{align}
\label{eqn:GAS2F}
\begin{pmatrix} \hat q_{2,t} \\ \hat e_{2,t}  \end{pmatrix}
= \begin{pmatrix} -0.009 \\ -0.010  \end{pmatrix}
+ \begin{pmatrix} 0.993 & 0 \\ 0 & 0.994  \end{pmatrix}
\begin{pmatrix} \hat q_{2,t-1} \\ \hat e_{2,t-1}  \end{pmatrix}
+ \begin{pmatrix} -0.358 & -0.351 \\ -0.003 & -0.003  \end{pmatrix} \lambda_t,
\end{align}
where the forcing variable is given by $\lambda_t = \big( \hat q_{2,t-1} ( \alpha - \mathds{1}_{\{r_{2,t} \le \hat q_{2,t-1} \}} ) , \, \mathds{1}_{\{r_{2,t} \le \hat q_{2,t-1} \}}  r_{2,t}  / \alpha - \hat e_{2,t-1} \big)^\top$.
For both models, $j=1,2$, we simulate $r_{j,t+1}  \sim \mathcal{N} \big( \hat \mu_{j,t}, \hat \sigma_{j,t}^2 \big)$, where the conditional mean and standard deviations are given by
$\hat \mu_{j,t} = \hat q_{j,t} - z_\alpha \frac{\hat e_{j,t} -  \hat q_{j,t}}{\xi_\alpha - z_\alpha}$ and $\hat \sigma_{j,t} = \frac{\hat e_{j,t} -  \hat q_{j,t}}{\xi_\alpha - z_\alpha}$,
such that $Q_\alpha(r_{j,t+1}|\mathcal{F}_t) = \hat q_{j,t}$ and  $\ES_\alpha(r_{j,t+1}|\mathcal{F}_t) = \hat e_{j,t}$ almost surely.
The parameter values for this model are obtained from Table 8 of \cite{Patton2019} and correspond to calibrated parameters to daily S\&P\,500 returns.

In order to simulate returns which follow a convex combination of these two conditional distributions (for both DGPs), we simulate Bernoulli draws $\pi_{t+1} \sim \operatorname{Bern}(\pi)$ for 11 equally spaced values of $\pi \in [0,1]$, and let $Y_{t+1} = r_{t+1}  = (1-\pi_{t+1}) r_{1,t+1}  + \pi_{t+1} r_{2,t+1}$.
Thus, for $\pi=0$, $Y_{t+1}$ follows the first model, for $\pi = 1$, $Y_{t+1}$ follows the second model, and for $\pi \in (0,1)$, $Y_{t+1}$ follows some convex combination of the two models.\footnote{
	While generating returns stemming from convex combinations of GARCH-type volatility models is straight-forward by using convex combinations of the conditional volatilities, this is not as simple for the more general GAS models considered in this section. 
	Consequently, we use this more involved approach based on Bernoulli draws in order to generate these convex model combinations. This is comparable to our combination approach of multi-step forecasts.
}

Table \ref{tab:Size_GAS} shows the empirical sizes for the joint VaR and ES and the auxiliary ES encompassing tests for both null hypotheses, both DGPs, the three different link functions and various out-of-sample sizes $T$ and Figure \ref{fig:sim_other} presents the empirical rejection frequencies. 
As for the two GARCH-based DGPs, the tests based on the convex and no-crossing link functions outperform the tests based on the linear link function.
Moreover, the auxiliary ES test performs slightly better than the joint VaR and ES encompassing test, especially in terms of its size properties.
This qualitatively confirms the result for the GARCH DGPs of Section \ref{sec:Simulations}.

\newpage
\section{Covariances Estimation}
\label{sec:cov_choice}

In this section, we compare the performance of the joint VaR and ES encompassing test for four different covariance estimators, where we consider multi-step ahead and multi-step aggregate forecasts at different forecast horizons. 
These estimators differ with respect to the estimation of the "meat" matrix $\mathcal{I}$, given in \eqref{eqn:I_Matrix}, where we consider an estimator based on HAC terms \citep{NeweyWest1987} and one without, paired with either the outer product of the gradient of $\psi_t(\theta)$, or the \textit{scl-sp} estimator of \cite{DimiBayer2019}.

More precisely, the first estimator is given by $\hat{\mathcal{I}}^{(1)}_{T} = \widehat{\Omega}_{T,0}$, where the contemporaneous covariance matrix is estimated by the outer product of the gradient of $\psi_t(\theta)$ as given in \eqref{eqn:HACEstimation}.
The second estimator is specified as $\hat{\mathcal{I}}^{(2)}_{T} = \widetilde{\Omega}_{T,0} $ 
where $\widetilde{\Omega}_{T,0}$ denotes the \textit{scl-sp} estimator of \cite{DimiBayer2019}.
The third specification employs a standard HAC estimator \citep{NeweyWest1987, Andrews1991} as in \eqref{eqn:HACEstimation}, based on an automatic lag selection implemented in the \texttt{R} package \texttt{sandwich} \citep{Zeileis2004, Zeileis2006}.
The last specification combines the HAC estimator with the \textit{scl-sp} estimator of \cite{DimiBayer2019} by replacing the outer product estimator of the contemporaneous variance by the \textit{scl-sp} estimator as in \eqref{eqn:HAC-SCLSP-Estimation}.

Figure \ref{fig:sim_multi_VaRES_iter} shows the empirical rejection frequencies for the joint VaR and ES encompassing test based on the four different covariance estimators for $h$-step ahead forecasts with forecast horizons $h=1,2,5,10$.
Figure \ref{fig:sim_multi_VaRES_aggr} presents equivalent results for $h$-step aggregate forecasts.
For one-step ahead forecasts, the respective lines for the HAC and non-HAC estimators coincide, which stems from the fact that the automatic lag selection almost exclusively chooses no additional lag terms beyond the contemporaneous variance term and hence, for one-step ahead forecasts, our encompassing tests do not require HAC-corrected covariance estimators.
The different performance stems from the estimation of the contemporaneous variance, where the closed-form solution based on the \textit{scl-sp} estimator performs clearly superior to the outer product based version. 

For $h$-step ahead forecasts for larger forecast horizons, the covariance estimator combining the \textit{scl-sp} estimator with additional HAC terms performs (only) slightly superior to the raw \textit{scl-sp} estimator.
However, for inherently correlated multi-step ahead aggregate forecasts, presented in Figure \ref{fig:sim_multi_VaRES_aggr}, this deviance becomes more obvious, especially for increasing forecast horizons.
Consequently, for any $h>1$ for both, $h$-step ahead and aggregate forecasts, we use the \textit{scl-sp} estimator augmented with additional HAC-terms.

\section{Risk Models for the Empirical Application}
\label{sec:RiskModels}

In this section, we describe the (non-standard) risk models of \cite{Chen2012}, \cite{Taylor2019} and \cite{Patton2019} for forecasting VaR and ES in the empirical application in Section \ref{sec:Application}.

\cite{Chen2012} proposes to use GARCH models with innovations which follow an asymmetric Laplace distribution in order to capture potential (dynamic) skewness and heavy tails.
In particular,
\begin{align}
r_{t} & = \sigma_{t} \left(\varepsilon_{t}-\mu_{\varepsilon}\right),\qquad \varepsilon_{t}\overset{iid}{\sim}AL(0,1,p)\label{eqn:Chendist}
\end{align}
where $\sigma_{t}$ follows either a GARCH(1,1) or a GJR-GARCH(1,1) specification, and where $AL\left(0,1,p\right)$ represents the asymmetric Laplace distribution with zero mode, unit variance, and shape parameter $p$, which is defined such that $p=\mathbb{P} \left(\varepsilon_{t}<0\right)$.
The $AL\left(0,1,p\right)$ probability density function has the following form
\begin{equation}
f \left(\varepsilon;p\right)=b_{p} \, \exp \left[-b_{p}\left|\varepsilon\right|\left(\frac{1}{p}  \mathds{1}_{\{ \varepsilon < 0 \}} + \frac{1}{1-p} \mathds{1}_{\{ \varepsilon > 0 \}}  \right)\right],
\label{eqn:ALdensity}
\end{equation}
where $b_{p}=\sqrt{p^{2}+\left(1-p\right)^{2}}$, $\text{Var}[\varepsilon_{t}]=1$ and $\mathbb{E} [\varepsilon_{t}]=\mu_{\varepsilon}=\left(1-2p\right)/b_{p}$.
Thus, $u_{t}=\varepsilon_{t}-\mu_{\varepsilon}$ has an asymmetric Laplace distribution with zero mean, unit variance, and the shape parameter $p$. Note that $p=0.5$ implies a symmetric, standard Laplace distribution. 
If $p<0.5$, the density is skewed to the right, while the opposite applies for $p>0.5$. 
The VaR and ES (in the relevant area\footnote{In practice $p$ is usually close to 0.5 and $\alpha << 0.5$ is often chosen for VaR and ES in financial risk management.} $\alpha \in (0,p)$) can then be obtained analytically as 
\begin{align*}
\hat q_{t} = \sigma_{t} \frac{p}{b_{p}} \log\left(\frac{\alpha}{p}\right) - \mu_{\varepsilon} \sigma_{t}, 
\qquad \text{ and } \qquad 
\hat e_{t} = \hat q_{t} -\frac{\hat q_{t}}{\log\left(\frac{\alpha}{p}\right)}.
\end{align*}
The GARCH and GJR-GARCH models with a constant shape parameter $p$ are denoted by GARCH-AL and GJR-AL, respectively.

\cite{Chen2012} further propose to augment these models with a time-varying shape parameter, which allows for dynamic higher moments for $r_{t}$, and whose dynamics are given by
\begin{equation*}
p_{t} = \frac{1}{1+\sqrt{\frac{\xi_{t}}{\zeta_{t}}}}
\end{equation*}
where $\xi_{t} = \left(1-\lambda\right)\left|u_{t-1}\right| \mathds{1}_{\{ u_{t-1}\ge 0 \}} + \lambda\xi_{t-1}$, and
$\zeta_{t} = \left(1-\lambda\right)\left|u_{t-1}\right| \mathds{1}_{\{ u_{t-1} < 0 \}}  + \lambda\zeta_{t-1}$, for some smoothing parameter $0 \le \lambda \le 1$.
The models with a time-varying shape parameter $p_{t}$ are denoted as GARCH-AL-TVP and GJR-AL-TVP.

\cite{Taylor2019} employs semiparametric models to forecast VaR and ES by augmenting the CAViaR models of \cite{Engle2004} with an additional component for the ES.
In particular, the author assumes that the conditional quantile $\hat q_{t}$ at level $\alpha$ follows either the symmetric absolute value (SAV) or the asymmetric slope (AS) CAViaR models, 
\begin{align}
\text{SAV}: \quad \hat q_{t} & =\beta_{0}+\beta_{1} \hat q_{t-1} + \beta_{2} \left|r_{t-1}\right|, \qquad \text{and} \label{eqn:SAV_CAViaR}\\
\text{AS}: \quad\hat  q_{t} & =\beta_{0} +\beta_{1}  \hat q_{t-1}+\beta_{2} \left|r_{t-1}\right| \mathds{1}_{\{ r_{t-1}\ge 0 \}} + \beta_{3} \left|r_{t-1}\right| \mathds{1}_{\{ r_{t-1} < 0 \}}. \label{eqn:AS_CAViaR}
\end{align}
Since the dynamics of the VaR may not be the same as the dynamics of the ES, \cite{Taylor2019} equips these CAViaR models with the following ES specification
\begin{align}
\hat e_{t} & = \hat q_{t}-x_{t} \label{eqn:TaylorES} \\
x_{t} & =
\begin{cases}
\kappa_{0} + \kappa_{1} \left( \hat q_{t-1} - r_{t-1} \right) + \kappa_{2} x_{t-1}  \quad &\text{if} \, r_{t-1} \le \hat q_{t-1} \\
x_{t-1} \quad &\text{otherwise},
\end{cases}
\nonumber 
\end{align}
where $\kappa_{0} > 0$ and $\kappa_{1},\kappa_{2}\ge 0$ ensure that $\hat e_{t} < \hat q_{t}$ for $\hat q_{t} < 0$. 
The model specification given by \eqref{eqn:SAV_CAViaR} an \eqref{eqn:TaylorES} is denoted as the SAV-CAViaR-ES model, and the model specified by \eqref{eqn:AS_CAViaR} and \eqref{eqn:TaylorES}  as the AS-CAViaR-ES model.
These models are estimated by quasi-maximum likelihood based on the asymmetric Laplace distribution, which corresponds to a special case of the M-estimator considered by \cite{Patton2019}, and given in (\ref{eqn:JointLossESRegGeneral}) and (\ref{eqn:DefQn}) of this article.
In particular, \cite{Taylor2019} shows that under the assumption of a zero (conditional) mean, the (negative) of the asymmetric Laplace log-likelihood corresponds (up to constants) to the loss function in (\ref{eqn:JointLossESRegGeneral}) with $\mathfrak{g}(z) = 0$ and $\phi(z) = -\log(-z)$.

Finally, we consider the one factor GAS model of \cite{Patton2019} (also denoted by GAS-1F) which directly incorporates forcing variables into the dynamic process of the conditional variance in the sense of GAS models of \cite{Creal2013}.
In particular, 
\begin{align*}
\hat q_{t} &= a \exp(\hat \kappa_{t}) \qquad \text{ and } \qquad 
\hat e_{t} = b \exp(\hat \kappa_{t}), \quad \text{ where }\\
\hat \kappa_{t} &= \beta_0 + \beta_1 \hat \kappa_{t-1} + \frac{\beta_2}{\hat e_{t-1}} \left( \frac{r_{t}}{\alpha}  \mathds{1}_{\{r_{t} \le \hat q_{t-1} \}}  -  \hat e_{t-1} \right).
\end{align*}
and $\hat q_{t} = \sigma_{t} \kappa$ and $\hat e_{t} = \sigma_{t} \delta$ where the restrictions $\delta < \kappa < 0$ are imposed in the model estimation to ensure that $\hat e_{t} < \hat q_{t}$.
The model is estimated by M-estimator given in (\ref{eqn:JointLossESRegGeneral}) and (\ref{eqn:DefQn}).

Table \ref{tab:ParamEstimates} in Appendix \ref{sec:additional_tables_figures} reports parameter estimates of the risk models for the full sample.

\section{Absolute Forecast Evaluation}
\label{sec:AbsEvaluation}

Table \ref{tab:AbsEvaluation} shows absolute forecast evaluation criteria, including several backtests, for one-step ahead VaR and ES forecasts.
For this, the VaR Violation Ratio is given by $\hat{\alpha}/\alpha$, where $\hat{\alpha}=T^{-1}\sum_{t\in \mathfrak{T}}\mathds{1}_{\{ Y_{t+1}<\hat{q}_{t} \}}$ and the empirical ES ratio is computed as $\text{ESR} =\sum_{t \in \mathfrak{T}} \left.\left[Y_{t+1}\right.\mathds{1}_{\{ Y_{t+1}<\hat{q}_{t} \}}\right] / \sum_{t \in \mathfrak{T}} \left[ \hat{e}_{t}\mathds{1}_{\{ Y_{t+1}<\hat{q}_{t} \}}\right]$.
Furthermore, we report $p$-values of the unconditional coverage (UC) test of \cite{Kupiec1995}, the conditional coverage (CC) test of \cite{Christoffersen1998}, the dynamic quantile (DQ) test of \cite{Engle2004}, the VQR test of \cite{Gaglianone2011}, the ES backtest of \cite{McNeil2000} (MF), the regression-based ES backtest of \cite{BayerDimi2019} (BD), and for the calibration test of \cite{NoldeZiegel2017AAS} (NZ).
Table \ref{tab:AbsEvaluation} shows that six out of the eleven models pass all (are not rejected by any of the) seven backtests at a $5\%$ significance level, where the $p$-values in bold indicate that the null hypotheses of these tests are not rejected for any of the tests.

\section{Technical Details of the Proofs}

\begin{lemma}[Stochastic Equicontinuity of the Loss Function]
	\label{lemma:StochasticEquicontinuity}
	Given Assumption \ref{assu:AsymptoticTheory}, the function $T^{-1} l_T(\theta)$ is stochastically equicontinuous, i.e.
	for all $\varepsilon > 0$, there exists a $\delta > 0$, such that
	\begin{align}
	\underset{T \to \infty}{\lim \sup} \; \mathbb{P} \left[ \sup_{\{ \theta, \tilde \theta \in \Theta : || \tilde \theta - \theta|| < \delta\}} ||T^{-1}  l_T(\theta) - T^{-1} l_T(\tilde \theta) || > \varepsilon \right] < \varepsilon.
	\end{align}
\end{lemma}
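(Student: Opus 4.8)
The plan is to exploit the additive structure $T^{-1} l_T(\theta) = T^{-1} \sum_{t \in \mathfrak{T}} \rho_t(\theta)$ and to separate the deterministic mean from the centered empirical process. Writing $l(\theta) := \mathbb{E}[\rho_t(\theta)]$, which is independent of $t$ by stationarity, and $\nu_T(\theta) := T^{-1/2} \sum_{t \in \mathfrak{T}} \big( \rho_t(\theta) - l(\theta) \big)$, I would base everything on the decomposition
\begin{align*}
T^{-1} l_T(\theta) - T^{-1} l_T(\tilde\theta) = \big( l(\theta) - l(\tilde\theta) \big) + T^{-1/2} \big( \nu_T(\theta) - \nu_T(\tilde\theta) \big).
\end{align*}
Taking the supremum over $\{\theta, \tilde\theta \in \Theta : \|\theta - \tilde\theta\| < \delta\}$ then bounds the left-hand side by the modulus of continuity of the deterministic map $l$ plus $T^{-1/2}$ times the oscillation of the empirical process. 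The first summand is purely analytic and is made small by choosing $\delta$ small, while the second is killed by the $T^{-1/2}$ scaling once the empirical process is shown to be uniformly tight.

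First I would establish that $l(\theta)$ is uniformly continuous on the compact set $\Theta$. Although $\rho_t(\theta)$ is discontinuous in $\theta$ through the indicator $\mathds{1}_{\{Y_{t+h} \le g^q_t(\theta)\}}$, taking the expectation smooths this out: by the absolute continuity of $F_t$ (condition \ref{cond:AbsContDistribution}) together with the continuity of $\mathfrak{g}$, $\phi$, $\phi'$ and of the link functions $g^q_t, g^e_t$ (condition \ref{cond:FullRankConditionNormality}), dominated convergence (using the integrable envelope from condition \ref{cond:MomentCondition}) yields continuity of $\theta \mapsto l(\theta)$, and compactness of $\Theta$ upgrades this to uniform continuity. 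Hence for any $\varepsilon > 0$ one can pick $\delta > 0$ with $\sup_{\|\theta - \tilde\theta\| < \delta} |l(\theta) - l(\tilde\theta)| < \varepsilon/2$.

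Next I would control the empirical process term. By Lemma \ref{lemma:TypeIVFunction}, the class $\{ \rho_t(\theta) : \theta \in \Theta \}$ is a type IV class in the sense of \cite{Andrews1994} with index $2r$, so by Theorem 6 of \cite{Andrews1994} it satisfies Ossiander's $L^{2r}$-entropy condition and possesses an $L^{2r}$-envelope. Combined with the $\beta$-mixing condition \ref{cond:BetaMixing} and the moment bound \ref{cond:MomentCondition}, Theorem 1 and Application 1 of \cite{Doukhan1995} imply that $\nu_T$ is stochastically equicontinuous; together with pointwise stochastic boundedness and the compactness of $\Theta$ this gives uniform tightness, $\sup_{\theta \in \Theta} |\nu_T(\theta)| = O_P(1)$. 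Therefore $\sup_{\|\theta - \tilde\theta\| < \delta} |\nu_T(\theta) - \nu_T(\tilde\theta)| \le 2 \sup_{\theta \in \Theta} |\nu_T(\theta)| = O_P(1)$, and multiplying by $T^{-1/2}$ shows $\limsup_{T \to \infty} \mathbb{P}\big[ T^{-1/2} \sup_{\|\theta - \tilde\theta\| < \delta} |\nu_T(\theta) - \nu_T(\tilde\theta)| > \varepsilon/2 \big] = 0$. Combining the two bounds with a union bound yields the claim.

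The main obstacle is the discontinuity of $\rho_t(\theta)$ in the parameter, which rules out classical Lipschitz or derivative-based arguments for stochastic equicontinuity and forces the use of the type IV class machinery of \cite{Andrews1994} together with the $\beta$-mixing empirical process results of \cite{Doukhan1995}; this is also what explains the need for the stronger $\beta$-mixing (rather than $\alpha$-mixing) dependence condition. Verifying the type IV / entropy structure in Lemma \ref{lemma:TypeIVFunction} is the technically delicate ingredient, whereas the decomposition and the continuity of $l$ are comparatively routine.
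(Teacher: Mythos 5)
Your proof is correct in substance, but it takes a genuinely different route from the paper's. The paper proves the lemma by establishing a pathwise stochastic Lipschitz property: after the split in \eqref{eqn:SplitLossLipschitz}, it uses the identity $A_t(\theta)\mathds{1}_{\{Y_{t+h}\le g_t^q(\theta)\}} = \tfrac12\big(A_t(\theta) + |A_t(\theta)|\big)$ --- valid because $\mathfrak{g}$ is increasing, so the factor multiplying the indicator vanishes exactly where the indicator jumps --- to show $|T^{-1}l_T(\theta)-T^{-1}l_T(\tilde\theta)| \le K_T\|\theta-\tilde\theta\|$ with $K_T=\mathcal{O}_P(1)$ controlled by the moment bound in condition \ref{cond:MomentCondition}, and then invokes Theorem 21.10 of \cite{Davidson1994}. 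Your decomposition into the deterministic mean $l(\theta)$ plus $T^{-1/2}$ times the centered process $\nu_T$ instead imports the type IV class machinery (Lemma \ref{lemma:TypeIVFunction}, \citealp{Andrews1994}, \citealp{Doukhan1995}) that the paper reserves for verifying stochastic differentiability in the proof of Theorem \ref{thm:GeneralAsymptoticDistribution}. Both routes work given the paper's assumptions; the paper's argument is more elementary and self-contained, while yours is heavier but buys more: once $\sup_{\theta\in\Theta}|\nu_T(\theta)| = \mathcal{O}_P(1)$ is in hand, you get uniform convergence of $T^{-1}l_T$ to $l$ at rate $\mathcal{O}_P(T^{-1/2})$ directly, which would subsume the separate uniform-LLN step (this lemma combined with Theorem 21.9 of \citealp{Davidson1994}) in the consistency argument.

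Two caveats. First, your stated motivation --- that the discontinuity of $\rho_t(\theta)$ in $\theta$ ``rules out classical Lipschitz or derivative-based arguments'' --- is mistaken: the indicator's jump is always annihilated by a factor vanishing at the jump point, so $\rho_t(\theta)$ is pathwise continuous and in fact locally Lipschitz with a stochastic Lipschitz constant; it is the gradient $\psi_t(\theta)$, not the loss, that is discontinuous, and the elementary Lipschitz route you dismiss is precisely the paper's proof. Second, you assert pointwise stochastic boundedness of $\nu_T(\theta)$ without justification; it is routine --- Davydov's covariance inequality under the mixing condition \ref{cond:BetaMixing} together with the $2r$ moments in condition \ref{cond:MomentCondition} bounds $\operatorname{Var}(\nu_T(\theta))$ uniformly in $T$ --- but it is a needed ingredient for upgrading equicontinuity of $\nu_T$ to uniform tightness and should be stated.
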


\begin{proof}
	In the following, we show that for all $\theta, \tilde \theta \in \Theta$ and for all $T \in \mathbb{N}$, it holds that
	\begin{align}
	\label{eqn:LipschitzOP1}
	|T^{-1}  l_T(\theta) -T^{-1}  l_T(\tilde\theta) | \le K_T ||\theta- \tilde \theta||,
	\end{align}
	where $K_T =  \mathcal{O}_P(1)$, which implies stochastic equicontinuity by Theorem 21.10 of \cite{Davidson1994}.
	
	For this, we split the loss function
	\begin{align} 
	\begin{aligned}
	\label{eqn:SplitLossLipschitz}
	\rho_t(\theta) &=  \big( \mathds{1}_{\{Y_{t+h} \le g_t^q(\theta) \}} - \alpha \big) \mathfrak{g}(g_t^q(\theta)) - \mathds{1}_{\{Y_{t+h} \le g_t^q(\theta)\}}  \mathfrak{g}(Y_{t+h}) \\
	&+ \phi'(g_t^e(\theta)) \left( g_t^e( \theta) -  g_t^q(\theta) + \frac{( g_t^q(\theta) - Y_{t+h}) \mathds{1}_{\{Y_{t+h} \le  g_t^q(\theta) \}}}{\alpha}  \right) - \phi( g_t^e(\theta)) + a(Y_{t+h}) \\
	&=: A_t(\theta) \mathds{1}_{t+h}(\theta) + B_t(\theta) \mathds{1}_{t+h}(\theta) + C_t(\theta) + a(Y_{t+h}),
	\end{aligned}
	\end{align}
	where we use the short notation $\mathds{1}_{t+h}(\theta) := \mathds{1}_{\{Y_{t+h} \le g_t^q(\theta)\}}$ and
	\begin{align}
	A_t(\theta) &:= \mathfrak{g}(g_t^q(\theta))  - \mathfrak{g}(Y_{t+h}), \\
	B_t(\theta) &:=   \phi'(g_t^e(\theta))/\alpha \big( g_t^q(\theta) - Y_{t+h} \big), \qquad \text{and} \\
	C_t(\theta) &:=  \phi'(g_t^e(\theta)) \big(  g_t^e(\theta)  -  g_t^q(\theta)   \big) - \phi(g_t^e(\theta))  - \alpha \mathfrak{g}(g_t^q(\theta)).
	\end{align}
	It holds that
	\begin{align}
	\begin{aligned}
	\label{eqn:LipschitzSplit}
	| l_T(\theta) - l_T(\tilde\theta) | 
	&\le \big| A_t(\theta) \mathds{1}_{t+h}(\theta) - A_t(\tilde\theta) \mathds{1}_{t+h}(\tilde\theta) \big| \\
	&+  \big| B_t(\theta) \mathds{1}_{t+h}(\theta) - B_t(\tilde\theta) \mathds{1}_{t+h}(\tilde\theta) \big| \\
	&+  \big| C_t(\theta) - C_t(\tilde\theta) \big|.
	\end{aligned}
	\end{align}
	As $C_t$ is continuously differentiable, for the third term in (\ref{eqn:LipschitzSplit}) we get that
	\begin{align}
	\big| C_t(\theta) - C_t(\tilde\theta) \big| 
	\le \left( \sup_{\theta \in \Theta} ||\nabla_\theta  C_t( \theta) || \right) \cdot ||\theta- \tilde \theta||
	\end{align}
	where $\sup_{\theta \in \Theta} ||\nabla_\theta  C_t(\theta) || = \mathcal{O}_P(1)$ as $\mathbb{E} \left[ \sup_{\theta \in \Theta} ||\psi_t(\theta)||^{2r} \right] < \infty$ by condition \ref{cond:MomentCondition}.
	For the first term in (\ref{eqn:LipschitzSplit}), first notice that
	\begin{align}
	\big(\mathfrak{g}(g_t^q(\theta))  - \mathfrak{g}(Y_{t+h}) \big) \mathds{1}_{\{Y_{t+h} \le g_t^q(\theta)\}}
	= \frac{1}{2} \left( \mathfrak{g}(g_t^q(\theta))  - \mathfrak{g}(Y_{t+h}) +  \big| \mathfrak{g}(g_t^q(\theta))  - \mathfrak{g}(Y_{t+h}) \big| \right).
	\end{align}
	Thus, it holds that
	\begin{align}
	&\big| A_t(\theta) \mathds{1}_{t+h}(\theta) - A_t(\tilde\theta) \mathds{1}_{t+h}(\tilde\theta) \big| \\
	= \, &\frac{1}{2} \left| \left( \mathfrak{g}(g_t^q(\theta))  - \mathfrak{g}(Y_{t+h}) +  \big| \mathfrak{g}(g_t^q(\theta))  - \mathfrak{g}(Y_{t+h}) \big| \right) - \left( \mathfrak{g}(g_t^q(\tilde \theta))  - \mathfrak{g}(Y_{t+h}) +  \big| \mathfrak{g}(g_t^q(\tilde \theta))  - \mathfrak{g}(Y_{t+h}) \big| \right) \right| \\
	\le \, & \frac{1}{2} \left| \left( \mathfrak{g}(g_t^q(\theta))  - \mathfrak{g}(Y_{t+h})  \right) - \left( \mathfrak{g}(g_t^q(\tilde \theta))  - \mathfrak{g}(Y_{t+h})  \right) \right|
	+  \frac{1}{2} \left| \left| \mathfrak{g}(g_t^q(\theta))  - \mathfrak{g}(Y_{t+h})  \right| - \left| \mathfrak{g}(g_t^q(\tilde \theta))  - \mathfrak{g}(Y_{t+h})  \right| \right|\\
	\le \, &\left| \mathfrak{g}(g_t^q(\theta))  - \mathfrak{g}(g_t^q(\tilde \theta))   \right|  \\
	\le  \, &\left( \sup_{\theta \in \Theta} ||\nabla_\theta  \mathfrak{g}(g_t^q(\theta))|| \right) \cdot ||\theta- \tilde \theta||,
	\end{align}
	where $\sup_{\theta \in \Theta} ||\nabla_\theta  \mathfrak{g}(g_t^q(\theta))|| = \mathcal{O}_P(1)$ as $\mathbb{E} \left[ \sup_{\theta \in \Theta} ||\psi_t(\theta)||^{2r} \right] < \infty$.
	Equivalently, for the second term in (\ref{eqn:LipschitzSplit}), it holds that
	\begin{align}
	\frac{\phi'(g_t^e(\theta))}{\alpha}   \big(g_t^q(\theta)  - Y_{t+h} \big) \mathds{1}_{\{Y_{t+h} \le g_t^q(\theta)\}}
	= \frac{\phi'(g_t^e(\theta))}{2\alpha}  \left(  \big(g_t^q(\theta)  - Y_{t+h} \big)  - \big| g_t^q(\theta)  - Y_{t+h} \big| \right).
	\end{align}
	Consequently,
	\begin{align}
	&\big| B_t(\theta) \mathds{1}_{t+h}(\theta) - B_t(\tilde\theta) \mathds{1}_{t+h}(\tilde\theta) \big| \\
	= \, &\left|  \frac{\phi'(g_t^e(\theta))}{2\alpha}  \left(  \big(g_t^q(\theta)  - Y_{t+h} \big)  - \big| g_t^q(\theta)  - Y_{t+h} \big| \right) \right. \\
	&\quad - \left. \frac{\phi'(g_t^e(\tilde \theta))}{2\alpha}  \left(  \big(g_t^q(\tilde \theta)  - Y_{t+h} \big)  - \big| g_t^q(\tilde \theta)  - Y_{t+h} \big| \right) \right| \\
	\le \, &\left| \frac{\phi'(g_t^e(\theta))}{2\alpha}  \big(g_t^q(\theta)  - Y_{t+h} \big)  -\frac{\phi'(g_t^e(\tilde \theta))}{2\alpha}  \big(g_t^q(\tilde \theta)  - Y_{t+h} \big) \right| \\
	&\quad + \left| \frac{\phi'(g_t^e(\theta))}{2\alpha}  \big| g_t^q(\theta)  - Y_{t+h} \big|  -\frac{\phi'(g_t^e(\tilde \theta))}{2\alpha}  \big| g_t^q(\tilde \theta)  - Y_{t+h} \big| \right| \\
	\le \, &\left| \frac{\phi'(g_t^e(\theta))}{\alpha}  \big(g_t^q(\theta)  - Y_{t+h} \big)  -\frac{\phi'(g_t^e(\tilde \theta))}{\alpha}  \big(g_t^q(\tilde \theta)  - Y_{t+h} \big) \right| \\
	\le \, & \left( \sup_{\theta \in \Theta} \left| \left| \nabla_\theta  \left( \frac{\phi'(g_t^e(\theta))}{\alpha}  g_t^q( \theta) \right) +  \frac{ \nabla_\theta (\phi'(g_t^e( \theta)))}{\alpha}  Y_{t+h} \right| \right| \right) \cdot ||\theta- \tilde \theta||.
	\end{align}
	and $\sup_{\theta \in \Theta} \left| \left| \nabla_\theta  \left( \frac{\phi'(g_t^e(\theta))}{\alpha}  g_t^q( \theta) \right) +  \frac{ \nabla_\theta (\phi'(g_t^e( \theta)))}{\alpha}  Y_{t+h} \right| \right| = \mathcal{O}_P(1)$ as $\mathbb{E} \left[ \sup_{\theta \in \Theta} ||\psi_t(\theta)||^{2r} \right] < \infty$.
	Eventually, as $T^{-1}  l_T(\theta)  = T^{-1} \sumt \rho_t(\theta)$, the Lipschitz condition in (\ref{eqn:LipschitzOP1}) holds with $K_T = \mathcal{O}_P(1)$, which concludes this proof.
\end{proof}

\begin{lemma}(Type IV Class for Stochastic Equicontinuity of the Empirical Process)
	\label{lemma:TypeIVFunction}
	The class of functions given by $\rho_t(\theta) := \rho \big( Y_{t+h}, g^q_t(\theta) , g^e_t(\theta)  \big)$ in (\ref{eqn:JointLossESRegGeneral}) is a \textit{type IV class} (see \cite{Andrews1994}, p. 2278) with index $p=2r$ (in the notation of \cite{Andrews1994} and where $r>1$ from condition \ref{cond:BetaMixing}), i.e. it holds that
	\begin{align}
	\underset{1\le t \le T, \, T \ge 1 }{\sup}  \mathbb{E} \left[ \underset{\tilde{\theta} \in U(\theta, \delta) }{\sup} \left|  \rho_t(\theta)  - \rho_t(\tilde\theta)  \right|^{2r} \right]^{1/2r}
	\le C \delta,
	\end{align}
	for all $\theta \in \Theta$, for all $\delta > 0$ in a neighborhood of zero, and for some positive constant $C$.
\end{lemma}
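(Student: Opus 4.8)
The plan is to piggy-back on the pointwise Lipschitz estimate already obtained in the proof of Lemma \ref{lemma:StochasticEquicontinuity}. There the loss is split as $\rho_t(\theta) = A_t(\theta)\mathds{1}_{t+h}(\theta) + B_t(\theta)\mathds{1}_{t+h}(\theta) + C_t(\theta) + a(Y_{t+h})$ and each piece is shown to satisfy a Lipschitz bound in $\theta$ with a random coefficient that is $\mathcal{O}_P(1)$. Collecting these estimates gives, for every $\theta,\tilde\theta\in\Theta$,
\begin{align}
	|\rho_t(\theta) - \rho_t(\tilde\theta)| \le L_t\,\|\theta-\tilde\theta\|,
\end{align}
where $L_t$ is the sum of the three gradient suprema $\sup_{\theta\in\Theta}\|\nabla_\theta C_t(\theta)\|$, $\sup_{\theta\in\Theta}\|\nabla_\theta\mathfrak{g}(g_t^q(\theta))\|$ and the mixed term involving $\phi'(g_t^e(\theta))$, $g_t^q(\theta)$ and $Y_{t+h}$ appearing at the end of that proof. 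The type IV property is then essentially a one-line consequence of this bound.

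Concretely, I would first restrict $\tilde\theta$ to the ball $U(\theta,\delta)$, so that $\|\theta-\tilde\theta\|\le\delta$ and hence $\sup_{\tilde\theta\in U(\theta,\delta)}|\rho_t(\theta)-\rho_t(\tilde\theta)| \le L_t\,\delta$ pointwise. Taking the $L^{2r}$-norm and using positive homogeneity of the norm then yields
\begin{align}
	\mathbb{E}\Big[\sup_{\tilde\theta\in U(\theta,\delta)}|\rho_t(\theta)-\rho_t(\tilde\theta)|^{2r}\Big]^{1/(2r)} \le \mathbb{E}\big[L_t^{2r}\big]^{1/(2r)}\,\delta.
\end{align}
It remains to check that $\mathbb{E}[L_t^{2r}]$ is finite and free of $t$. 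This is immediate from the proof of Lemma \ref{lemma:StochasticEquicontinuity}: each of the three gradient suprema constituting $L_t$ was shown there to have a finite $2r$-th moment under condition \ref{cond:MomentCondition} (this is precisely why each was declared $\mathcal{O}_P(1)$), so by the triangle inequality in $L^{2r}$ one gets $\mathbb{E}[L_t^{2r}]^{1/(2r)}<\infty$; stationarity from condition \ref{cond:BetaMixing} renders it independent of $t$. Setting $C:=\sup_t\mathbb{E}[L_t^{2r}]^{1/(2r)}$ then delivers the claim, uniformly over $1\le t\le T$ and $T\ge1$.

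The one genuinely non-trivial point — already resolved inside Lemma \ref{lemma:StochasticEquicontinuity} — is that $\rho_t$ is not differentiable in $\theta$ because of the indicator $\mathds{1}_{\{Y_{t+h}\le g_t^q(\theta)\}}$, which precludes a direct mean-value bound. The resolution is the identity $x\,\mathds{1}_{\{x\le0\}}=\tfrac12\big(x-|x|\big)$, applied with $x=\mathfrak{g}(g_t^q(\theta))-\mathfrak{g}(Y_{t+h})$ and with $x=g_t^q(\theta)-Y_{t+h}$, which recasts the offending products $A_t\mathds{1}_{t+h}$ and $B_t\mathds{1}_{t+h}$ as Lipschitz functions of $g_t^q(\theta)$; the indicator jumps are harmless precisely because they occur where the multiplying factor vanishes. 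Since this produces a genuinely pointwise (not merely in-expectation) Lipschitz estimate, no empirical-process machinery is needed at this stage: the type IV bound is exactly the quantitative modulus-of-continuity input that Theorem 6 of \cite{Andrews1994} later converts into Ossiander's $L^{2r}$-entropy condition in the proof of Theorem \ref{thm:GeneralAsymptoticDistribution}.
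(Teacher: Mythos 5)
Your proposal is correct, but it follows a genuinely different route from the paper's own proof of Lemma \ref{lemma:TypeIVFunction}. The paper does \emph{not} reuse the max/min identity from Lemma \ref{lemma:StochasticEquicontinuity}; instead it re-splits the loss so as to isolate the discontinuous pieces $\mathds{1}_{t+h}(\theta)$, $\mathds{1}_{t+h}(\theta)Y_{t+h}$ and $\mathds{1}_{t+h}(\theta)\mathfrak{g}(Y_{t+h})$ with smooth coefficients, applies Minkowski's inequality, and then controls the indicator-flip term $\mathbb{E}_t\big[\sup_{\tilde\theta\in U(\theta,\delta)}|\mathds{1}_{t+h}(\theta)-\mathds{1}_{t+h}(\tilde\theta)|^{2r}\big]$ via the conditional density $h_t$ and condition \ref{cond:AbsContDistribution}, following the arguments of Lemma B.1 of \cite{DimiBayer2019}. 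In other words, the paper pays for the discontinuity with a density bound, whereas you pay for it once and for all with the pointwise Lipschitz representation, turning $\rho_t$ into a Lipschitz (type II) class with an $L^{2r}$ coefficient, which is trivially type IV. Your route buys two things: it needs no density assumption at this step, and it genuinely delivers the exponent-one modulus $C\delta$ as displayed in the lemma — note that the density route, done carefully, only yields $C\delta^{1/(2r)}$ for the indicator terms, since $|\mathds{1}_{t+h}(\theta)-\mathds{1}_{t+h}(\tilde\theta)|^{2r}=|\mathds{1}_{t+h}(\theta)-\mathds{1}_{t+h}(\tilde\theta)|$ makes the conditional expectation of order $\delta$ rather than $\delta^{2r}$; this is still sufficient for Andrews' type IV definition, which only requires some positive H\"older exponent. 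What the paper's template buys in exchange is uniformity of method: the same indicator-isolation argument is what proves Lemma \ref{lemma:HACStochasticEquicontinuous} for the class $\psi_t(\theta)\psi_{t-j}^\top(\theta)$, where your trick is unavailable because there the indicator multiplies factors (e.g.\ terms linear in $Y_{t+h}$ inside $\tilde A_t$) that do \emph{not} vanish at the jump, so the density bound is unavoidable; using it for $\rho_t$ as well keeps the two proofs parallel.

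One caveat you should make explicit: your argument requires $\mathbb{E}[L_t^{2r}]<\infty$, not merely $L_t=\mathcal{O}_P(1)$, and the envelopes constituting $L_t$ — $\sup_{\theta\in\Theta}\|\nabla_\theta C_t(\theta)\|$, $\sup_{\theta\in\Theta}\|\nabla_\theta\mathfrak{g}(g_t^q(\theta))\|$ and the mixed term involving $Y_{t+h}$ — are not literally components of $\psi_t(\theta)$, so condition \ref{cond:MomentCondition} is being invoked in a slightly extended, high-level sense. This is the same license the paper itself takes in the proof of Lemma \ref{lemma:StochasticEquicontinuity} (and is consistent with its stated convention of high-level moment conditions), but a sentence acknowledging it would make the proof airtight; stationarity from condition \ref{cond:BetaMixing} then renders the bound free of $t$ and $T$, as you note.
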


\begin{proof}
	For this proof, we split the loss function
	\begin{align} 
	\begin{aligned}
	\label{eqn:SplitLossEquicont}
	l_t(\theta) &=  \big( \mathds{1}_{\{Y_{t+h} \le g_t^q(\theta) \}} - \alpha \big) \mathfrak{g}(g_t^q(\theta)) - \mathds{1}_{\{Y_{t+h} \le g_t^q(\theta)\}}  \mathfrak{g}(Y_{t+h}) \\
	&+ \phi'(g_t^e(\theta)) \left( g_t^e( \theta) -  g_t^q(\theta) + \frac{( g_t^q(\theta) - Y_{t+h}) \mathds{1}_{\{Y_{t+h} \le  g_t^q(\theta) \}}}{\alpha}  \right) - \phi( g_t^e(\theta)) + a(Y_{t+h}) \\
	&=: A_t(\theta) \mathds{1}_{t+h}(\theta) + B_t(\theta) \mathds{1}_{t+h}(\theta) Y_{t+h} + D_t(\theta) - \mathds{1}_{t+h}(\theta) \mathfrak{g}(Y_{t+h}) + a(Y_{t+h}),
	\end{aligned}
	\end{align}
	where $\mathds{1}_{t+h}(\theta) := \mathds{1}_{\{Y_{t+h} \le g_t^q(\theta)\}}$ and
	\begin{align}
	A_t(\theta) &:= \mathfrak{g}(g_t^q(\theta))  + \phi'(g_t^e(\theta)) g_t^q(\theta)/\alpha, \\
	B_t(\theta) &:=  - \phi'(g_t^e(\theta))/\alpha, \qquad \text{and}  \\	
	C_t(\theta) &:= - \alpha \mathfrak{g}(g_t^q(\theta)) + \phi'(g_t^e(\theta)) \big( g_t^e(\theta) - g_t^q(\theta) \big) - \phi(g_t^e(\theta)).
	\end{align}
	Thus, for all $\theta \in \Theta$, it holds that
	\begin{align}
	\begin{aligned}
	\label{eqn:SplitLossSquaredEquicont}
	\mathbb{E} \left[ \underset{\tilde{\theta} \in U(\theta, \delta) }{\sup} \left|  l_t(\theta)  - l_t(\tilde\theta)  \right|^{2r}  \right]^{1/{2r} }
	&\le \mathbb{E} \left[ \underset{\tilde{\theta} \in U(\theta, \delta) }{\sup} \left|  A_t(\theta) \mathds{1}_{t+h}(\theta)  - A_t(\tilde\theta) \mathds{1}_{t+h}(\tilde\theta)  \right|^{2r}  \right]^{1/{2r} } \\
	&+ \mathbb{E} \left[ \underset{\tilde{\theta} \in U(\theta, \delta) }{\sup} \left|  B_t(\theta) \mathds{1}_{t+h}(\theta) - B_t(\tilde\theta) \mathds{1}_{t+h}(\tilde\theta) \right|^{2r}  Y_{t+h}^{2r}   \right]^{1/{2r} } \\
	&+ \mathbb{E} \left[ \underset{\tilde{\theta} \in U(\theta, \delta) }{\sup} \left|  C_t(\theta)  - C_t(\tilde\theta)  \right|^{2r}  \right]^{1/{2r} } \\
	&+ \mathbb{E} \left[ \underset{\tilde{\theta} \in U(\theta, \delta) }{\sup} \left|  \mathds{1}_{t+h}(\theta)   - 	\mathds{1}_{t+h}(\tilde\theta) \right|^{2r}  |\mathfrak{g}(Y_{t+h})|^{2r}   \right]^{1/{2r} },
	\end{aligned}
	\end{align}
	by Minkowski's inequality (and as the $\sup$-operator follows the triangle inequality).
	We start by considering the first term in (\ref{eqn:SplitLossSquaredEquicont})
	\begin{align}
	&\mathbb{E} \left[ \underset{\tilde{\theta} \in U(\theta, \delta) }{\sup} \left|  A_t(\theta) \mathds{1}_{t+h}(\theta)  - A_t(\tilde\theta) \mathds{1}_{t+h}(\tilde\theta)  \right|^{2r}  \right]^{1/{2r} } \\
	\le \, &\mathbb{E} \left[ \underset{\tilde{\theta} \in U(\theta, \delta) }{\sup} \left|  A_t(\theta) \mathds{1}_{t+h}(\theta)  - A_t(\tilde\theta) \mathds{1}_{t+h}(\theta)  \right|^{2r}  \right]^{1/{2r} }
	+\mathbb{E} \left[ \underset{\tilde{\theta} \in U(\theta, \delta) }{\sup} \left|  A_t(\tilde \theta) \mathds{1}_{t+h}(\theta)  - A_t(\tilde\theta) \mathds{1}_{t+h}(\tilde\theta)  \right|^{2r}  \right]^{1/{2r} },
	\end{align}
	where the first term is bounded from above by 
	$\mathbb{E} \left[ \underset{\tilde{\theta} \in U(\theta, \delta) }{\sup} \left|\left|  \nabla_\theta A_t(\theta) \right| \right|^{2r}  \right]^{1/{2r} } \delta$.
	For the second term, we get that
	\begin{align}
	&\mathbb{E} \left[ \underset{\tilde{\theta} \in U(\theta, \delta) }{\sup} \left|  A_t(\tilde \theta) \mathds{1}_{t+h}(\theta)  - A_t(\tilde\theta) \mathds{1}_{t+h}(\tilde\theta)  \right|^{2r}  \right]^{1/{2r} } \\
	\le \, &\mathbb{E} \left[ \underset{\tilde{\theta} \in U(\theta, \delta) }{\sup} \left| A_t(\tilde \theta) \right|^{2r}  \mathbb{E}_t \left[  \underset{\tilde{\theta} \in U(\theta, \delta) }{\sup}  \left| \mathds{1}_{t+h}(\theta)  -  \mathds{1}_{t+h}(\tilde\theta)  \right|^{2r}  \right] \right]^{1/{2r} } \\
	\le \, &\mathbb{E} \left[ \underset{\tilde{\theta} \in U(\theta, \delta) }{\sup} \left| A_t(\tilde \theta) \right|^{2r}  \mathbb{E}_t \left[  \underset{\tilde{\theta} \in U(\theta, \delta) }{\sup}  \left|\left| \nabla_\theta g_t^q(\tilde \theta) h_t(g_t^q(\tilde \theta)) \right| \right|^{2r}  \right] \right]^{1/{2r} }  \delta.
	\end{align}
	by arguments as in the proof of Lemma B.1 of \cite{DimiBayer2019}.
	Similar reasons apply to the second term in in (\ref{eqn:SplitLossSquaredEquicont}), where by argument similar to equation (58) of \cite{DimiBayer2019}, 
	\begin{align}
	\mathbb{E}_t \left[  \underset{\tilde{\theta} \in U(\theta, \delta) }{\sup}  \left| \mathds{1}_{t+h}(\theta) Y_{t+h}  -  \mathds{1}_{t+h}(\tilde\theta) Y_{t+h} \right|^{2r}  \right]
	\le
	\underset{\tilde{\theta} \in U(\theta, \delta) }{\sup}  \left| \nabla_\theta g_t^q(\tilde \theta)  \big(g_t^q(\tilde \theta)\big)^{2r}   h_t(g_t^q(\tilde \theta)) \right| \delta.
	\end{align}
	Consequently, the second term is bounded by
	\begin{align}
	&\mathbb{E} \left[ \underset{\tilde{\theta} \in U(\theta, \delta) }{\sup} \left|  B_t(\theta) \mathds{1}_{t+h}(\theta) Y_{t+h}  - B_t(\tilde\theta) \mathds{1}_{t+h}(\tilde\theta)  Y_{t+h}\right|^{2r}  \right]^{1/{2r} } \\
	\le \, &\mathbb{E} \left[ \underset{\tilde{\theta} \in U(\theta, \delta) }{\sup} \left|  \nabla_\theta B_t(\tilde \theta)  Y_{t+h}  \right|^{2r} 
	+ \underset{\tilde{\theta} \in U(\theta, \delta) }{\sup} \left|  B_t(\tilde \theta) \nabla_\theta g_t^q(\tilde \theta)  g_t^q(\tilde \theta)^{2r}  h_t(g_t^q(\tilde \theta)) \right|^{2r}  \right]^{1/{2r} } d.
	\end{align}
	Equivalent argument apply to the fourth term in (\ref{eqn:SplitLossSquaredEquicont}), which is bounded by
	\begin{align}
	\mathbb{E} \left[  \underset{\tilde{\theta} \in U(\theta, \delta) }{\sup} \left|  B_t(\tilde \theta) \nabla_\theta g_t^q(\tilde \theta)  \mathfrak{g}(g_t^q(\tilde \theta))  h_t(g_t^q(\tilde \theta)) \right|^{2r}  \right]^{1/{2r} } d.
	\end{align}
	Eventually, for the third term in (\ref{eqn:SplitLossSquaredEquicont}) is bounded from above by 
	\begin{align}
	\mathbb{E} \left[ \underset{\tilde{\theta} \in U(\theta, \delta) }{\sup} \left|\left|  \nabla_\theta C_t(\theta) \right| \right|^{2r}  \right]^{1/{2r} } \delta.
	\end{align}
	As the respective moments are finite by condition \ref{cond:MomentCondition} for all $1\le t \le T$ and all $T \ge 1$, it follows that
	\begin{align}
	\underset{1\le t \le T, \, T \ge 1 }{\sup}  \mathbb{E} \left[ \underset{\tilde{\theta} \in U(\theta, \delta) }{\sup} \left|  l_t(\theta)  - l_t(\tilde\theta)  \right|^{2r} \right]^{1/{2r} }
	\le C \delta,
	\end{align}
	which concludes this proof.
\end{proof}

\begin{lemma}[Consistency of the HAC Estimator]
	\label{lemma:ConsistencyHAC}
	Given Assumption \ref{assu:AsymptoticTheory} and Assumption \ref{assu:AsymptoticTheory2}, it holds that $\hat{\mathcal{I}}_T \toP \mathcal{I}$.
\end{lemma}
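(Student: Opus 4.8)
The plan is to reduce the problem to two separate limits by splitting off the effect of estimating $\theta^0$. Under stationarity, and since $\mathbb{E}\big[\psi_t(\theta^0)\big]=0$---which holds because $g^q_t(\theta^0)$ and $g^e_t(\theta^0)$ are the true conditional VaR and ES, so the defining conditional moment conditions annihilate the score in \eqref{eqn:Psi_t}---the matrix $\mathcal{I}$ in \eqref{eqn:I_Matrix} admits the autocovariance representation
\begin{align}
\mathcal{I} = \Omega_0 + \sum_{j=1}^{\infty} \big(\Omega_j + \Omega_j^\top\big), \qquad \Omega_j := \mathbb{E}\big[\psi_t(\theta^0)\,\psi_{t-j}^\top(\theta^0)\big].
\end{align}
I would then write $\hat{\mathcal{I}}_T - \mathcal{I} = \big(\hat{\mathcal{I}}_T - \mathcal{I}_T^0\big) + \big(\mathcal{I}_T^0 - \mathcal{I}\big)$, where $\mathcal{I}_T^0$ denotes the infeasible estimator \eqref{eqn:HACEstimation} with $\hat\theta_T$ replaced by $\theta^0$, and control each term separately.

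For the infeasible term $\mathcal{I}_T^0 - \mathcal{I}$, the sequence $\{\psi_t(\theta^0)\}$ is zero-mean, stationary, and $\beta$-mixing (hence $\alpha$-mixing) with finite $2r$-th moments by conditions \ref{cond:BetaMixing} and \ref{cond:MomentCondition}. Evaluated at the fixed point $\theta^0$ these are ordinary random variables, so no continuity in $\theta$ is required and the classical \cite{NeweyWest1987} and \cite{Andrews1991} argument applies: a law of large numbers for mixing sequences gives $\widehat{\Omega}_{T,j}(\theta^0)\toP\Omega_j$ for each fixed $j$, while the bandwidth conditions $m_T\to\infty$, $m_T=o(T^{1/4})$ and $z(j,m)\to1$ of condition \ref{cond:HACWeights} control both the truncation bias and the variance inflation from estimating a growing number of lagged cross-products, yielding $\mathcal{I}_T^0\toP\mathcal{I}$.

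The main obstacle is the estimation term $\hat{\mathcal{I}}_T - \mathcal{I}_T^0$, because $\psi_t(\theta)$ in \eqref{eqn:Psi_t} is discontinuous in $\theta$ through the indicator $\mathds{1}_{\{Y_{t+h}\le g^q_t(\theta)\}}$, so a mean-value expansion in $\theta$ is unavailable. My strategy is to establish stochastic equicontinuity of the array of empirical processes $\theta\mapsto\frac{1}{T}\sumtj\psi_t(\theta)\psi_{t-j}^\top(\theta)$. Concretely, I would show that the products $\psi_t(\theta)\psi_{t-j}^\top(\theta)$ form a \textit{type IV} class in the sense of \cite{Andrews1994} with index $2r$ and an $L^{2r}$-envelope---exactly mirroring Lemma \ref{lemma:TypeIVFunction}---using the strengthened moment bounds of condition \ref{cond:HACMoments}, and then invoke the $\beta$-mixing equicontinuity result of \cite{Doukhan1995}. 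The delicate point is the indicator jump: as in the proof of Lemma \ref{lemma:TypeIVFunction}, the $L^{2r}$-increments of $\mathds{1}_{\{Y_{t+h}\le g^q_t(\theta)\}}$ are converted into a term of order $\|\theta-\tilde\theta\|$ by the Lipschitz continuity and boundedness of the density $h_t$ from condition \ref{cond:AbsContDistribution}. Combined with the consistency $\hat\theta_T\toP\theta^0$ from Theorem \ref{thm:GeneralAsymptoticDistribution}, this makes each plug-in difference $o_P(1)$; the remaining subtlety is that the modulus of continuity must be uniform in $j$ (so that summing over $j\le m_T$ with weights $z(j,m_T)\le\text{const}$ stays negligible despite $m_T\to\infty$), which follows from stationarity since the equicontinuity bound does not depend on the lag $j$. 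Summing the two pieces then gives $\hat{\mathcal{I}}_T\toP\mathcal{I}$.
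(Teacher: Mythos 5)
Your overall strategy coincides with the paper's own proof: the paper likewise handles the indicator-induced discontinuity by showing that the cross-products $\psi_t(\theta)\psi_{t-j}^\top(\theta)$ form a \emph{type IV} class with an $L^{2r}$-envelope and invoking \cite{Doukhan1995} (this is exactly Lemma \ref{lemma:HACStochasticEquicontinuous}), and your infeasible/feasible decomposition plays the same role as the paper's device of bounding the plug-in deviation by twice the supremum deviation as in (\ref{eqn:HACSupInequality}), combined with continuity of $\theta \mapsto \Psi_j^2(\theta) := \mathbb{E}\big[\psi_t(\theta)\psi_{t-j}^\top(\theta)\big]$ and consistency of $\hat\theta_T$. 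Up to this cosmetic difference in the decomposition, the two arguments are the same.

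There is, however, a genuine gap at the decisive step of summing over lags. You argue that each plug-in difference $\widehat{\Omega}_{T,j}(\hat\theta_T)-\widehat{\Omega}_{T,j}(\theta^0)$ is $o_P(1)$ and that uniformity in $j$ of the equicontinuity modulus makes the sum over $j \le m_T$ negligible. That inference fails: a sum of $m_T \to \infty$ terms, each $o_P(1)$ uniformly in $j$, is of order $m_T \cdot o_P(1)$ and need not vanish, because qualitative stochastic equicontinuity delivers no rate. Tellingly, your treatment of the feasible-minus-infeasible term never uses $m_T = o(T^{1/4})$. The paper closes precisely this hole quantitatively: in (\ref{eqn:HACNonsmoothTrick}) it takes a union bound over $j \le m_T$ with per-term threshold $\varepsilon/(C m_T)$, applies Markov's inequality, and uses the second-moment bound (\ref{eqn:HACMomentInequality}), namely $\mathbb{E}\big[\big(\sumtj \sup_{\theta\in\Theta} \|\psi_t(\theta)\psi_{t-j}^\top(\theta)-\Psi_j^2(\theta)\|\big)^2\big] \le T(j+1)D^\ast$, yielding an overall bound of order $m_T^3(m_T+3)/T$, which vanishes exactly because condition \ref{cond:HACWeights} imposes $m_T = o(T^{1/4})$. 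Your route is repairable within your own framework: the type IV increment bound $\mathbb{E}\big[\sup_{\tilde\theta\in U(\theta,\delta)}|\cdot|^{2r}\big]^{1/2r} \le C\delta$ holds uniformly in $t$ and, by stationarity, in $j$, so combined with $\|\hat\theta_T-\theta^0\| = O_P(T^{-1/2})$ from Theorem \ref{thm:GeneralAsymptoticDistribution} each plug-in difference is $O_P(T^{-1/2})$ uniformly in $j$, making the lag sum $O_P(m_T T^{-1/2}) = o_P(1)$; but as written this rate argument is asserted rather than proved, and without it the proposal does not establish the lemma.
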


\begin{proof}
	For this proof, we adapt the proof of \cite{NeweyWest1987} such that it allows for the discontinuity in $\psi_t(\theta)$.
	For this, we use a slightly different expansion than in equation (9) of \cite{NeweyWest1987} and we have to rely on a \textit{uniform} law of large numbers in order to establish the desired convergence.
	
	We start by showing the following uniform convergence for all $j \le T$,
	\begin{align}
	\label{eqn:HACUniformConvergence}
	\sup_{\theta \in \Theta} \left| \frac{1}{T} \sumtj \psi_t(\theta) \psi_{t-j}^\top (\theta)  - \mathbb{E} \big[ \psi_t(\theta) \psi_{t-j}^\top (\theta)  \big] \right| \toP 0.
	\end{align}
	For this, a pointwise law of large numbers (e.g., Corollary 3.48 of \cite{White2001}) holds as $\mathbb{E} \big[ ||\psi_t(\theta)||^{2 (\tilde r + \delta) } \big] < \infty$ for some $\delta > 0$ and the process follows the mixing condition from Assumption \ref{assu:AsymptoticTheory2}.
	Furthermore, Lemma \ref{lemma:HACStochasticEquicontinuous} shows that the function  $\frac{1}{T} \sumtj \psi_t(\theta) \psi_{t-j}^\top (\theta) $  is stochastically equicontinuous.
	Consequently, a uniform law of large numbers holds, see e.g. \cite{Andrews1992} for details.
	
	Consequently, by defining $\Psi_j^2(\theta) :=  \mathbb{E} \big[ \psi_t(\theta) \psi_{t-j}^\top (\theta)  \big]$, we get that
	\begin{align}
	&\left| \frac{1}{T} \sumtj \psi_t(\hat \theta_T) \psi_{t-j}^\top (\hat \theta_T)  - \mathbb{E} \big[ \psi_t(\theta^0) \psi_{t-j}^\top (\theta^0)  \big] \right| \\
	\le 	\, &\left| \frac{1}{T} \sumtj \psi_t(\hat \theta_T) \psi_{t-j}^\top (\hat \theta_T)  - \Psi_j^2(\hat \theta_T) \right|
	+ \left| \frac{1}{T} \sumtj \Psi_j^2(\hat \theta_T)  -\Psi_j^2(\theta^0) \right| \\
	\le 	\, & \sup_{\theta \in \Theta} \left| \frac{1}{T} \sumtj \psi_t( \theta) \psi_{t-j}^\top ( \theta)  - \Psi_j^2( \theta) \right|
	+ \left| \frac{1}{T} \sumtj \Psi_j^2(\hat \theta_T) -\Psi_j^2(\theta^0) \right|.
	\end{align}
	The first term converges to zero by (\ref{eqn:HACUniformConvergence}) and as the function $\Psi_j^2$ is continuous in $\theta$, the second term converges to zero by the continuous mapping theorem and as $\hat \theta_T$ is consistent.
	This also implies that for $T$ sufficiently large enough, it holds (with probability approaching one) that 
	\begin{align}
	\label{eqn:HACSupInequality}
	\left| \frac{1}{T} \sumtj \psi_t(\hat \theta_T) \psi_{t-j}^\top (\hat \theta_T)  - \mathbb{E} \big[ \psi_t(\theta^0) \psi_{t-j}^\top (\theta^0)  \big] \right|
	\le 2  \sup_{\theta \in \Theta} \left| \frac{1}{T} \sumtj \psi_t( \theta) \psi_{t-j}^\top ( \theta)  - \Psi_j^2( \theta) \right|.
	\end{align}
	Furthermore,  as $\mathbb{E} \left[ \sup_{\theta \in \Theta} \left| \left|  \psi_t( \theta) \psi_{t-j}^\top ( \theta)  - \Psi_j^2( \theta)  \right| \right|^{2(\tilde r+\delta)} \right] < \infty$ by assumption and as in equation (10) in the proof of \cite{NeweyWest1987}, we get that for all $j \ge 0$,
	\begin{align}
	\label{eqn:HACMomentInequality}
	\mathbb{E} \left[ \left( \sumtj \sup_{\theta \in \Theta} \left| \left|  \psi_t( \theta) \psi_{t-j}^\top ( \theta)  - \Psi_j^2( \theta) \right| \right| \right)^2 \right] 
	\le T (j+1) D^\ast,
	\end{align}
	for some finite constant $ D^\ast$.
	Consequently, for all $j \ge 1$,
	\begin{align}
	\begin{aligned}
	\label{eqn:HACNonsmoothTrick}
	&\mathbb{P} \left( \sum_{j=1}^{m_T} z(j,m_T)  \left| \left| \frac{1}{T} \sumtj \psi_t(\hat \theta_T) \psi_{t-j}^\top (\hat \theta_T)  - \mathbb{E} \big[ \psi_t(\theta^0) \psi_{t-j}^\top (\theta^0)  \big] \right|  \right| > \varepsilon \right) \\
	\le \, &\sum_{j=1}^{m_T} \mathbb{P} \left( \left| \left| \frac{1}{T} \sumtj \psi_t(\hat \theta_T) \psi_{t-j}^\top (\hat \theta_T)  - \mathbb{E} \big[ \psi_t(\theta^0) \psi_{t-j}^\top (\theta^0)  \big] \right| \right|  > \frac{\varepsilon}{C m_T} \right) \\
	\le \, &\sum_{j=1}^{m_T} \mathbb{P} \left( \sumtj \sup_{\theta \in \Theta}  \left| \left|  \psi_t( \theta) \psi_{t-j}^\top (\theta)  - \mathbb{E} \big[ \psi_t(\theta) \psi_{t-j}^\top (\theta)  \big] \right|  \right|  > \frac{\varepsilon T }{2 C m_T} \right) \\
	\le \, &\sum_{j=1}^{m_T} \mathbb{E} \left[ \left( \sumtj \sup_{\theta \in \Theta}   \left| \left| \psi_t( \theta) \psi_{t-j}^\top (\theta)  - \mathbb{E} \big[ \psi_t(\theta) \psi_{t-j}^\top (\theta)  \big] \right|  \right|  \right)^2 \right]    \frac{4 C^2 m_T^2}{T^2 \varepsilon^2}, \\
	\le \, &\sum_{j=1}^{m_T} T(j+1) D^\ast  \frac{4 C^2 m_T^2}{\varepsilon^2} =   \frac{4 D^\ast  C^2 }{\varepsilon^2} \, \frac{m_T^3 (m_T+3)}{T},
	\end{aligned}
	\end{align}
	where we employ (\ref{eqn:HACSupInequality}) in the second inequality, Markov's inequality in the penultimate line and (\ref{eqn:HACMomentInequality}) in the last line.
	The term in (\ref{eqn:HACNonsmoothTrick}) converges to zero as  $m_T^3 (m_T+3)/T \to 0$ as $m_T = o(T^{1/4})$.
	Now, similar to \cite{NeweyWest1987}, we split
	\begin{align}
	\left| \left| \hat{\mathcal{I}}_{T} (\theta) - \mathcal{I} \right|\right|
	&\le \left| \left| \frac{1}{T} \sumtj \psi_t(\hat \theta_T) \psi_{t}^\top (\hat \theta_T)  - \mathbb{E} \big[ \psi_t(\theta^0) \psi_{t}^\top (\theta^0)  \big] \right| \right| \\
	&+ 2 \sum_{j=1}^{m_T} z(j,m_T) \left( \left| \left| \frac{1}{T} \sumtj \psi_t(\hat \theta_T) \psi_{t-j}^\top (\hat \theta_T)  - \mathbb{E} \big[ \psi_t(\theta^0) \psi_{t-j}^\top (\theta^0)  \big] \right|  \right| \right) \\
	&+ 2 \sum_{j=1}^{m_T} |z(j,m_T) -1| \left| \left| \mathbb{E} \big[ \psi_t(\theta^0) \psi_{t-j}^\top (\theta^0) \big] \right|  \right|  \\
	&+ 2 \sum_{j=m_t+1}^{T} \left| \left| \mathbb{E} \big[ \psi_t(\theta^0) \psi_{t-j}^\top (\theta^0)  \big] \right|  \right|.
	\end{align}
	The terms in the first two lines converge to zero in probability by (\ref{eqn:HACSupInequality}) and (\ref{eqn:HACNonsmoothTrick}).
	The proofs for the terms in the last two lines equal the approach in the proof of Theorem 2 in \cite{NeweyWest1987}. 
	This concludes this proof.
\end{proof}

\begin{lemma}[Stochastic Equicontinuity for the HAC Estimator]
	\label{lemma:HACStochasticEquicontinuous}
	Given Assumption \ref{assu:AsymptoticTheory} and Assumption \ref{assu:AsymptoticTheory2}, the function $\frac{1}{T} \sumtj \psi_t(\theta) \psi_{t-j}^\top (\theta)$ is stochastically equicontinuous, where
	\begin{align}
	\psi_t(\theta) &=
	\nabla g^q_t(\theta) \left( \mathfrak{g}(g^q_t(\theta)) + \frac{\phi'(g^e_t(\theta))}{\alpha} \right) \left( \mathds{1}_{\{Y_{t+h} \le g^q_t(\theta) \}} - \alpha \right) \\
	& + \nabla g^e_t(\theta)  \phi''(g^e_t(\theta))  \left( g^e_t(\theta) - g^q_t(\theta)+ \frac{1}{\alpha} (g^q_t(\theta) - Y_{t+h}) \mathds{1}_{\{Y_{t+h} \le g^q_t(\theta) \}} \right).
	\end{align}
\end{lemma}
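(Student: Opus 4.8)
The plan is to show that the matrix-valued summand $\psi_t(\theta)\psi_{t-j}^\top(\theta)$ forms a \emph{type IV} class in the sense of \cite{Andrews1994}, so that — exactly as in the main proof and in Lemma \ref{lemma:TypeIVFunction} — the empirical-process machinery of \cite{Doukhan1995} (applicable under the $\beta$-mixing of condition \ref{cond:BetaMixing}) yields the asserted stochastic equicontinuity of $\frac{1}{T}\sum_{t\in\mathfrak{T}_j}\psi_t(\theta)\psi_{t-j}^\top(\theta)$. Concretely, I would verify the increment bound
\begin{align}
\sup_{1\le t\le T,\,T\ge1}\mathbb{E}\left[\sup_{\tilde\theta\in U(\theta,\delta)}\big\|\psi_t(\theta)\psi_{t-j}^\top(\theta)-\psi_t(\tilde\theta)\psi_{t-j}^\top(\tilde\theta)\big\|^{\,r}\right]^{1/r}\le C\delta
\end{align}
for all $\theta\in\Theta$, all small $\delta>0$, and uniformly in the lag $j$, which is the type IV property (with index $r$) for the product.

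First I would decouple the two factors. Writing
\begin{align}
\psi_t(\theta)\psi_{t-j}^\top(\theta)-\psi_t(\tilde\theta)\psi_{t-j}^\top(\tilde\theta)
&=\big(\psi_t(\theta)-\psi_t(\tilde\theta)\big)\psi_{t-j}^\top(\theta)\\
&\quad+\psi_t(\tilde\theta)\big(\psi_{t-j}(\theta)-\psi_{t-j}(\tilde\theta)\big)^\top,
\end{align}
I would bound each summand by Minkowski's inequality and then Hölder's inequality with conjugate exponents $2,2$, e.g.
\begin{align}
&\mathbb{E}\Big[\sup_{\tilde\theta}\big\|(\psi_t(\theta)-\psi_t(\tilde\theta))\psi_{t-j}^\top(\theta)\big\|^{r}\Big]^{1/r}\\
&\qquad\le\mathbb{E}\Big[\sup_{\tilde\theta}\|\psi_t(\theta)-\psi_t(\tilde\theta)\|^{2r}\Big]^{1/2r}\,\mathbb{E}\Big[\sup_{\theta}\|\psi_{t-j}(\theta)\|^{2r}\Big]^{1/2r}.
\end{align}
The second factor is finite by the envelope moment bound $\mathbb{E}\big[\sup_{\theta\in\Theta}\|\psi_t(\theta)\|^{2(r+\delta)}\big]<\infty$ in condition \ref{cond:HACMoments} (stationarity from condition \ref{cond:BetaMixing} makes it uniform in $t$), so everything reduces to showing that $\psi_t$ itself is type IV with index $2r$, i.e.\ that the first factor is $O(\delta)$.

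Establishing this increment bound for $\psi_t$ is the crux, and it proceeds exactly along the lines of Lemma \ref{lemma:TypeIVFunction}. I would decompose $\psi_t(\theta)$ into the continuously differentiable coefficient functions $\tilde A_t,\tilde B_t$ of condition \ref{cond:HACMoments} multiplying the indicator $\mathds{1}_{\{Y_{t+h}\le g_t^q(\theta)\}}$, plus purely smooth terms. For the smooth pieces the increment is controlled by $\sup_{\tilde\theta}\|\nabla_\theta\tilde A_t(\tilde\theta)\|\,\delta$ (and likewise for $\tilde B_t$), finite in $L^{2r}$ by condition \ref{cond:HACMoments}. The genuinely hard part is the indicator term, which is not pathwise Lipschitz in $\theta$; the key device — as in equation (58) of \cite{DimiBayer2019} — is to pass to the conditional expectation and use that $F_t$ is absolutely continuous with bounded, Lipschitz density (condition \ref{cond:AbsContDistribution}), giving
\begin{align}
\mathbb{E}_t\Big[\sup_{\tilde\theta\in U(\theta,\delta)}\big|\mathds{1}_{\{Y_{t+h}\le g_t^q(\theta)\}}-\mathds{1}_{\{Y_{t+h}\le g_t^q(\tilde\theta)\}}\big|^{2r}\Big]\le\sup_{\tilde\theta}\big\|\nabla_\theta g_t^q(\tilde\theta)\,h_t(g_t^q(\tilde\theta))\big\|\,\delta.
\end{align}
Multiplying by the bounded coefficients and invoking the mixed moment bounds of condition \ref{cond:HACMoments} (which couple $\sup|\tilde A_t|^{2r}$ with the density-weighted gradient) makes the indicator contribution $O(\delta)$ in $L^{2r}$ as well.

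Collecting the smooth and discontinuous contributions by Minkowski yields the type IV bound for $\psi_t$ at index $2r$, hence by the Hölder step the type IV bound for the product at index $r$; by Theorem 6 of \cite{Andrews1994} this verifies Ossiander's $L^{r}$-entropy condition, and the same empirical-process argument used in the proof of Theorem \ref{thm:GeneralAsymptoticDistribution} (via \cite{Doukhan1995}) then delivers the stochastic equicontinuity claimed in the lemma. The principal obstacle I anticipate is precisely that the product carries \emph{two} indicator discontinuities, at the distinct observations $Y_{t+h}$ and $Y_{t-j+h}$; the Hölder decoupling is what resolves this, since it reduces the problem to controlling one discontinuous factor at a time against a finite high-moment envelope for the other, so that the single conditional-density smoothing bound above suffices for each factor separately and the cross terms never require smoothing both indicators simultaneously.
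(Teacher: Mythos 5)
Your overall toolkit matches the paper's (type IV classes, the conditional-density smoothing bound from Lemma B.1 of \cite{DimiBayer2019}, then Theorems 5/6 of \cite{Andrews1994} and Theorem 1/Application 1 of \cite{Doukhan1995}), but your key step --- the Cauchy--Schwarz decoupling of the product --- creates a genuine gap. Decoupling $\big(\psi_t(\theta)-\psi_t(\tilde\theta)\big)\psi_{t-j}^\top(\theta)$ via H\"older with exponents $(2,2)$ halves the available index: you obtain the type IV property for the product only at index $r$, not $2r$. Since Assumption \ref{cond:BetaMixing} only guarantees $r>1$, you may well have $r<2$, and then Ossiander's $L^{r}$-entropy condition is too weak for the machinery of \cite{Doukhan1995}: their bracketing entropy is with respect to the norm $\|\cdot\|_{2,\beta}$, which dominates the $L^2$ norm and is itself dominated by an $L^p$ norm only for $p>2$ (paired with the mixing size; the choice $p=2r>2$ together with $\beta$-mixing of size $-r/(r-1)$ is exactly the combination that makes $\sum_k k^{1/(r-1)}\beta_k<\infty$ and hence $\|\cdot\|_{2,\beta}\lesssim\|\cdot\|_{2r}$). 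For $p=r\le 2$ the inequality runs the wrong way, so your final sentence ("the same empirical-process argument then delivers...") does not go through. Rebalancing the H\"older exponents does not rescue this: condition \ref{cond:HACMoments} only provides an $L^{2(r+\delta)}$ envelope for $\psi_t$, so any decoupled bound lands near index $r$, strictly below the required threshold unless $\delta\ge r$.

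The paper avoids the loss precisely by \emph{not} decoupling. For $j=0$ it expands $\psi_t(\theta)\psi_t^\top(\theta)$ and exploits $\mathds{1}^2=\mathds{1}$, so the product is again of the single-indicator form $\tilde A_t(\theta)\mathds{1}_{\{Y_{t+h}\le g_t^q(\theta)\}}+\tilde B_t(\theta)$ with the continuously differentiable coefficients $\tilde A_t,\tilde B_t$ of \eqref{eqn:DefTildeAt}--\eqref{eqn:DefTildeBt}; the smooth parts are handled by the gradient envelopes in condition \ref{cond:HACMoments} and the indicator part by the conditional-density bound, everything staying at index $2r$. The same principle extends to $j\ge1$ (the case the paper declares "equivalent"): the discontinuous factors are indicators, bounded by one, so the increment of the product $\mathds{1}_{t+h}(\theta)\mathds{1}_{t-j+h}(\theta)$ is bounded by the \emph{sum} of the two individual indicator increments, each smoothed separately by conditioning --- your anticipated "two-discontinuity obstacle" is thus resolved without any H\"older step against a random envelope, and no moments are sacrificed. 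Two smaller points: the indicator contribution actually yields an increment of order $\delta^{1/(2r)}$ rather than $\delta$ (since the $2r$-th power of a $0/1$ variable is itself), which is harmless for the entropy integral but should be stated as H\"older- rather than Lipschitz-type continuity; and your plan would in any case need the cross-lag coefficient moments, which condition \ref{cond:HACMoments} supplies only in the form tailored to the paper's single-indicator decomposition.
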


\begin{proof}
	We start by showing that the class of functions given by $\frac{1}{T} \sumtj \psi_t(\theta) \psi_{t-j}^\top (\theta)$ is a \textit{type IV class} (see \cite{Andrews1994}, p. 2278) with index $p=2 r$ (in the notation of \cite{Andrews1994} and where $\tilde r>1$ from condition \ref{cond:BetaMixing}), i.e. it holds that
	\begin{align}
	\underset{t,T}{\sup}  \; \mathbb{E} \left[ \underset{\tilde{\theta} \in U(\theta, \delta) }{\sup} \left| \left| \psi_t(\theta) \psi_{t-j}^\top (\theta)  - \psi_t(\tilde \theta) \psi_{t-j}^\top (\tilde \theta)  \right| \right|^{2r} \right]^{1/2r}
	\le C \delta,
	\end{align}
	for all $\theta \in \Theta$, for all $\delta > 0$ in a neighborhood of zero, and for some positive constant $C$.
	
	First notice that (for $j=0$),
	\begin{align}
	&\qquad \psi_t(\theta) \psi_{t}^\top (\theta) \\
	&= \big( \nabla g^q_t(\theta) \nabla^\top g^q_t(\theta) \big)   \left( \mathfrak{g}(g^q_t(\theta)) + \frac{\phi'(g^e_t(\theta))}{\alpha} \right)^2 \left( \mathds{1}_{\{Y_{t+h} \le g^q_t(\theta) \}} (1 - 2 \alpha) + \alpha^2 \right) \\
	&+  \big( \nabla g^e_t(\theta) \nabla^\top g^e_t(\theta) \big) \phi''(g^e_t(\theta))^2  \left( g^e_t(\theta) - g^q_t(\theta)+ \frac{1}{\alpha} (g^q_t(\theta) - Y_{t+h}) \mathds{1}_{\{Y_{t+h} \le g^q_t(\theta) \}} \right)^2  \\
	&+ 2 \big( \nabla g^q_t(\theta) \nabla^\top g^e_t(\theta) \big) \left( \mathfrak{g}(g^q_t(\theta)) + \frac{\phi'(g^e_t(\theta))}{\alpha} \right) \left( \mathds{1}_{\{Y_{t+h} \le g^q_t(\theta) \}} - \alpha \right) \times \\
	&\qquad \phi''(g^e_t(\theta))  \left( g^e_t(\theta) - g^q_t(\theta)+ \frac{1}{\alpha} (g^q_t(\theta) - Y_{t+h}) \mathds{1}_{\{Y_{t+h} \le g^q_t(\theta) \}} \right) \\
	&=: \tilde A_t(\theta) \mathds{1}_{\{Y_{t+h} \le g^q_t(\theta) \}} + \tilde B_t(\theta),
	\end{align}
	where 
	\begin{align}
	\begin{aligned}
	\label{eqn:DefTildeBt}
	\tilde B_t(\theta) 
	&:=  \big( \nabla g^q_t(\theta) \nabla^\top g^q_t(\theta) \big) \alpha^2 \left( \mathfrak{g}(g^q_t(\theta)) + \frac{\phi'(g^e_t(\theta))}{\alpha} \right)^2 \\
	& + \big( \nabla g^e_t(\theta) \nabla^\top g^e_t(\theta) \big) \phi''(g^e_t(\theta))^2  \big(g^e_t(\theta) - g^q_t(\theta)\big)^2 \\
	& + 2 \big( \nabla g^q_t(\theta) \nabla^\top g^e_t(\theta) \big)  \phi''(g^e_t(\theta))\left( \mathfrak{g}(g^q_t(\theta)) + \frac{\phi'(g^e_t(\theta))}{\alpha} \right)  \alpha \big(g^q_t(\theta) - g^e_t(\theta)\big),
	\end{aligned}
	\end{align}
	and 
	\begin{align}
	\begin{aligned}
	\label{eqn:DefTildeAt}
	\tilde A_t(\theta) 
	&:= \big( \nabla g^q_t(\theta) \nabla^\top g^q_t(\theta) \big) \left( \mathfrak{g}(g^q_t(\theta)) + \frac{\phi'(g^e_t(\theta))}{\alpha} \right)^2 ( 1-2\alpha) \\
	& + \big( \nabla g^e_t(\theta) \nabla^\top g^e_t(\theta) \big) \phi''(g^e_t(\theta))^2 \left[ \frac{1}{\alpha} (g^q_t(\theta) - Y_{t+h})^2 + \frac{2}{\alpha} \big(g^e_t(\theta) - g^q_t(\theta)\big) (g^q_t(\theta) - Y_{t+h}) \right] \\
	& + 2 \big( \nabla g^q_t(\theta) \nabla^\top g^e_t(\theta) \big) \phi''(g^e_t(\theta))\left( \mathfrak{g}(g^q_t(\theta)) + \frac{\phi'(g^e_t(\theta))}{\alpha} \right)
	\left[ \big(g^e_t(\theta) - g^q_t(\theta)\big) + \frac{1-\alpha}{\alpha} (g^q_t(\theta) - Y_{t+h}) \right].
	\end{aligned}
	\end{align}
	Further notice that both, $\tilde A_t(\theta)$ and $\tilde B_t(\theta)$ are continuously differentiable.
	In the following, we use the short notation $\mathds{1}_{t+h}(\theta) = \mathds{1}_{\{Y_{t+h} \le g^q_t(\theta) \}}$.
	For all $\theta \in \Theta$, it holds that
	\begin{align}
	\begin{aligned}
	\label{eqn:SplitSquaredEquicont}
	\mathbb{E} \left[ \underset{\tilde{\theta} \in U(\theta, \delta) }{\sup} \left|  \psi_t(\theta) \psi_{t}^\top (\theta)  - \psi_t(\tilde \theta) \psi_{t}^\top (\tilde \theta) \right|^{2r}  \right]^{1/{2r} }
	&\le \mathbb{E} \left[ \underset{\tilde{\theta} \in U(\theta, \delta) }{\sup} \left|  \tilde A_t(\theta) \mathds{1}_{t+h}(\theta)  - \tilde A_t(\tilde\theta) \mathds{1}_{t+h}(\tilde\theta)  \right|^{2r}  \right]^{1/{2r} } \\
	&+ \mathbb{E} \left[ \underset{\tilde{\theta} \in U(\theta, \delta) }{\sup} \left| \tilde B_t(\theta)  - \tilde B_t(\tilde\theta) \right|^{2r}    \right]^{1/{2r} }.
	\end{aligned}
	\end{align}
	
	We start by considering the first term in (\ref{eqn:SplitSquaredEquicont}),
	\begin{align}
	&\mathbb{E} \left[ \underset{\tilde{\theta} \in U(\theta, \delta) }{\sup} \left|  \tilde A_t(\theta) \mathds{1}_{t+h}(\theta)  - \tilde A_t(\tilde\theta) \mathds{1}_{t+h}(\tilde\theta)  \right|^{2r}  \right]^{1/{2r} } \\
	\le \, &\mathbb{E} \left[ \underset{\tilde{\theta} \in U(\theta, \delta) }{\sup} \left|  \tilde A_t(\theta) \mathds{1}_{t+h}(\theta)  - \tilde A_t(\tilde\theta) \mathds{1}_{t+h}(\theta)  \right|^{2r}  \right]^{1/{2r} }
	+\mathbb{E} \left[ \underset{\tilde{\theta} \in U(\theta, \delta) }{\sup} \left|  \tilde A_t(\tilde \theta) \mathds{1}_{t+h}(\theta)  - \tilde A_t(\tilde\theta) \mathds{1}_{t+h}(\tilde\theta)  \right|^{2r}  \right]^{1/{2r} },
	\end{align}
	where the first term is bounded from above by 
	$\mathbb{E} \left[ \underset{\tilde{\theta} \in U(\theta, \delta) }{\sup} \left|\left|  \nabla_\theta \tilde A_t(\theta) \right| \right|^{2r}  \right]^{1/{2r} } \delta$.
	For the second term, we get that
	\begin{align}
	&\mathbb{E} \left[ \underset{\tilde{\theta} \in U(\theta, \delta) }{\sup} \left|  \tilde A_t(\tilde \theta) \mathds{1}_{t+h}(\theta)  - A_t(\tilde\theta) \mathds{1}_{t+h}(\tilde\theta)  \right|^{2r}  \right]^{1/{2r} } \\
	\le \, &\mathbb{E} \left[ \underset{\tilde{\theta} \in U(\theta, \delta) }{\sup} \left| \tilde A_t(\tilde \theta) \right|^{2r}  \mathbb{E}_t \left[  \underset{\tilde{\theta} \in U(\theta, \delta) }{\sup}  \left| \mathds{1}_{t+h}(\theta)  -  \mathds{1}_{t+h}(\tilde\theta)  \right|^{2r}  \right] \right]^{1/{2r} } \\
	\le \, &\mathbb{E} \left[ \underset{\tilde{\theta} \in U(\theta, \delta) }{\sup} \left| \tilde A_t(\tilde \theta) \right|^{2r}  \mathbb{E}_t \left[  \underset{\tilde{\theta} \in U(\theta, \delta) }{\sup}  \left|\left| \nabla_\theta g_t^q(\tilde \theta) h_t(g_t^q(\tilde \theta)) \right| \right|^{2r}  \right] \right]^{1/{2r} }  \delta.
	\end{align}
	by arguments as in the proof of Lemma B.1 of \cite{DimiBayer2019}.
	Eventually, for the second term in (\ref{eqn:SplitSquaredEquicont}) is bounded from above by 
	\begin{align}
	\mathbb{E} \left[ \underset{\tilde{\theta} \in U(\theta, \delta) }{\sup} \left|\left|  \nabla_\theta \tilde B_t(\theta) \right| \right|^{2r}  \right]^{1/{2r} } \delta.
	\end{align}
	The proofs for $j \ge 1$ are equivalent and omitted here.
	Consequently, the set of functions by given $\frac{1}{T} \sumtj \psi_t(\theta) \psi_{t-j}^\top (\theta)$ are a \textit{type IV class} of \cite{Andrews1994} with index $p=2r > 2$.
	Consequently, by Theorem 5 of \cite{Andrews1994}, it satisfies "Ossiander's $L^{2r}$-entropy" condition and thus, it has a "$L^{2 \tilde r}$-envelope" given by their supremum.
	Consequently, we can apply Theorem 1 (and Application 1) of \cite{Doukhan1995} and obtain that $\frac{1}{T} \sumtj \psi_t(\theta) \psi_{t-j}^\top (\theta)$ is stochastically equicontinuous (see the Remark on p.410 of \cite{Doukhan1995}).
\end{proof}

\pagebreak	
\section{Additional Tables and Figures}
\label{sec:additional_tables_figures}

\begin{table}[tbh]
	\footnotesize
	\centering
	\caption{Empirical Sizes for the GAS processes}
	\label{tab:Size_GAS}
	\begin{tabularx}{0.95\linewidth}{l l @{\hspace{0.2cm}} rr l @{\hspace{0.1cm}} rr l @{\hspace{0.5cm}}  rr  l @{\hspace{0.1cm}} rr }
		\hline
		\hline
		\addlinespace
		& & \multicolumn{2}{c}{$\mathbb{H}_0^{(1)}$} & & \multicolumn{2}{c}{$\mathbb{H}_0^{(2)}$} & &  \multicolumn{2}{c}{$\mathbb{H}_0^{(1)}$} & & \multicolumn{2}{c}{$\mathbb{H}_0^{(2)}$}\\
		\cmidrule(lr){3-4} \cmidrule(lr){6-7} \cmidrule(lr){9-10} \cmidrule(lr){12-13}
		& &  VaR ES &  Aux ES  & &  VaR ES &  Aux ES & &   VaR ES &  Aux ES  & &  VaR ES &  Aux ES  \\ 
		\midrule
		& & \multicolumn{11}{c}{Linear link function} \\
		\midrule
		$T$ & &\multicolumn{5}{c}{GAS-t}   &  &\multicolumn{5}{c}{VaR/ES GAS}\\
		\cmidrule(lr){3-7} \cmidrule(lr){9-13}
		$250$  &        & 31.30  & 20.95  &        & 23.30  & 15.00 &        & 30.75  & 21.80  &        & 27.00  & 17.40 \\
$500$  &        & 23.45  & 15.25  &        & 15.75  & 12.25 &        & 24.30  & 19.35  &        & 19.55  & 12.35 \\
$1000$ &        & 14.80  &  9.45  &        & 12.10  &  9.25 &        & 18.70  & 13.05  &        & 15.90  & 10.60 \\
$2500$ &        & 12.70  &  7.85  &        &  8.80  &  5.85 &        & 11.75  &  9.90  &        & 12.30  &  7.25 \\
$5000$ &        &  9.60  &  5.35  &        &  8.60  &  5.75 &        &  8.60  &  7.85  &        &  9.30  &  5.85 \\
 
		\midrule
		& & \multicolumn{11}{c}{Convex link function} \\
		\midrule
		$T$ & &\multicolumn{5}{c}{GAS-t} &   &\multicolumn{5}{c}{VaR/ES GAS}\\
		\cmidrule(lr){3-7} \cmidrule(lr){9-13}
		$250$  &        & 20.62  & 17.07  &        & 12.86  & 9.80  &        &  8.30  &  7.99  &        & 10.26  & 8.05  \\
$500$  &        & 17.87  & 16.57  &        &  9.70  & 8.55  &        & 15.19  & 17.41  &        & 10.31  & 7.85  \\
$1000$ &        & 11.93  & 10.88  &        &  6.83  & 6.38  &        & 15.96  & 16.79  &        &  8.00  & 6.20  \\
$2500$ &        & 10.71  & 10.61  &        &  5.90  & 5.60  &        & 10.67  & 12.02  &        &  7.30  & 4.65  \\
$5000$ &        &  8.69  &  7.94  &        &  5.47  & 5.82  &        & 10.86  & 11.37  &        &  4.85  & 3.15  \\
  
		\midrule
		& & \multicolumn{11}{c}{No-crossing link function} \\
		\midrule
		$T$ & &\multicolumn{5}{c}{GAS-t} &   &\multicolumn{5}{c}{VaR/ES GAS}\\
		\cmidrule(lr){3-7} \cmidrule(lr){9-13}
		$250$  &        & 13.13  & 11.77  &        &  9.95  & 7.25  &        &  6.87  &  5.87  &        &  8.00  & 7.55  \\
$500$  &        & 11.44  & 10.34  &        & 10.55  & 6.95  &        & 10.34  & 11.05  &        &  9.55  & 8.85  \\
$1000$ &        &  7.87  &  7.72  &        &  9.90  & 6.25  &        & 10.87  & 10.82  &        & 10.25  & 9.05  \\
$2500$ &        &  7.18  &  6.38  &        & 10.10  & 6.35  &        &  8.15  &  8.05  &        & 11.95  & 8.50  \\
$5000$ &        &  7.47  &  6.11  &        &  8.55  & 6.10  &        &  8.36  &  8.31  &        &  9.00  & 6.55  \\ 
		\addlinespace
		\hline
		\hline 
		\addlinespace
		\multicolumn{13}{p{.90\linewidth}}{\textit{Notes:} This table shows the empirical sizes for the encompassing tests for one-step ahead forecasts stemming from the two additional DGPs described in Section \ref{sec:add_simulation}, the three link functions, the joint VaR and ES (VaR ES) and auxiliary ES (Aux ES) test and both null hypotheses with a nominal size of $5\%$. 
			The columns denoted by ``GAS-t'' contain results for the GARCH(1,1) model with normal innovations and a GAS-$t$ model, whereas those labeled ``VaR/ES GAS'' report results for the one and two factor GAS models introduced by \cite{Patton2019}.}
	\end{tabularx}
\end{table}

\begin{table}[tbh]
	\footnotesize
	\centering
	\caption{Empirical Sizes for Multi-Step Forecasts}
	\label{tab:Multistep_AuxES}
	\begin{tabularx}{0.75\linewidth}{l l @{\hspace{0.5cm}} rrrr l @{\hspace{1cm}} rrrr }
		\hline\hline
		\addlinespace
		& & \multicolumn{4}{c}{$\mathbb{H}_0^{(1)}$} & & \multicolumn{4}{c}{$\mathbb{H}_0^{(2)}$} \\
		\cmidrule(lr){3-6} \cmidrule(lr){8-11} 
		$h$ & &  1 &  2  & 5 & 10 &   &  1 &  2  & 5 & 10  \\ 
		\midrule
		$T$ & &\multicolumn{9}{c}{$h$-step ahead forecasts}\\
		\cmidrule(lr){3-11}
		$250$  &        &  8.34  & 10.26  &  9.26  &  5.53  &        & 6.96   & 6.96   & 7.26   & 4.64  \\
$500$  &        &  8.31  &  8.71  & 12.56  & 10.63  &        & 4.70   & 5.42   & 6.07   & 6.43  \\
$1000$ &        &  6.83  &  7.00  & 10.16  & 13.25  &        & 3.21   & 4.22   & 4.81   & 6.71  \\
$2500$ &        &  4.30  &  4.50  &  6.60  & 10.94  &        & 3.91   & 3.92   & 4.91   & 7.03  \\
$5000$ &        &  3.60  &  4.83  &  5.82  &  9.50  &        & 3.50   & 3.30   & 5.11   & 6.31  \\
 
		\midrule
		$T$ & &\multicolumn{9}{c}{$h$-step aggregate forecasts}\\
		\cmidrule(lr){3-11}
		$250$  &        &  8.44  & 13.65  & 21.30  & 29.91  &        &  6.96  & 10.14  & 19.80  & 25.76 \\
$500$  &        &  8.81  & 12.51  & 20.08  & 30.74  &        &  4.90  &  8.27  & 16.37  & 22.25 \\
$1000$ &        &  6.93  &  8.63  & 16.82  & 25.63  &        &  3.31  &  5.45  & 13.33  & 18.07 \\
$2500$ &        &  4.20  &  7.04  & 10.53  & 20.08  &        &  3.81  &  4.02  &  7.37  & 10.63 \\
$5000$ &        &  3.50  &  4.92  &  9.12  & 15.36  &        &  3.30  &  2.91  &  5.22  &  7.92 \\
 
		\addlinespace
		\hline\hline
		\addlinespace
		\multicolumn{11}{p{.72\linewidth}}{\textit{Notes:} This table shows the empirical sizes for the auxiliary ES encompassing test for the $h$-step ahead and the $h$-step aggregate forecasts and both null hypotheses with a nominal size of $5\%$. It shows the results for the GARCH specification with normal innovations and the convex link function.
		}
	\end{tabularx}
\end{table}

\begin{table}[tbh]
	\footnotesize
	\centering
	\caption{Parameter Estimates of the Risk Models for the Empirical Application}
	\label{tab:ParamEstimates}
	\begin{tabular}{lccccccccc}
		\hline\hline
		\addlinespace
		{Volatility Models} & $\beta_{0}$ & $\beta_{1}$ & $\beta_{2}$ & $\beta_{3}$ & $v$ & $\lambda$ & $p$ & $a$ & $b$\\
		\addlinespace
		\hline 
		\addlinespace
		GARCH-N & {0.023} & {0.859} & {0.125} &  &  &  &  &  & \\
		\addlinespace 
		GJR-ST & {0.018} & {0.879} & {0.001} & {0.218} & {7.364} & {0.869} &  &  & \\
		\addlinespace
		GARCH-AL & {0.020} & {0.871} & {0.129} &  &  &  & {0.545} &  & \\
		\addlinespace 
		GJR-AL & {0.022} & {0.887} & {-0.021} & {0.267} &  &  & {0.560} &  & \\
		\addlinespace
		GARCH-AL-TVP & {0.020} & {0.870} & {0.130} &  &  & {0.980} &  &  & \\
		\addlinespace
		GJR-AL-TVP & {0.022} & {0.889} & {-0.020} & {0.262} &  & {0.979} &  &  & \\
		\addlinespace
		GAS-1F &  & {0.930} & {-0.003} & {0.034} &  &  &  & {-1.449} & {-1.848}\\
		\addlinespace
		\hline\hline
		\addlinespace
		{CAViaR-ES Models} & $\beta_{0} $ & $\beta_{1} $ & $\beta_{2}$ & $\beta_{3}$ & $\kappa_{0}$ & $\kappa_{1}$ & $\kappa_{2}$ &  & \\
		\addlinespace
		\hline 
		\addlinespace
		SAV & {-0.099} & {0.841} & {-0.337} &  & {-1.233} &  &  &  & \\
		\addlinespace
		AS & {-0.072} & {0.889} & {-0.004} & {-0.436} & {0.006} & {0.890} & {0.113} &  & \\
		\hline 
		\hline
		\addlinespace
		\multicolumn{10}{p{0.92\linewidth}}{\textit{Notes:} The entries in this table show parameter estimates from the risk models described in Section \ref{sec:onestep_forecasts} and Appendix \ref{sec:RiskModels} for the full sample.}
	\end{tabular}
\end{table}

\begin{landscape}
	\begin{table}[tbh]
		\footnotesize
		\centering
		\caption{Correlations of VaR and ES One-Step Ahead Forecasts}
		\label{tab:corr_one}
		\begin{tabularx}{0.92\linewidth}{l @{\hspace{0.3cm}}ccccccccccc}
			\hline\hline
			\addlinespace
			& \multicolumn{11}{c}{Correlations of VaR Forecasts} \\
			\addlinespace
			\cmidrule(lr){2-12}
			\addlinespace
			& {Hist} & {Risk} & GARCHN & GJR & {GARCH} & {GJR} & {GARCH} & {GJR} & & SAV  & AS \\
			& {Sim} & {Metrics} & {-N} & {-ST} & {-AL} & {-AL} & {-AL-TVP} & {-AL-TVP} & {GAS-1F} & {CAViaR-ES} & {CAViaR-ES} \\
			\addlinespace
			\hline 
			\addlinespace
			{Hist Sim} & {1.000} & {0.439} & {0.340} & {0.244} & {0.360} & {0.256} & {0.375} & {0.222} & {0.345} & {0.292} & {0.200}\\
			{RiskMetrics} &  & {1.000} & {0.944} & {0.866} & {0.954} & {0.862} & {0.990} & {0.845} & {0.847} & {0.933} & {0.844}\\
			{GARCH-N} &  &  & {1.000} & {0.956} & {0.997} & {0.945} & {0.968} & {0.934} & {0.883} & {0.978} & {0.932}\\
			{GJR-ST} &  &  &  & {1.000} & {0.944} & {0.994} & {0.909} & {0.995} & {0.873} & {0.929} & {0.975}\\
			{GARCH-AL} &  &  &  &  & {1.000} & {0.936} & {0.973} & {0.921} & {0.883} & {0.977} & {0.921}\\
			{GJR-AL} &  &  &  &  &  & {1.000} & {0.903} & {0.995} & {0.880} & {0.921} & {0.969}\\
			{GARCH-AL-TVP} &  &  &  &  &  &  & {1.000} & {0.893} & {0.863} & {0.952} & {0.888}\\
			{GJR-AL-TVP} &  &  &  &  &  &  &  & {1.000} & {0.867} & {0.907} & {0.972}\\
			{GAS-1F} &  &  &  &  &  &  &  &  & {1.000} & {0.832} & {0.812}\\
			{SAV-CAViaR-ES} &  &  &  &  &  &  &  &  &  & {1.000} & {0.928}\\
			{AS-CAViaR-ES} &  &  &  &  &  &  &  &  &  &  & {1.000}\\
			\addlinespace
			\hline\hline
			\addlinespace
			& \multicolumn{11}{c}{Correlations of ES Forecasts} \\
			\addlinespace
			\cmidrule(lr){2-12}
			\addlinespace
			& {Hist} & {Risk} & GARCHN & GJR & {GARCH} & {GJR} & {GARCH} & {GJR} & & SAV  & AS \\
			& {Sim} & {Metrics} & {-N} & {-ST} & {-AL} & {-AL} & {-AL-TVP} & {-AL-TVP} & {GAS-1F} & {CAViaR-ES} & {CAViaR-ES} \\
			\addlinespace
			\hline 
			\addlinespace
			{Hist Sim} & {1.000} & {0.492} & {0.400} & {0.273} & {0.416} & {0.316} & {0.437} & {0.286} & {0.385} & {0.331} & {0.288}\\
			{RiskMetrics} &  & {1.000} & {0.944} & {0.853} & {0.954} & {0.861} & {0.991} & {0.846} & {0.836} & {0.901} & {0.838}\\
			{GARCH-N} &  &  & {1.000} & {0.948} & {0.998} & {0.944} & {0.968} & {0.935} & {0.878} & {0.924} & {0.921}\\
			{GJR-ST} &  &  &  & {1.000} & {0.936} & {0.991} & {0.897} & {0.994} & {0.874} & {0.864} & {0.953}\\
			{GARCH-AL} &  &  &  &  & {1.000} & {0.936} & {0.974} & {0.923} & {0.874} & {0.925} & {0.910}\\
			{GJR-AL} &  &  &  &  &  & {1.000} & {0.901} & {0.996} & {0.880} & {0.870} & {0.953}\\
			{GARCH-AL-TVP} &  &  &  &  &  &  & {1.000} & {0.891} & {0.855} & {0.920} & {0.880}\\
			{GJR-AL-TVP} &  &  &  &  &  &  &  & {1.000} & {0.876} & {0.855} & {0.954}\\
			{GAS-1F} &  &  &  &  &  &  &  &  & {1.000} & {0.807} & {0.837}\\
			{SAV-CAViaR-ES} &  &  &  &  &  &  &  &  &  & {1.000} & {0.891}\\
			{AS-CAViaR-ES} &  &  &  &  &  &  &  &  &  &  & {1.000}\\
			\addlinespace
			\hline 
			\hline
			\addlinespace
			\multicolumn{12}{p{0.9\linewidth}}{\textit{Notes:} This table contains the pairwise correlations for one-step ahead VaR and ES forecasts stemming from the eleven considered risk models.}
		\end{tabularx}
	\end{table}
\end{landscape}

\begin{landscape}
	\begin{table}[tbh]
		\footnotesize
		\centering
		\caption{Detailed Encompassing Test Results for One-Step Ahead Forecasts}
		\label{tab:JointOne}
		\begin{tabularx}{0.86\linewidth}{l @{\hspace{0.5cm}} cccccc }
			\hline \hline
			\addlinespace
			\multicolumn{7}{c}{Joint VaR and ES Test} \\
			\addlinespace
			\hline
			\addlinespace
			& {GJR-ST} & {GARCH-AL} & {GARCH-AL-TVP} & {GAS-1F} & {SAV-CAViaR-ES} & {AS-CAViaR-ES}\\
			\addlinespace
			\hline 
			\addlinespace
			{GJR-ST} & {} & {(0.89, 1.00)} & {(0.84, 1.00)} & {(1.00, 1.00)} & {(0.78, 1.00)} & {(0.70, 1.00)}\\
			\addlinespace
			{GARCH-AL} & {(0.11, 0.00){*}} & {} & {(0.90, 1.00)} & {(0.86, 0.97)} & {(0.68, 1.00)} & {(0.45, 0.46){*}}\\
			\addlinespace
			{GARCH-AL-TVP} & {(0.16, 0.00){*}} & {(0.10, 0.00){*}} & {} & {(0.74, 0.82)} & {(0.53, 0.45){*}} & {(0.37, 0.11){*}}\\
			\addlinespace
			{GAS-1F} & {(0.00, 0.00){*}} & {(0.014, 0.03){*}} & {(0.26, 0.18){*}} & {} & {(0.23, 0.33){*}} & {(0.06, 0.09){*}}\\
			\addlinespace
			{SAV-CAViaR-ES} & {(0.22, 0.00){*}} & {(0.32, 0.00){*}} & {(0.47, 0.55){*}} & {(0.77, 0.67){*}} & {} & {(0.23, 0.11){*}}\\
			\addlinespace
			{AS-CAViaR-ES} & {(0.30, 0.00){*}} & {(0.55, 0.54){*}} & {(0.63, 0.89)} & {(0.94, 0.91)} & {(0.77, 0.89)} & {}\\
			\addlinespace
			\hline 
			\hline
			\\
			\addlinespace
			\multicolumn{7}{c}{Auxiliary ES Test} \\
			\addlinespace
			\hline
			\addlinespace
			& {GJR-ST} & {GARCH-AL} & {GARCH-AL-TVP} & {GAS-1F} & {SAV-CAViaR-ES} & {AS-CAViaR-ES}\\
			\addlinespace
			\hline 
			\addlinespace
			{GJR-ST} & {} & {1.00} & {1.00} & {1.00} & {1.00} & {1.00}\\
			\addlinespace
			{GARCH-AL} & {0.00{*}} & {} & {1.00} & {0.97} & {1.00} & {0.46{*}}\\
			\addlinespace
			{GARCH-AL-TVP} & {0.00{*}} & {0.00{*}} & {} & {0.82} & {0.45} & {0.11{*}}\\
			\addlinespace
			{GAS-1F} & {0.00{*}} & {0.03{*}} & {0.18{*}} & {} & {0.33{*}} & {0.09{*}}\\
			\addlinespace
			{SAV-CAViaR-ES} & {0.00{*}} & {0.00{*}} & {0.55} & {0.67} & {} & {0.11{*}}\\
			\addlinespace
			{AS-CAViaR-ES} & {0.00{*}} & {0.54} & {0.89} & {0.91} & {0.89} & {}\\
			\addlinespace
			\hline 
			\hline 
			\addlinespace
			\multicolumn{7}{p{0.84\linewidth}}{\textit{Notes:} This table reports the estimates of the convex combination parameters $\left(\theta_{1},\theta_{2}\right)$ from the convex link function with intercepts for each pair of models.
				The symbol $^\ast$ indicates that the null hypothesis that the VaR and ES forecasts of a row-heading model jointly encompasses those of a column-heading model is rejected at the $5\%$ significance level.}
		\end{tabularx}
	\end{table}
\end{landscape}

\begin{table}[tbh]
	\footnotesize
	\centering
	\caption{Correlations of VaR and ES Multi-Step Forecasts}
	\label{tab:corr_ahead}
	\begin{tabularx}{0.88\linewidth}{l @{\hspace{0.2cm}}ccccccc}
		\hline\hline
		\addlinespace
		& {Risk} & {GARCH} & GJR & {GARCH} & {GJR} & {GARCH} & {GJR} \\
		& {Metrics} & {-N} & {-ST} & -AL  & -AL & {-AL-TVP} & {-AL-TVP}\\
		\addlinespace
		\hline 
		\addlinespace
		& \multicolumn{7}{c}{Correlations of 10-step Ahead VaR forecasts} \\
		\addlinespace
		\cmidrule(lr){2-8}
		\addlinespace
		{RiskMetrics} & {1.000} & {0.943} & {0.907} & {0.953} & {0.897} & {0.989} & {0.868}\\
		{GARCH-N} &  & {1.000} & {0.973} & {0.994} & {0.965} & {0.960} & {0.933}\\
		{GJR-ST} &  &  & {1.000} & {0.972} & {0.991} & {0.940} & {0.983}\\
		{GARCH-AL} &  &  &  & {1.000} & {0.964} & {0.969} & {0.937}\\
		{GJR-AL} &  &  &  &  & {1.000} & {0.928} & {0.977}\\
		{GARCH-AL-TVP} &  &  &  &  &  & {1.000} & {0.912}\\
		{GJR-AL-TVP} &  &  &  &  &  &  & {1.000}\\
		\addlinespace
		\hline\hline
		\addlinespace
		& \multicolumn{7}{c}{Correlations of 10-step Ahead ES forecasts} \\
		\addlinespace
		\cmidrule(lr){2-8}
		\addlinespace
		{RiskMetrics} & {1.000} & {0.943} & {0.898} & {0.952} & {0.897} & {0.989} & {0.869}\\
		{GARCH-N} &  & {1.000} & {0.964} & {0.993} & {0.964} & {0.960} & {0.934}\\
		{GJR-ST} &  &  & {1.000} & {0.966} & {0.987} & {0.934} & {0.987}\\
		{GARCH-AL} &  &  &  & {1.000} & {0.965} & {0.968} & {0.939}\\
		{GJR-AL} &  &  &  &  & {1.000} & {0.928} & {0.977}\\
		{GARCH-AL-TVP} &  &  &  &  &  & {1.000} & {0.912}\\
		{GJR-AL-TVP} &  &  &  &  &  &  & {1.000}\\
		\addlinespace
		\hline
		\hline
		\addlinespace
		& \multicolumn{7}{c}{Correlations of 10-step Aggregate VaR forecasts} \\
		\addlinespace
		\cmidrule(lr){2-8}
		\addlinespace
		{RiskMetrics} & {1.000} & {0.951} & {0.910} & {0.954} & {0.902} & {0.982} & {0.863}\\
		{GARCH-N} &  & {1.000} & {0.976} & {0.994} & {0.968} & {0.965} & {0.938}\\
		{GJR-ST} &  &  & {1.000} & {0.967} & {0.987} & {0.945} & {0.982}\\
		{GARCH-AL} &  &  &  & {1.000} & {0.969} & {0.962} & {0.924}\\
		{GJR-AL} &  &  &  &  & {1.000} & {0.934} & {0.966}\\
		{GARCH-AL-TVP} &  &  &  &  &  & {1.000} & {0.921}\\
		{GJR-AL-TVP} &  &  &  &  &  &  & {1.000}\\
		\addlinespace
		\hline
		\hline
		\addlinespace
		& \multicolumn{7}{c}{Correlations of 10-step Aggregate ES forecasts} \\
		\addlinespace
		\cmidrule(lr){2-8}
		\addlinespace
		{RiskMetrics} & {1.000} & {0.951} & {0.908} & {0.955} & {0.903} & {0.984} & {0.866}\\
		{GARCH-N} &  & {1.000} & {0.975} & {0.995} & {0.968} & {0.966} & {0.941}\\
		{GJR-ST} &  &  & {1.000} & {0.967} & {0.987} & {0.943} & {0.985}\\
		{GARCH-AL} &  &  &  & {1.000} & {0.968} & {0.965} & {0.929}\\
		{GJR-AL} &  &  &  &  & {1.000} & {0.935} & {0.971}\\
		{GARCH-AL-TVP} &  &  &  &  &  & {1.000} & {0.920}\\
		{GJR-AL-TVP} &  &  &  &  &  &  & {1.000}\\
		\addlinespace
		\hline
		\hline
		\addlinespace
		\multicolumn{8}{p{0.86\linewidth}}{\textit{Notes:} This table reports the pairwise correlations of the seven GARCH-type risk models for the VaR and ES 10-step \textit{ahead}  forecasts in the upper two panels and for the VaR and ES 10-step \textit{aggregate} forecasts in the lower two panels.}
	\end{tabularx}
\end{table}

\begin{landscape}
	\begin{table}[tbh]
		\footnotesize
		\centering
		\caption{Detailed Encompassing Test Results for 10-Day Ahead Forecasts}
		\label{tab:JointDetailsMultiAhead}
		\begin{tabularx}{0.9\linewidth}{l @{\hspace{0.5cm}} ccccccc }
			\hline \hline
			\addlinespace
			\multicolumn{8}{c}{Joint VaR and ES Test} \\
			\addlinespace
			\hline
			\addlinespace
			& {RiskMetrics} & {GARCH-N} & {GJR-ST} & {GARCH-AL} & {GJR-AL} & {GARCH-AL-TVP} & {GJR-AL-TVP} \\
			\addlinespace
			\hline 
			\addlinespace
			{RiskMetrics} & & (0.32, 0.26) & (0.45, 0.41) & (0.31, 0.63) & (0.75, 0.82) & (0.12, 0.16)* & (0.00, 0.00)* \\
			\addlinespace
			{GARCH-N} & (0.68, 0.74) & & (0.20, 0.63) & (0.33, 0.52) & (0.75, 0.82) & (0.25, 0.20)* & (0.06, 0.05)* \\
			\addlinespace
			{GJR-ST} & (0.55, 0.59)* & (0.80, 0.37) &  & (0.47, 0.55) & (0.86, 0.72) & (0.20, 0.11)* & (0.00, 0.00)* \\
			\addlinespace
			{GARCH-AL} & (0.69, 0.37) & (0.67, 0.48) & (0.53, 0.45) & & (0.86, 0.32) & (0.28, 0.25)* & (0.00, 0.00)* \\
			\addlinespace
			{GJR-AL} & (0.25, 0.18) & (0.25, 0.18)* & (0.14, 0.28)* & (0.14, 0.68)* & & (0.32,0.22)* & (0.00, 0.00)* \\
			\addlinespace
			{GARCH-AL-TVP} & (0.88, 0.84) & (0.75, 0.80) & (0.80, 0.89) & (0.72, 0.75) & (0.68, 0.78) &  & (0.00, 0.11)* \\
			\addlinespace
			{GJR-AL-TVP} & (1.00, 1.00) & (0.94, 0.95) & (1.00, 1.00) & (1.00, 1.00) & (1.00, 1.00) & (1.00, 0.89) &  \\
			\addlinespace
			\hline 
			\hline
			\\
			\addlinespace
			\multicolumn{8}{c}{Auxiliary ES Test} \\
			\addlinespace
			\hline
			\addlinespace
			& {RiskMetrics} & {GARCH-N} & {GJR-ST} & {GARCH-AL} & {GJR-AL} & {GARCH-AL-TVP} & {GJR-AL-TVP} \\
			\addlinespace
			\hline 
			\addlinespace
			{RiskMetrics} & & 0.26 & 0.41* & 0.63 & 0.82* & 0.16 & 0.00* \\
			\addlinespace
			{GARCH-N} & 0.74 & & 0.63 & 0.52 & 0.82 & 0.20* & 0.05* \\
			\addlinespace
			{GJR-ST} & 0.59 & 0.37 & & 0.55 & 0.72 & 0.11 & 0.00* \\
			\addlinespace
			{GARCH-AL} & 0.37 & 0.48 & 0.45* & & 0.32 & 0.25* & 0.00* \\
			\addlinespace
			{GJR-AL} & 0.18 & 0.18* & 0.28* & 0.68 & & 0.22* & 0.00* \\
			\addlinespace
			{GARCH-AL-TVP} & 0.84 & 0.80 & 0.89 & 0.75 & 0.78 & & 0.11 \\
			\addlinespace
			{GJR-AL-TVP} & 1.00 & 0.95 & 1.00 & 1.00 & 1.00 & 0.89 & \\
			\addlinespace
			\hline 
			\hline 
			\addlinespace
			\multicolumn{8}{p{0.88\linewidth}}{\textit{Notes:} This table reports the estimates of the convex combination parameters $\left(\theta_{1},\theta_{2}\right)$ from the convex link function with intercepts for each pair of models for 10-day ahead forecasts.
				The symbol $^\ast$ indicates that the null hypothesis that the VaR and ES forecasts of a row-heading model jointly encompasses those of a column-heading model is rejected at the $5\%$ significance level.}
		\end{tabularx}
	\end{table}
\end{landscape}

\begin{landscape}
	\begin{table}[tbh]
		\footnotesize
		\centering
		\caption{Detailed Encompassing Test Results for 10-Day Aggregate Forecasts}
		\label{tab:JointDetailsMultiAggregate}
		\begin{tabularx}{0.9\linewidth}{l @{\hspace{0.5cm}} ccccccc }
			\hline \hline
			\addlinespace
			\multicolumn{8}{c}{Joint VaR and ES Test} \\
			\addlinespace
			\hline
			\addlinespace
			& {RiskMetrics} & {GARCH-N} & {GJR-ST} & {GARCH-AL} & {GJR-AL} & {GARCH-AL-TVP} & {GJR-AL-TVP} \\
			\addlinespace
			\hline 
			\addlinespace
			{RiskMetrics} & & (0.12, 0.08)* & (0.08, 0.11)* & (0.07, 0.37)* & (0.32, 0.29)* & (0.05, 0.12)* & (0.05, 0.08)* \\
			\addlinespace
			{GARCH-N} & (0.88, 0.92) &  & (0.00, 0.26)* & (0.62, 0.88) & (0.73, 0.82) & (0.03, 0.18)* & (0.16, 0.10)* \\
			\addlinespace
			{GJR-ST} & (0.92, 0.89) & (1.00, 0.74) & & (1.00, 1.00) & (0.51, 0.44) & (0.26, 0.29) & (0.08, 0.00)* \\
			\addlinespace
			{GARCH-AL} & (0.93, 0.63) & (0.38, 0.12)* & (0.00, 0.00)* & & (0.43, 0.53)* & (0.00, 0.11)* & (0.17, 0.09)* \\
			\addlinespace
			{GJR-AL} &  (0.68, 0.71) & (0.27, 0.18) & (0.49, 0.56) & (0.57, 0.47)* & & (0.00, 0.08)* & (0.00, 0.17)* \\
			\addlinespace
			{GARCH-AL-TVP} &  (0.95, 0.88) & (0.97, 0.82) & (0.74, 0.71) & (1.00, 0.89) & (1.00, 0.92) & & (0.11, 0.15)* \\
			\addlinespace
			{GJR-AL-TVP} & (0.95, 0.92) & (0.84, 0.90) & (0.92, 1.00) & (0.83, 0.91) & (1.00, 0.93) & (0.89, 0.85) & \\
			\addlinespace
			\hline 
			\hline
			\\
			\addlinespace
			\multicolumn{8}{c}{Auxiliary ES Test} \\
			\addlinespace
			\hline
			\addlinespace
			& {RiskMetrics} & {GARCH-N} & {GJR-ST} & {GARCH-AL} & {GJR-AL} & {GARCH-AL-TVP} & {GJR-AL-TVP} \\
			\addlinespace
			\hline 
			\addlinespace
			{RiskMetrics} & & 0.08* & 0.11* & 0.37*  & 0.29* & 0.12* & 0.08* \\
			\addlinespace
			{GARCH-N} & 0.92 & & 0.26 & 0.88 & 0.82 & 0.18* & 0.10* \\
			\addlinespace
			{GJR-ST} & 0.89 & 0.74 & & 1.00 & 0.44 & 0.29* & 0.00* \\
			\addlinespace
			{GARCH-AL} & 0.63 & 0.12* & 0.00* & & 0.53 & 0.11* & 0.09* \\
			\addlinespace
			{GJR-AL} & 0.71 & 0.18 & 0.56 & 0.47 & & 0.08* & 0.17* \\
			\addlinespace
			{GARCH-AL-TVP} & 0.88 & 0.82 & 0.71 & 0.89 & 0.92 & & 0.15 \\
			\addlinespace
			{GJR-AL-TVP} & 0.92 & 0.90 & 1.00 & 0.91 & 0.93 & 0.85 & \\
			\addlinespace
			\hline 
			\hline 
			\addlinespace
			\multicolumn{8}{p{0.88\linewidth}}{\textit{Notes:} This table reports the estimates of the convex combination parameters $\left(\theta_{1},\theta_{2}\right)$ from the convex link function with intercepts for each pair of models for 10-day aggregate forecasts.
				The symbol $^\ast$ indicates that the null hypothesis that the VaR and ES forecasts of a row-heading model jointly encompasses those of a column-heading model is rejected at the $5\%$ significance level.}
		\end{tabularx}
	\end{table}
\end{landscape}

\begin{table}[t]
	\footnotesize
	\centering
	\caption{Backtesting Results for One-Step Ahead Forecasts}
	\label{tab:AbsEvaluation}
	\begin{tabular}{lclccccccc}
		\hline\hline
		\addlinespace
		{Models} & {Violation} & ESR & UC & CC & DQ & VQR & MF & BD & NZ\\
		& {Ratio} &   &   &   &   &   &   &   &  \\
		\addlinespace
		\hline 
		\addlinespace
		{Historical Sim} & {1.47} & {1.08} & {$<$0.001} & {$<$0.001} & {$<$0.001} & {$<$0.001} & {0.03} & {$<$0.001} & {$<$0.001}\\
		\addlinespace
		{RiskMetrics} & {1.68} & {1.18} & {$<$0.001} & {$<$0.001} & {0.12} & {$<$0.001} & {$<$0.001} & {$<$0.001} & {$<$0.001}\\
		\addlinespace
		{GARCH-N} & {1.57} & {1.12} & {$<$0.001} & {$<$0.001} & {0.39} & {$<$0.001} & {$<$0.001} & {$<$0.001} & {$<$0.001}\\
		\addlinespace
		{GJR-ST} & {1.20} & {0.98} & \textbf{0.09} & \textbf{0.17} & \textbf{0.96} & \textbf{0.36} & \textbf{0.55} & \textbf{0.23} & \textbf{0.14}\\
		\addlinespace
		{GARCH-AL} & {0.84} & {0.99} & \textbf{0.15} & \textbf{0.16} & \textbf{0.88} & \textbf{0.19} & \textbf{0.71} & \textbf{0.15} & \textbf{0.09}\\
		\addlinespace
		{GJR-AL} & {0.69} & {0.98} & {$<$0.001} & {0.01} & {0.89} & {$<$0.001} & {0.62} & {0.01} & {$<$0.001}\\
		\addlinespace
		{GARCH-AL-TVP} & {1.07} & {0.98} & \textbf{0.56} & \textbf{0.19} & \textbf{0.53} & \textbf{0.84} & \textbf{0.54} & \textbf{0.55} & \textbf{0.52}\\
		\addlinespace
		{GJR-AL-TVP} & {1.01} & {0.97} & {0.91} & {0.40} & {0.96} & {0.04} & {0.43} & {0.06} & {0.02}\\
		\addlinespace
		{GAS-1F} & {1.20} & {1.03} & \textbf{0.09} & \textbf{0.23} & \textbf{0.68} & \textbf{0.31} & \textbf{0.40} & \textbf{0.18} & \textbf{0.13}\\
		\addlinespace
		{SAV-CAViaR-ES} & {1.11} & {1.02} & \textbf{0.36} & \textbf{0.18} & \textbf{0.80} & \textbf{0.66} & \textbf{0.54} & \textbf{0.45} & \textbf{0.55}\\
		\addlinespace
		{AS-CAViaR-ES} & {1.15} & {1.02} & \textbf{0.21} & \textbf{0.29} & \textbf{0.87} & \textbf{0.25} & \textbf{0.50} & \textbf{0.26} & \textbf{0.30}\\
		\addlinespace
		\hline 
		\bottomrule 
		\addlinespace
		\multicolumn{10}{p{\linewidth}}{\textit{Notes:} The Violation Ratio is given by $\hat{\alpha}/\alpha$, where $\hat{\alpha}=T^{-1}\sum_{t\in \mathfrak{T}}\mathds{1}_{\{ Y_{t+1}<\hat{q}_{t} \}}$ and the empirical ES ratio is computed as $\text{ESR} =\sum_{t \in \mathfrak{T}} \left.\left[Y_{t+1}\right.\mathds{1}_{\{ Y_{t+1}<\hat{q}_{t} \}}\right] / \sum_{t \in \mathfrak{T}} \left[ \hat{e}_{t}\mathds{1}_{\{ Y_{t+1}<\hat{q}_{t} \}}\right]$.
			Both ratios are expected to equal one for correctly specified VaR and ES forecasts.
			The remaining columns report backtesting $p$-values for the unconditional coverage (UC) test of \cite{Kupiec1995}, the conditional coverage (CC) test of \cite{Christoffersen1998}, the dynamic quantile (DQ) test of \cite{Engle2004}, the VQR test of \cite{Gaglianone2011}, the ES backtest of \cite{McNeil2000} (MF), the regression-based ES backtest of \cite{BayerDimi2019} (BD), and for the calibration test of \cite{NoldeZiegel2017AAS} (NZ).
			Rows with p-values in bold indicate that for a respective model, the null hypotheses of all seven backtests cannot be rejected at the $5\%$ significance level.}
	\end{tabular}
\end{table}

\FloatBarrier
\clearpage

\begin{figure}[ph!]
	\includegraphics[width=\linewidth]{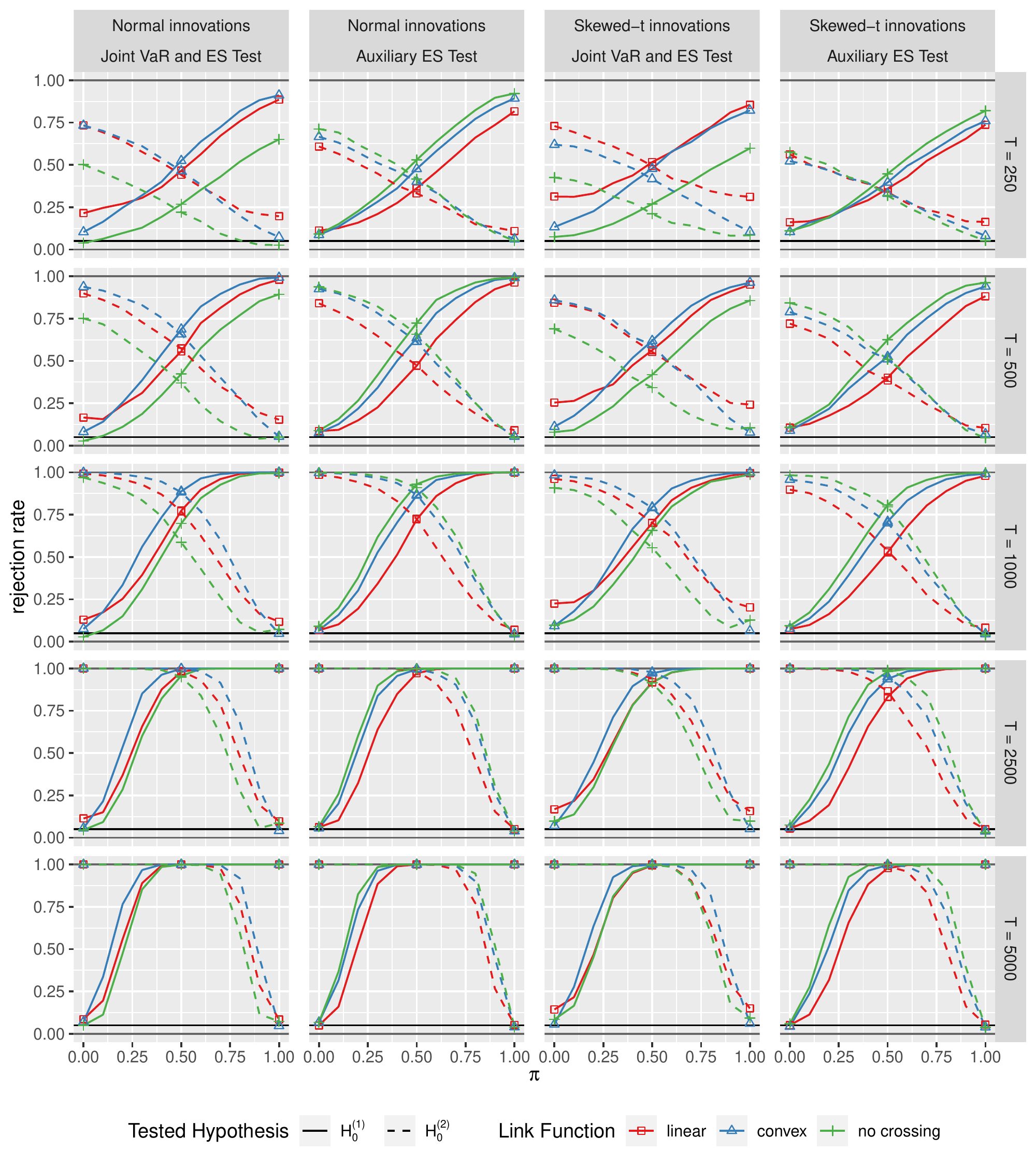}
	\caption{
		This figure shows raw power curves (empirical rejection frequencies) for the joint VaR and ES and the auxiliary ES encompassing tests with a nominal size of $5\%$.
		The employed link functions are indicated with the line color and symbol shape while the line type refers to the tested null hypothesis.
		The plot rows depict different sample sizes while the plot columns show results for the two innovation distributions described in \eqref{eqn:GARCHModel} - \eqref{eqn:GARCHresiduals} and for the joint and the auxiliary tests.
		An ideal test exhibits a rejection rate of $5\%$ for $\pi=0$ and for $\mathbb{H}_0^{(1)}$ (and inversely for $\pi=1$ and $\mathbb{H}_0^{(2)}$) and as sharply increasing rejection rates as possible for increasing (decreasing) values of $\pi$.}
	\label{fig:sim_VaRES_GARCH_rawpower}
\end{figure}

\begin{figure}[ph!]
	\includegraphics[width=\linewidth]{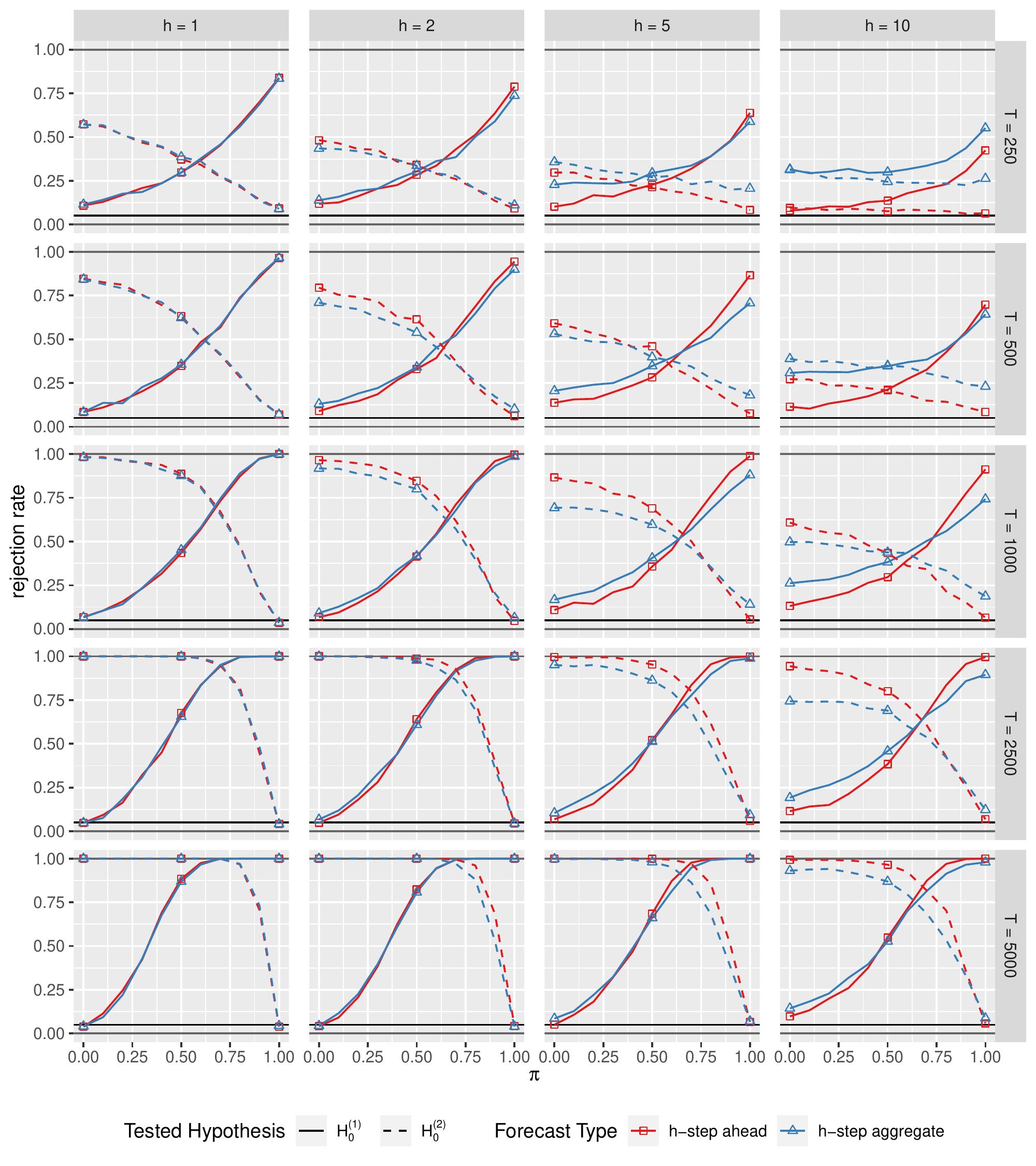}
	\caption{This figure shows raw power curves (empirical rejection frequencies) for the joint VaR and ES encompassing test with a nominal size of $5\%$, for $h$-step ahead and $h$-step aggregated forecasts indicated with different colors, and for the two tested null hypotheses indicated with different line types.
		The plot rows depict different sample sizes, while the plot columns refer to different forecast horizons $h$.
		An ideal test exhibits a rejection frequency of $5\%$ for $\pi=0$ and for $\mathbb{H}_0^{(1)}$ (and inversely for $\pi=1$ and $\mathbb{H}_0^{(2)}$) and as sharply increasing rejection rates as possible for increasing (decreasing) values of $\pi$. 
		Note that we use a Bernoulli draw based combination method in this section as opposed to the variance combination in Section \ref{sec:onestep_forecasts} and hence, the results of the one-step ahead forecasts are not necessarily identical.}
	\label{fig:sim_multi_VaRES_RawPower}
\end{figure}

\begin{figure}[!ht]
	\includegraphics[width=\linewidth]{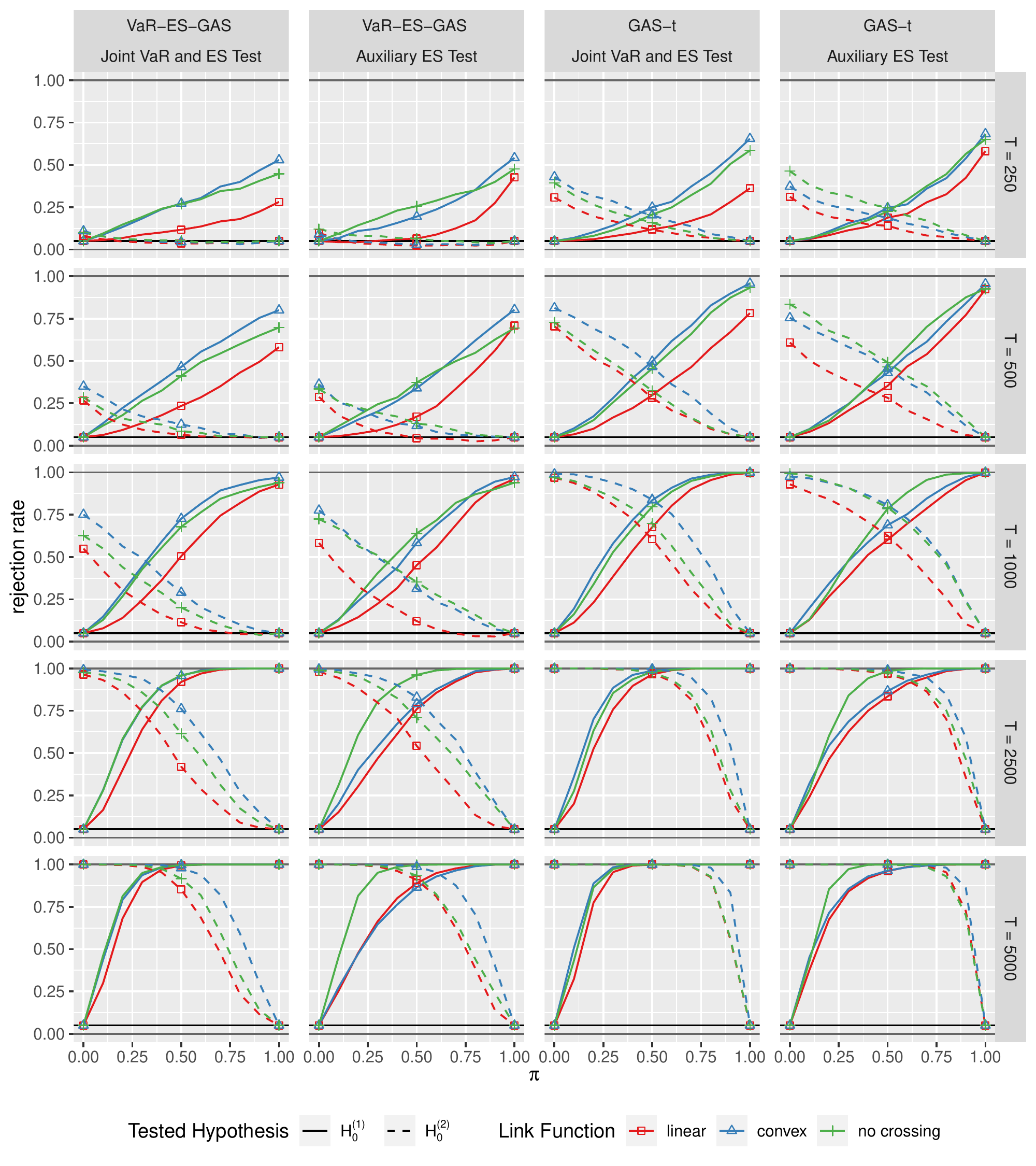}
	\caption{This figure shows size-adjusted power curves for the joint VaR and ES encompassing test and the auxiliary ES test with a nominal size of $5\%$ and for one-step ahead forecasts of the two GAS-based DGPs described in Section \ref{sec:add_simulation}.
		The plot rows depict different sample sizes, while the colors indicate the three different link functions and the line types refer to the two tested null hypotheses. The plot columns show results for the models described in \eqref{eqn:GAS1F}, \eqref{eqn:GAS2F} and \eqref{eqn:GAS_t} and for the joint and auxiliary tests.
		An ideal test exhibits a rejection frequency of $5\%$ for $\pi=0$ and for $\mathbb{H}_0^{(1)}$ (and inversely for $\pi=1$ and $\mathbb{H}_0^{(2)}$) and as sharply increasing rejection rates as possible for increasing (decreasing) values of $\pi$.}
	\label{fig:sim_other_SA}
\end{figure}

\begin{figure}[!ht]
	\includegraphics[width=\linewidth]{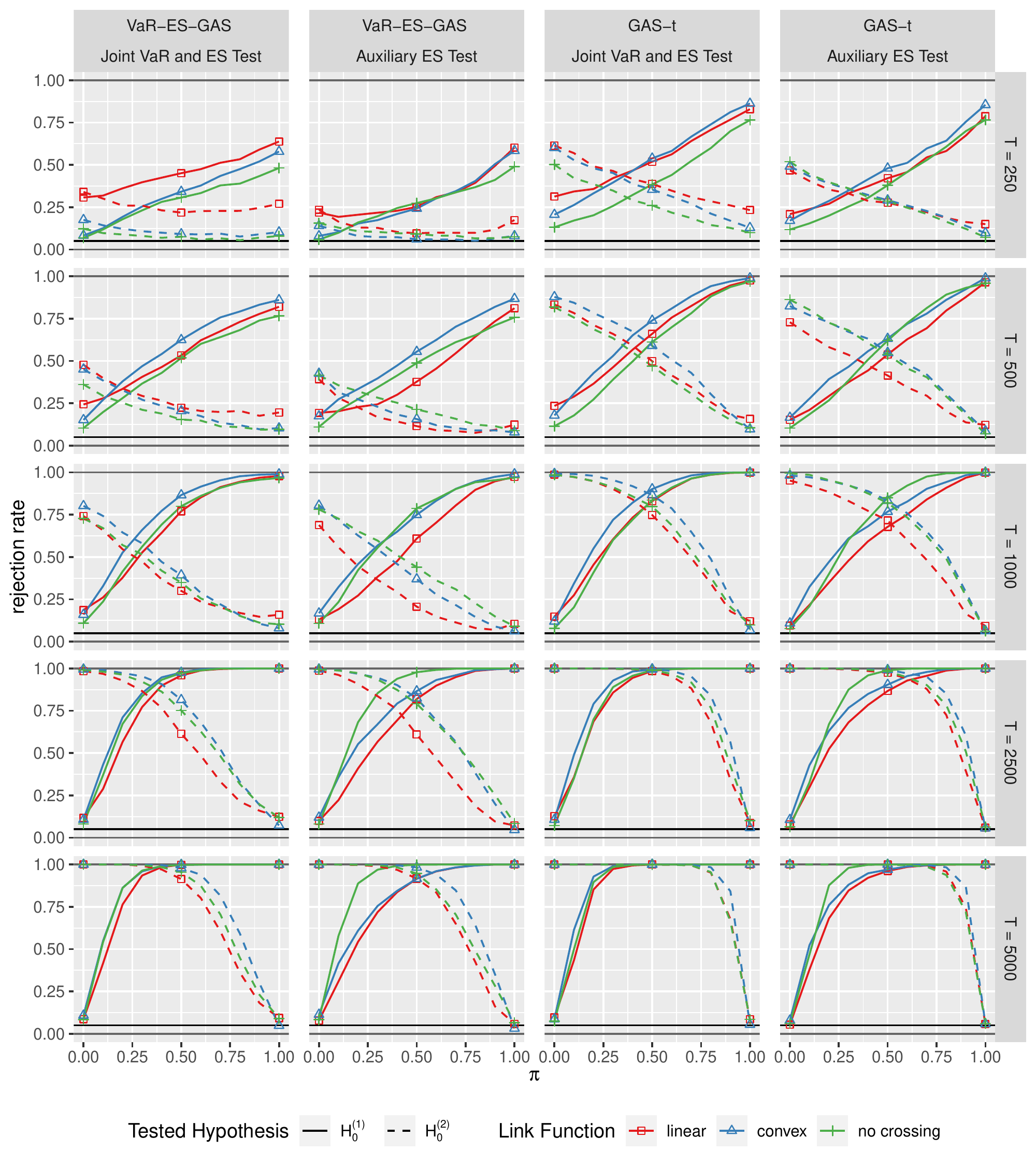}
	\caption{This figure shows raw power curves (empirical rejection frequencies) for the joint VaR and ES encompassing test and the auxiliary ES test with a nominal size of $5\%$ and for one-step ahead forecasts of the two GAS-based DGPs described in Section \ref{sec:add_simulation}.
		The plot rows depict different sample sizes, while the colors indicate the three different link functions and the line types refer to the two tested null hypotheses. The plot columns show results for the models described in \eqref{eqn:GAS1F}, \eqref{eqn:GAS2F} and \eqref{eqn:GAS_t} and for the joint and auxiliary tests.
		An ideal test exhibits a rejection frequency of $5\%$ for $\pi=0$ and for $\mathbb{H}_0^{(1)}$ (and inversely for $\pi=1$ and $\mathbb{H}_0^{(2)}$) and as sharply increasing rejection rates as possible for increasing (decreasing) values of $\pi$.}
	\label{fig:sim_other}
\end{figure}

\begin{figure}[htb]
	\includegraphics[width=\linewidth]{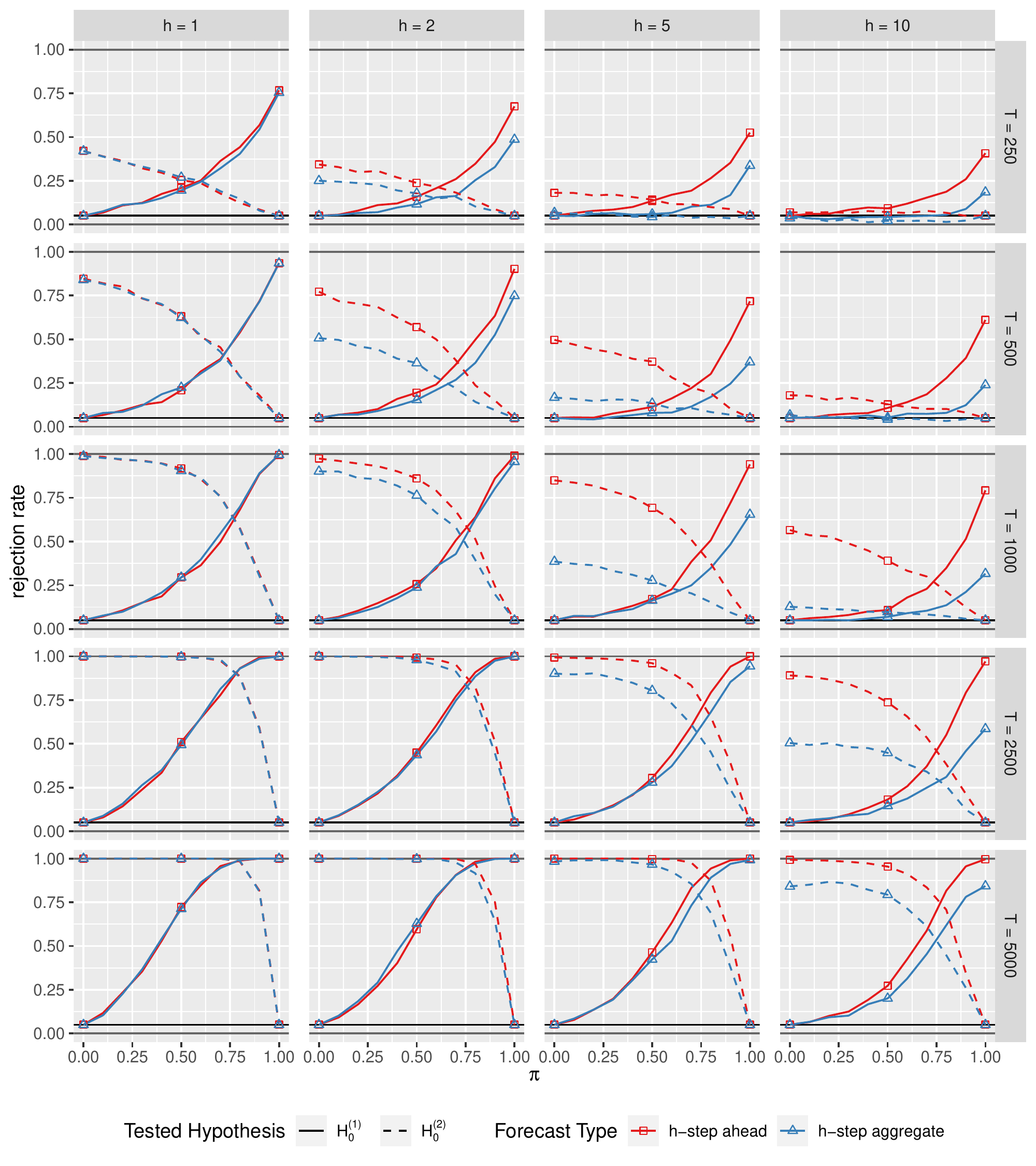}
	\caption{This figure shows size-adjusted power curves for the auxiliary ES encompassing test with a nominal size of $5\%$ for $h$-step ahead and $h$-step aggregate forecasts stemming from the GARCH process specifications in (\ref{eqn:GARCHModel}) - (\ref{eqn:GARCHresiduals}).
		The $h$-step ahead and aggregate forecasts are indicated by different colors and the two tested null hypotheses are indicated with different line types. 
		The plot rows depict different sample sizes, while the plot columns refer to different forecast horizons $h=1,2,5,10$.
		An ideal test exhibits a rejection frequency of $5\%$ for $\pi=0$ and for $\mathbb{H}_0^{(1)}$ (and inversely for $\pi=1$ and $\mathbb{H}_0^{(2)}$) and as sharply increasing rejection rates as possible for increasing (decreasing) values of $\pi$.} 
	\label{fig:sim_multi_AuxES_SA}
\end{figure}

\begin{figure}[htb]
	\includegraphics[width=\linewidth]{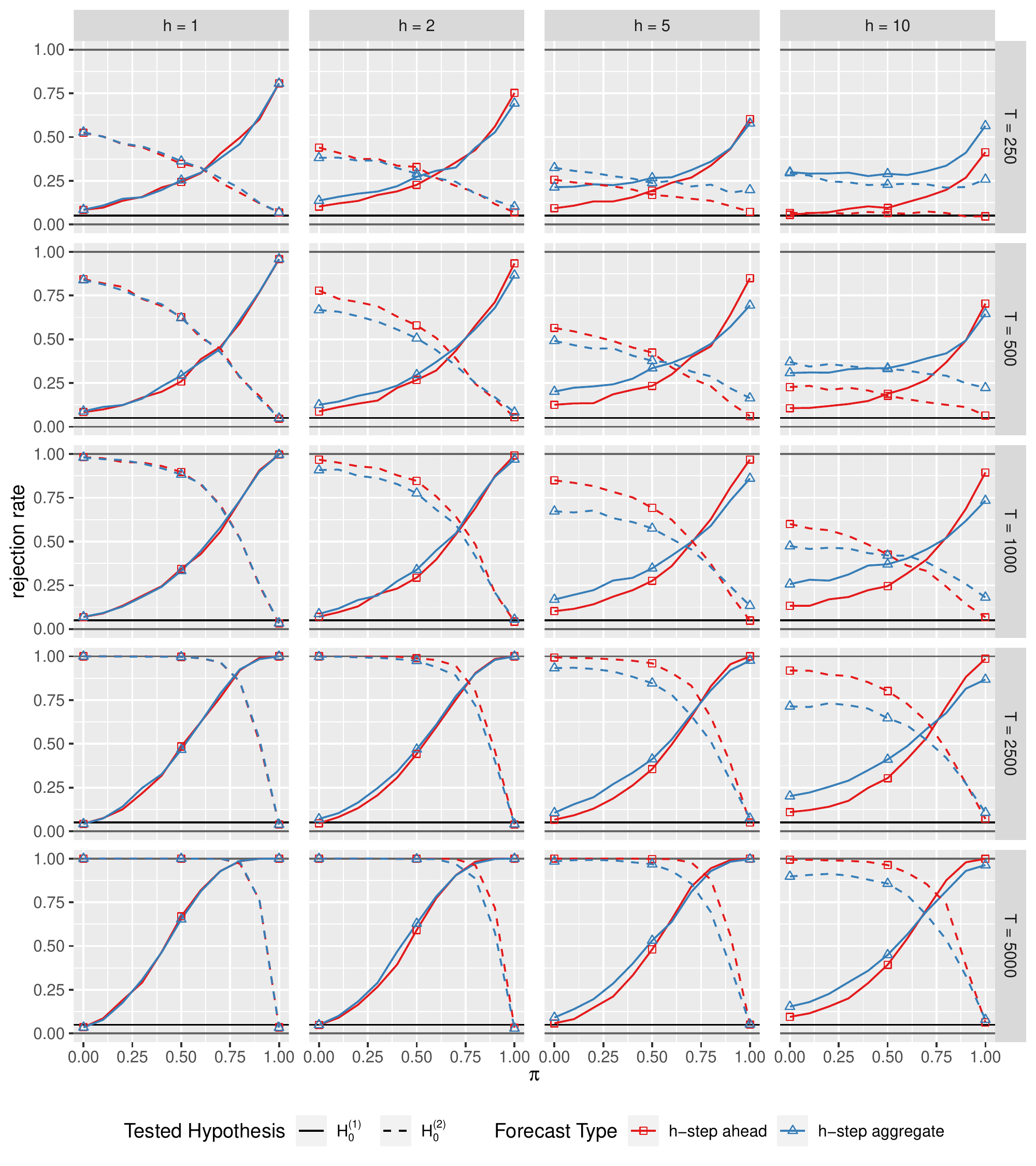}
	\caption{This figure shows raw power curves (empirical rejection frequencies) for the auxiliary ES encompassing test with a nominal size of $5\%$ for $h$-step ahead and $h$-step aggregate forecasts stemming from the GARCH process specifications in (\ref{eqn:GARCHModel}) - (\ref{eqn:GARCHresiduals}).
		The $h$-step ahead and aggregate forecasts are indicated by different colors and the two tested null hypotheses are indicated with different line types. 
		The plot rows depict different sample sizes, while the plot columns refer to different forecast horizons $h=1,2,5,10$.
		An ideal test exhibits a rejection frequency of $5\%$ for $\pi=0$ and for $\mathbb{H}_0^{(1)}$ (and inversely for $\pi=1$ and $\mathbb{H}_0^{(2)}$) and as sharply increasing rejection rates as possible for increasing (decreasing) values of $\pi$.} 
	\label{fig:sim_multi_AuxES}
\end{figure}

\begin{figure}[htb]
	\includegraphics[width=\linewidth]{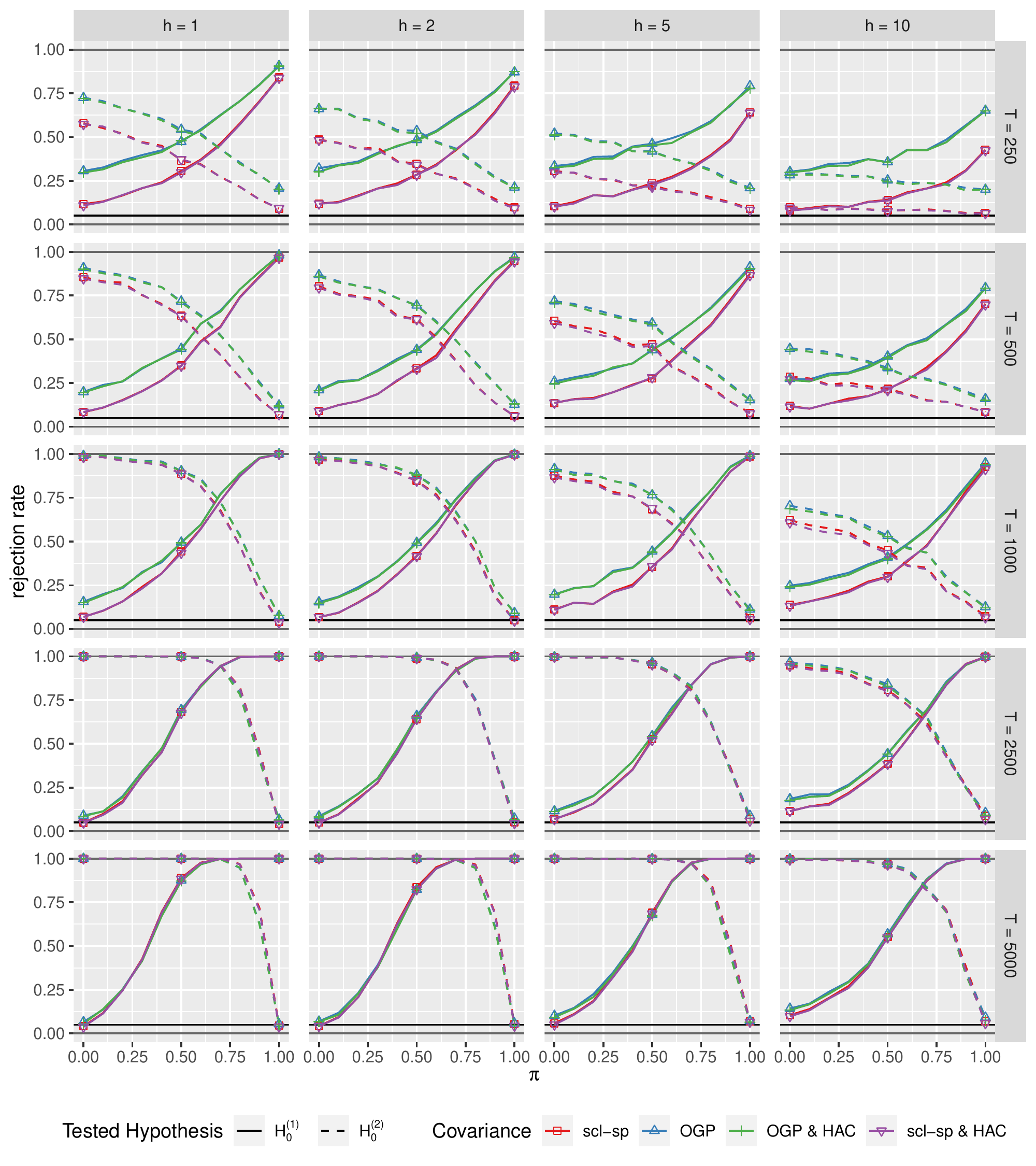}
	\caption{This figure shows raw power curves (empirical rejection frequencies) for the joint VaR and ES encompassing test with a nominal size of $5\%$ for $h$-step ahead forecasts stemming from the GARCH process specifications in (\ref{eqn:GARCHModel}) - (\ref{eqn:GARCHresiduals}).
		The plot rows depict different sample sizes, the plot columns show the different forecast horizons $h$, the colors indicate the different covariance estimators, and the line types refer to the two tested null hypotheses. 
		An ideal test exhibits a rejection frequency of $5\%$ for $\pi=0$ and for $\mathbb{H}_0^{(1)}$ (and inversely for $\pi=1$ and $\mathbb{H}_0^{(2)}$) and as sharply increasing rejection rates as possible for increasing (decreasing) values of $\pi$.}
	\label{fig:sim_multi_VaRES_iter}
\end{figure}

\begin{figure}[htb]
	\includegraphics[width=\linewidth]{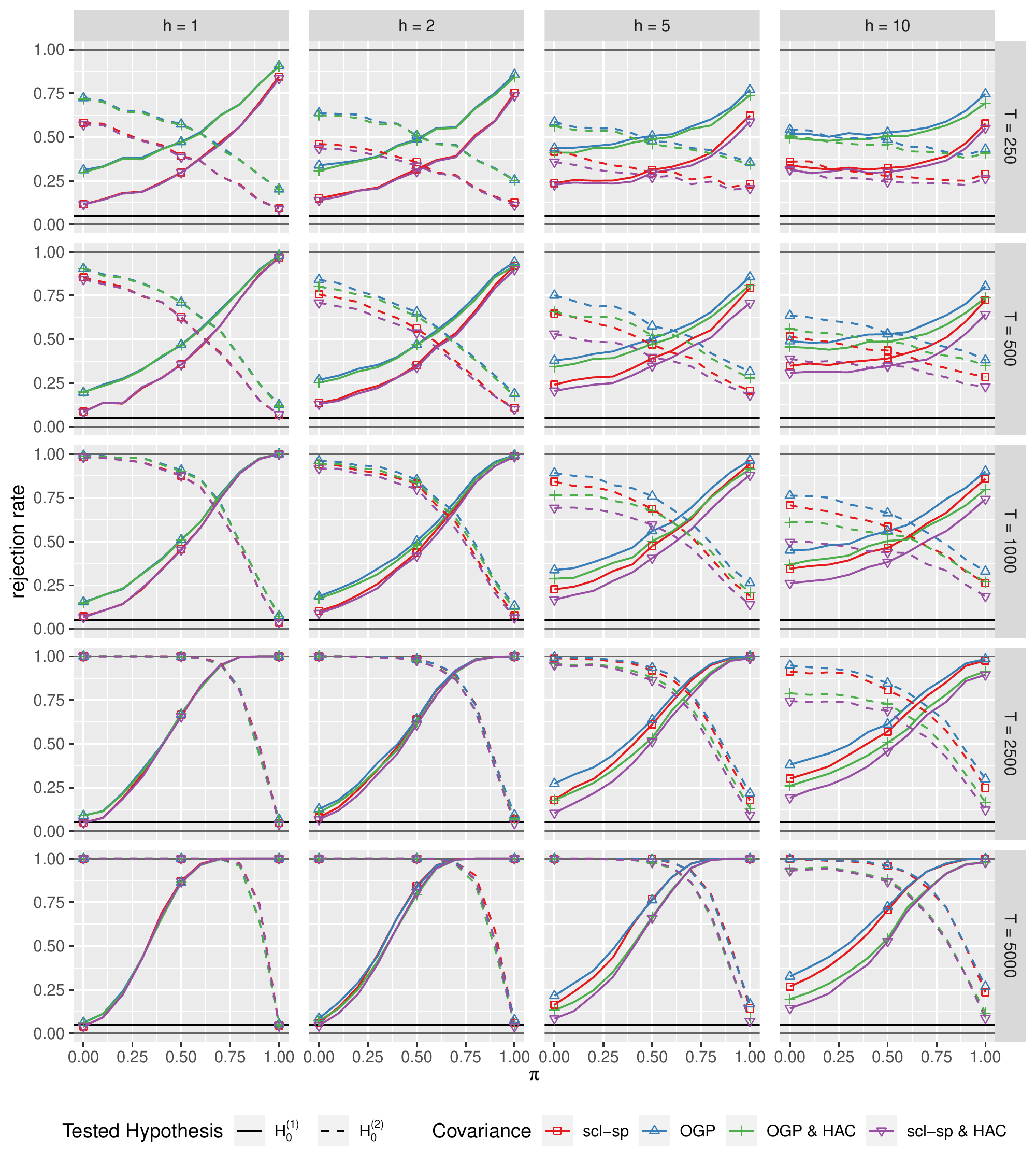}
	\caption{This figure shows raw power curves (empirical rejection frequencies) for the joint VaR and ES encompassing test with a nominal size of $5\%$ for $h$-step aggregate forecasts stemming from the GARCH process specifications in (\ref{eqn:GARCHModel}) - (\ref{eqn:GARCHresiduals}).
		The plot rows depict different sample sizes, the plot columns show the different forecast horizons $h$, the colors indicate the different covariance estimators, and the line types refer to the two tested null hypotheses. 
		An ideal test exhibits a rejection frequency of $5\%$ for $\pi=0$ and for $\mathbb{H}_0^{(1)}$ (and inversely for $\pi=1$ and $\mathbb{H}_0^{(2)}$) and as sharply increasing rejection rates as possible for increasing (decreasing) values of $\pi$.}
	\label{fig:sim_multi_VaRES_aggr}	
\end{figure}

\end{document}